\newcommand\Tr{\mathrm{Tr}\,}
\newcommand\STr{\mathrm{STr}\,}
\newcommand\CC{\mathrm{CC}}
\newcommand\CH{\mathrm{CH}}
\newcommand\HH{\mathrm{HH}}
\newcommand\HC{\mathrm{HC}}
\newcommand\BH{\mathrm{BH}}
\newcommand\BC{\mathrm{BC}}
\newcommand\bC{\mathbb{C}}
\newcommand\bZ{\mathbb{Z}}
\newcommand\cT{\mathcal{T}}
\newcommand\cB{\mathcal{B}}
\newcommand\cC{\mathcal{C}}
\newcommand\cD{\mathcal{D}}
\newcommand\cA{\mathcal{A}}
\newcommand\cO{\mathcal{O}}
\newcommand\cM{\mathcal{M}}
\newcommand\cL{\mathcal{L}}
\newcommand\cI{\mathcal{I}}
\newcommand\fo{\mathfrak{o}}
\newcommand\fc{\mathfrak{c}}
\newcommand\fh{\mathfrak{h}}
\newcommand\fp{\mathfrak{p}}
\newcommand\fL{\mathfrak{L}}
\newcommand\fP{\mathfrak{P}}
\newcommand\fD{\mathfrak{D}}
\newcommand\fM{\mathfrak{M}}
\newcommand\Sym{\text{Sym}}
\newcommand{\BV}{\mathrm{BV}}
\newcommand{\cl}{\mathrm{cl}}
\newcommand{\Hom}{\mathrm{Hom}}
\newcommand{\End}{\mathrm{End}}
\newtheorem{theorem}{Theorem}
\newcommand\wt{\widetilde}
\DeclareMathAlphabet\EuScript{U}{eus}{m}{n}
\SetMathAlphabet\EuScript{bold}{U}{eus}{b}{n}
\theoremstyle{definition}
\title{Categorical 't Hooft expansion and chiral algebras}
\abstract{Twisted holography captures protected aspects of well-known holographic dualities. We show how an holographic dual B-model background can be systematically derived from the 't Hooft expansion of the chiral algebras associated to four-dimensional ${\cal N}=2$ superconformal quiver gauge theories. A crucial tool is the match of planar BRST anomalies in the field theory and on the  worldsheet, especially in the presence of probe D-branes. Our construction is very general and can be applied to chiral algebras which do not have a four-dimensional origin. The resulting holographic dual backgrounds are typically non-geometric and appear to be novel. We expect our strategy to have a wide range of applications to other examples of twisted holography and, potentially, weak coupling holography.}
\author[1]{Davide Gaiotto,}
\author[1,2]{Adri\'an L\'opez-Raven,}
\author[1,2]{Hanne Silverans,}
\author[1,3]{Keyou Zeng}
\affiliation[1]{Perimeter Institute for Theoretical Physics, Waterloo, ON N2L 2Y5, Canada}
\affiliation[2]{Department of Physics \& Astronomy, University of Waterloo, Waterloo, ON N2L 3G1,
Canada}
\affiliation[3]{Center of Mathematical Sciences and Applications, Harvard University, Massachusetts 02138, USA}
\newcommand{\hs}[1]{\textcolor{Plum}{\textbf{Hanne}: #1}}
\begin{document}
\maketitle

\newpage
\section*{Notation}
\vspace{5pt}
\begin{center}
\begingroup
\renewcommand{\arraystretch}{1.4}
\begin{tabular}{ll}
$\lambda$ & 't Hooft coupling \\

$\text{CC}^{\bullet}(A) $ & Connes' complex for the cyclic cohomology of $A$ \\

$\text{HC}^{\bullet}(A)$ & cyclic cohomology of $A$ \\

$\text{CH}^{\bullet}(A,M)$ & Hochschild cochain complex of $A$ valued in $M$ \\

$\text{HH}^{\bullet}(A,M)$ & Hochschild cohomology of $A$ valued in $M$ \\

$\text{HH}^{\bullet}_{\lambda}(A),\text{HC}^{\bullet}_{\lambda}(A)$ & deformed Hochschild and cyclic cohomology \\

$Q$ or $Q_{\mathrm{BRST}}$ & full BRST differential\\

$Q_0$ & tree-level part of the BRST differential \\
 
$\hbar Q_1$ & 1-loop part of the BRST differential \\

$\lambda Q_1^l$ & linear (in $\lambda$) part of the 1-loop BRST differential \\

$\mathcal{A}_{a,b},\mathcal{B}_{a,b},\mathcal{C}_{a,b},\mathcal{D}_{a,b}$ & four towers of single trace operators\\

$\mathfrak{L}_{\lambda}$ & global symmetry algebra of single-trace operators \\

$\mathfrak{P}_{\lambda}$  & global symmetry algebra of mesonic operators \\

$\mathfrak{D}_{\lambda}$ & global symmetry algebra of determinant operators \\

$\mathfrak{M}_{\lambda}$ & bi-module associated to open determinant modifications \\

$\mathfrak{L}_{0},\mathfrak{P}_{0},\mathfrak{D}_{0},\mathfrak{M}_{0}$ & tree-level limit ($\lambda \to 0$) of $\mathfrak{L}_{\lambda},\mathfrak{P}_{\lambda},\mathfrak{D}_{\lambda},\mathfrak{M}_{\lambda}$ \\

$\text{BH}^{\bullet,\bullet}(V,Q)$ & planar global symmetry algebra \\

$\text{PMod}_V$ & category of planar $V$-modules\\

$\otimes_A$ & will denote the derived tensor product $\otimes^\mathbb{L}_A$
\end{tabular}
\endgroup
\end{center}

\newpage
\section{Introduction}
Certain families of gauge theories with classical gauge groups admit a 't Hooft expansion~\cite{tHOOFT1974}: a reorganization of the perturbative expansion where the rank $N$ of the gauge group is treated as being of order $\hbar^{-1}$. The resulting expansion is in many way analogous to the genus expansion of a String Theory. 
In particular, it naturally includes the analogue of D-branes and open string sectors associated to pairs of D-branes. 

The 't Hooft expansion is key to the holographic dictionary whenever the gravitational side of the duality involves String Theory \cite{maldacena1999large, Witten:1998qj}. Standard weakly-curved ten-dimensional String Theories emerge at large values of the 't Hooft coupling $\lambda \equiv \hbar N$. The duality is still expected to hold at small $\lambda$, but the dual String Theory background is strongly curved or perhaps non-geometric, i.e. described by a world-sheet theory which is not a sigma-model \cite{Haggi-Mani:2000dxu,Dhar:2003fi,Clark:2003wk,Karch:2002vn,Gopakumar:2003ns,Gopakumar:2003ns,Aharony:2006th,Yaakov:2006ce,Aharony:2007fs,Berkovits:2008qc,Berkovits:2019ulm,Gaberdiel:2021jrv}. 

This ``weak coupling'' regime in holography is of great interest but poorly understood. More generally, our inability to define non-geometric String Theory backgrounds hampers many potential applications of the 't Hooft expansion, such as the formulation of a String Theory dual to $SU(N)$ Yang-Mills theory.\footnote{Note that non-geometric backgrounds pose two challenges. The obvious one is to define a world-sheet theory for the String Theory. A more subtle one is to address IR divergences of String Theory without the guidance of a low energy effective QFT description in the target space.}

A priori, it is not known if a generic quantum field theory which admits a 't Hooft expansion should always admit a String Theory dual description, in the sense of a specific world-sheet theory whose genus expansion matches the t'Hooft expansion of the QFT, with boundary conditions matching all possible D-brane-like objects in the QFT. 

We would like to conjecture that this is indeed the case, and furthermore that there is a systematic way to translate the QFT data into the definition of a worldsheet theory. This conjecture is certainly implicit in much work on String Theory and holography, but we think it deserves an explicit formulation. We will refer to it as the conjecture that String Theory is ``'t Hooft complete''.\footnote{One could perhaps distinguish a strong and weak forms of the conjecture, requiring the $\lambda$ expansion to always converge or allowing perturbative-in-$\lambda$ constructions.}

The notion of 't Hooft completeness poses a conceptual challenge: it requires the existence of world-sheet theories which can systematically reproduce the infinite variety of Feynman diagram expansions which may occur in large $N$ QFTs. There have been several attempts to 
do so for specific theories \cite{Gopakumar:2003ns,Itzhaki:2004te,Razamat:2008zr,Gopakumar:2011ev,Gopakumar:2004qb,Gopakumar:2005fx}, sometimes with partial success, but no general prescription is known. 

Typically, the constructions rely on a conformal gauge perspective: the worldsheet theory is described as the BRST reduction of a 2d CFT coupled to a ghost system. We suspect that this assumption may be problematic. At the very least, we find it hard to imagine how the combinatorial data of the Feynman diagrams could be universally reorganized into the data of 2d CFTs.

Strictly speaking, the standard String Theory formalism only requires the world-sheet theory to be a ``dg-TQFT'', i.e. a quantum field theory whose stress-tensor is BRST-exact (sometimes denoted as CohFT) \cite{Witten:1990bs}. Via descent relations, such a dg-TQFT can be used to define integrands for a consistent collection of String Theory amplitudes.\footnote{Assuming that a prescription can be found to deal with ``IR'' divergences from the boundaries of the integration region.} 

Unitary, non-cohomological TQFTs such as 3d Chern-Simons theory often admit alternative algebraic/categorical definitions which dispense with local degrees of freedom. For conciseness, we will refer to such definitions as ``TFTs''. The algebraic and categorical structures which can occur in a dg-TQFTs are more intricate and mathematically very rich, but a ``dg-TFT'' description or definition is still possible using tools from Homological Algebra \cite{Kontsevich:1994qz,costello2007topological,2005math......9264C,2006math......1130C,2006math......5647C}.  

A general theme in TFT is that a theory which admits a topological boundary condition can be (re)constructed from data associated to the boundary condition alone. For example, a 3d Turaev-Viro TFT \cite{Turaev2016,TuraevViro1992} can be presented by giving a fusion category of boundary lines. 

Crucially, 2d dg-TFT can be associated to an ($A_\infty$) category of dg-topological boundary conditions, aka D-branes \cite{costello2007topological}. For example, the B-model \cite{witten1988topological,witten1991mirror} with a target space $X$ is associated to the derived category of coherent sheaves on $X$ \cite{kontsevich1995homological}. More general dg-categories have been proposed as descriptions of ``non-commutative'' target spaces or, more precisely, non-geometric 2d dg-TQFTs \cite{Ginzburg:2006fu, Kontsevich2008NotesOA,kontsevich2021pre}. 

This suggests a general strategy: use the 't Hooft expansion of the QFT to build the category of D-branes for the conjectural dual String Theory and use that category to define or at least constrain the corresponding worldsheet theory as a dg-TFT. Such a construction may potentially lead to a universal dictionary for weak-coupling holography. 

As a cautionary note, we should recall that IR divergences plague the integrals which define actual String Theory amplitudes.  In Topological String Theory, it is not uncommon for important contributions to be pushed to the boundary of the integration region \cite{Bershadsky:1993cx,bershadsky1993holomorphic}. A universal combinatorial description of the world-sheet correlation functions and integrands is thus not enough: we also need a universal combinatorial treatment of the IR divergences. In this paper we will focus on the planar ($\hbar \to 0$) limit of the 't Hooft expansion and on the construction of a classical String Theory dual theory, which allows us to mostly ignore this aspect. 

One of the main objectives of this paper is to test 't Hooft-completeness of String Theory in the context of two-dimensional chiral gauge theories, generalizing examples which arise as protected sub-sectors of four-dimensional ${\cal N}=2$ Superconformal Gauge Theories \cite{Beem:2013sza}. 

In particular, we will learn how to build categories of D-branes from the 't Hooft expansion data, which in known twisted holography examples \cite{Costello:2018zrm} match the coherent sheaves in the dual B-model geometries. For general 2d chiral gauge theories, the resulting categories {\it define} novel dual non-commutative three-dimensional Calabi-Yau geometries. 

Schematically, we show that any 2d chiral gauge theory which admits a 't Hooft expansion can always be associated to a ``two-dimensional non-commutative (nc) Calabi Yau cone'' $X_2$. The cone can be promoted to a 3d nc Calabi-Yau geometry in the form of a fibration 
\begin{equation}
    X_3(0) \equiv X_2(-1) \to \mathbb{C}P^1\,.
\end{equation}
A geometric transition gives a family of 3d nc Calabi-Yau geometries $X_3(\lambda)$ depending on the 't Hooft coupling $\lambda$. Intuitively, $X_2$ is a cone over a non-geometric ``space'' $X_2/\mathbb{R}$ akin to $S^3$, and $X_3(\lambda)$ is schematically $AdS_3 \times X_2/\mathbb{R}$. %\dg{There may be a math reference for such nc geonetric transitions} \kz{This paper \cite{Katz:2022lyl} seems close, but not exactly the situation we considered.} 

We claim that the 't Hooft expansion of the 2d chiral gauge theory matches Topological String Theory with target $X_3(\lambda)$. The nc-CY geometries are encoded in 3d Calabi-Yau dg-categories with a direct QFT definition. 

\begin{figure}[h]
    \centering
\includegraphics[width=0.7\linewidth]{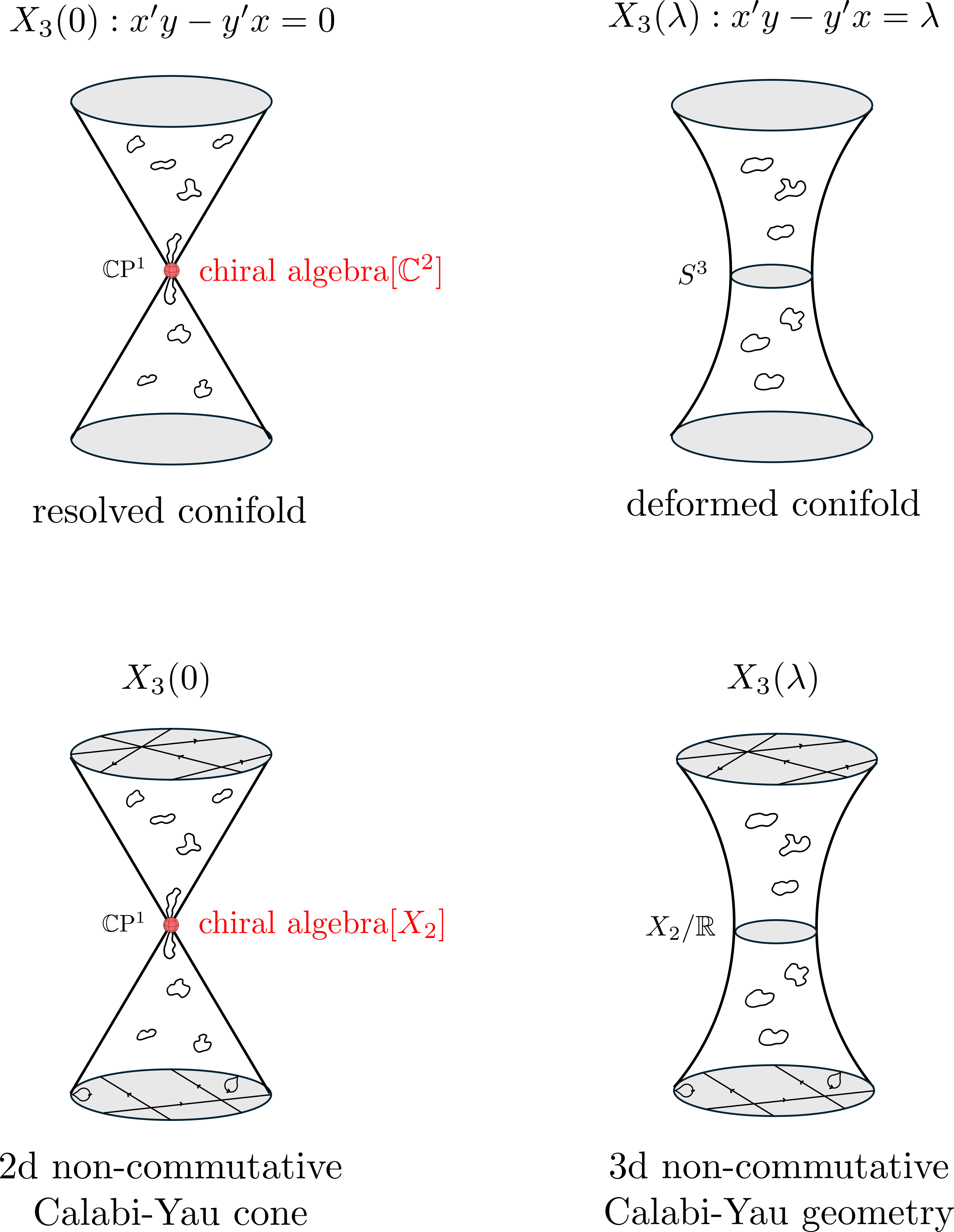}
    \caption{Top row: canonical twisted holography example, connecting a 2d gauged $\beta \gamma$ system supported on a stack of D-branes and B-model topological strings on $SL(2,\bC)$ \cite{Costello:2018zrm}. Bottom row: generalization studied in this paper, connecting general 2d chiral algebras supported on branes in a 2d nc-CY cone target space and topological strings on a 3d nc-CY geometry.}
    \label{fig:enter-label}
\end{figure}

\subsection{Homological Algebra and Beyond}

A general principle of String Theory is that the classical equations of motion of the theory controls the deformations of the world-sheet theory (for closed strings) and of its boundary conditions (for open strings). 
In particular, the closed string fields are identified as worldsheet couplings and open string fields as (possibly matrix-valued) boundary couplings. See \cite{Mazel:2024alu} for a recent related discussion.

In the BRST/BV formalism for the world-sheet theory, the string-theory equations of motion can be concisely formulated as the cancellation of the BRST anomalies created by formal deformations of the theory \cite{Zwiebach:1992ie} or boundary conditions \cite{WITTEN1986253,Witten:1992fb}. This formulation naturally involves the language of Homological Algebra. In particular, the BRST anomalies which arise from matrix-valued deformations of boundary conditions are encoded in an $A_\infty$ category of boundary conditions \cite{Gaiotto:2024gii}.

We will use BRST anomalies as a guiding principle to analyze the 't Hooft expansion of the QFT. We can consider a deformation of the QFT by single-trace operators and of D-brane-like objects by mesonic operators. At the planar level, the cancellation of BRST anomalies produced by the deformations can be interpreted as the equations of motions of the tentative dual String Theory. In particular, this gives us a candidate $A_\infty$ category of boundary conditions for the dual world-sheet theory.

We can describe briefly the ``planar tree-level'' $\lambda \to 0$ limit of the general analysis. A simple example is the computation of the space of gauge-invariant single trace local operators in the QFT. Concretely, a single-trace operator is some linear combination of terms of the form 
\begin{equation}
   \frac{1}{\hbar} \Tr \phi^{i_1} \cdots \phi^{i_n}\,,
\end{equation}
where $\phi^i$ denotes a collection of ``letters'' which can enter in the trace: fields in two-index representations of the large $N$ gauge group and their derivatives.  

The planar BRST differential acts at the leading order in $\lambda$ by replacing a letter by a sequence of letters:
\begin{equation}
    Q: \qquad \phi^i \to Q^i_j \phi^j + Q^i_{j_1 j_2} \phi^{j_1} \phi^{j_2} + \cdots
\end{equation}
Essentially by definition, a nilpotent transformation of this kind equips a  dual space $V$ of symbols $v_i$ with the structure of an $A_\infty$ algebra.

If the QFT arises as the world-volume theory of some D-branes $B$ in a formal $\lambda \to 0$ limit, $V$ controls boundary local operators on $B$ and the BRST anomalies of matrix-valued deformations \begin{equation}
    \int_\partial \phi^i O^\partial_i \, ,
\end{equation}
of the direct sum of multiple ($N$) copies of $B$ by boundary local operators $O^\partial_i$. 

In this language, the BRST cohomology of the space of single-trace operators is recognized as the {\it cyclic cohomology} $\HC^{\bullet}(V)$ of $V$. This mathematical structure occurs naturally in the study of dg-TFTs, precisely as a tool to probe the coupling of closed string states to a dg-topological boundary condition via a disk worldsheet \cite{Kapustin:2004df}. As a consequence, the $\lambda \to 0$ limit of the holographic dictionary between single-trace operators and closed string states in a dg-TFT description is universal and essentially tautological. 

Symmetries of the large $N$ QFT can also be described at the leading planar level as 
transformations 
\begin{equation}
    L: \qquad \phi^i \to L^i_j \phi^j + L^i_{j_1 j_2} \phi^{j_1} \phi^{j_2} + \cdots
\end{equation}
which commute with $Q$. Parsing through the definition, one can make contact with the {\it Hochschild cohomology} $\HH^{\bullet}(V)$, which also describes the symmetries of the world-sheet theory in a dg-TFT language. The $\lambda \to 0$ limit of the holographic dictionary for symmetries (and their action on operators) is thus also universal and tautological. 

A simple and powerful slogan is that any gauge theory with a single-trace action can be interpreted classically as the world-volume theory of $N$ D-branes $B$ in a formal String Theory background with a dg-TFT description for the world-sheet theory.

Beyond the leading order in the planar expansion, both the action of $Q$ and the action of symmetries on single-trace operators are deformed by terms which act on multiple consecutive letters at the time. These terms do not fit into a standard Homological Algebra dictionary.

Nevertheless, the planar corrections can be systematically organized in the form of a deformation $\HC_\lambda^{\bullet}(V)$ and $\HH_\lambda^{\bullet}(V)$ of the complexes defining $\HC^{\bullet}(V)$ and $\HH^{\bullet}(V)$, leading to a formal definition of the space of closed string states and of symmetries of a $\lambda$-dependent dg-TFT. 

In order to produce a standard dg-TFT description of the deformed world-sheet theory, we need to express $\HC_\lambda^{\bullet}(V)$ and $\HH_\lambda^{\bullet}(V)$ as standard cyclic and Hochschild cohomology for some other algebraic object. We will do so with the help of D-branes. 

\subsection{A fundamental enhancement}
Any large $N$ QFT can be modified in many different ways by adding some collection of fields transforming in vector representations of the large $N$ gauge group. We will consider many types of D-brane probes, some arising from fundamental fields defined on the whole QFT space-time and some supported at defects.  All of these modifications are expected to lead to the same closed String Theory dual, modified by some appropriate collection of probe D-branes.

In the $\lambda \to 0$ limit, certain calculations organize themselves in a natural way as dg-TFT data. The first example is the computation of gauge-invariant linear combinations of mesonic operators:
\begin{equation}\label{eq:mes_op}
   \alpha^a \phi^{i_1} \cdots \phi^{i_n} \beta^b
\end{equation}
where the $\alpha^a$ and $\beta^b$ letters denote fields with a single gauge index and their derivatives. 

The planar BRST differential acts at the leading order in $\lambda$ by replacing a letter by a sequence of letters. The action on $\phi^i$ is unchanged, but we have
\begin{align}\label{eq:diff_mes}
    &Q: \qquad \alpha^a \to Q^a_{a'} \alpha^{a'} + Q^a_{a' j} \alpha^{a'} \phi^{j} + \cdots \cr
    &Q: \qquad \beta^b \to Q^b_{b'} \beta^{b'}  + Q^b_{j b'} \phi^{j} \beta^{b'}  + \cdots
\end{align}
Essentially by definition, a nilpotent transformation of this kind equips dual spaces $U$ and $W$ with the structure of a right- or a left- $A_\infty$ module for $V$. 

This is the same mathematical structure as would arise in a dg-TFT to describe the interaction of $B$ with a probe brane $P_0$: $U = \mathrm{Hom}(B,P_0)$ describes junctions from $B$ to $P_0$ and $W = \mathrm{Hom}(P_0,B)$ describes junctions from $P_0$ to $B$. The space of mesonic operators turns out to be dual to the derived tensor product $U \otimes_V W$. We can check immediately that the space of operators \ref{eq:mes_op}, subject to the BRST differential \ref{eq:diff_mes}, reproduces the standard bar construction of the derived tensor product $U \otimes_V W$.

As a consequence, the $\lambda \to 0$ limit of the holographic dictionary between mesonic operators and open string states in a dg-TFT description is universal and tautological. Symmetries of open strings are also nicely reproduced as endomorphisms of $U$ and $W$ as $A_\infty$ $V$-modules.

As we vary and combine different choices of D-brane-like modifications, 
the resulting data can be assembled into a dg-category of $V$-modules which describes the $\lambda \to 0$ limit of the category of branes of the conjectural dual String Theory. In other words, any consistent recipe to add (anti)fundamental fields via mesonic terms in the gauge-theory action can be classically understood as adding formal probe branes in the formal dg-TFT description. 

Beyond the leading order, the planar corrections to these calculations can be systematically organized as a deformation $\mathrm{Mod}_\lambda(V)$ of the category of $V$-modules, leading to a formal definition of the category of branes for a $\lambda$-dependent, back-reacted dual world-sheet theory. In particular, one finds that $\HH_\lambda^{\bullet}(V)$ acts as a symmetry of $\mathrm{Mod}_\lambda(V)$ and $\HC_\lambda^{\bullet}(V)$ couples to it. 

This category gives the desired dg-TFT description of the holographic dual String Theory, at least classically. Once the planar expansion has been organized in this categorical language, we arrive at a weakly-coupled holographic statement which involves two mathematically well-defined entities: the planar expansion of the QFT and the classical String Theory described by the back-reacted category $\mathrm{Mod}_\lambda(V)$ of probe D-branes. 

We expect that the identification between the two sides 
may be formulated as a rigorous theorem of broad applicability. 
Once the classical holographic statement has been proven, 
one may tackle the greater challenge of matching the full $\hbar$ expansion. We will only briefly discuss the structure of that problem.

\subsection{Summary of main results}

For families of two dimensional chiral gauge theories in the planar limit, we construct the following data of the putative worldsheet dual:
\begin{itemize}
\item We express the data of a 2d chiral gauge theory in terms of a 2d-cyclic, finite-dimensional, graded associative super-algebra $A$. This formally defines a world-sheet theory with target $X_2$
such that the 2d theory is the world-volume theory of $N$ D-branes supported on $\bC \subset X_2 \times \bC$.
\item We compute the tree-level cohomology of single-trace operators, as well as the ``wedge'' subalgebra of global symmetries on the sphere, in terms of the Hochschild and cyclic cohomology of $A$. This establishes the classical couplings between the original D-branes and closed strings.
\item Using 2d chiral fundamental matter as a probe, we compute the tree-level cohomology of mesons and associated global symmetries in homological algebra terms and describe their planar deformation. We compute the deformation explicitly for a novel non-geometric example of Twisted Holography. These computations give a formal world-sheet description for space-filling probe D-branes in the back-reacted geometry $X_3(\lambda)$. We expect the category of D-branes in $X_3(\lambda)$ to admit a description as a category of modules for the global symmetry algebra of space-filling D-branes.
\item Using determinant operators as a probe, we compute the tree-level cohomology of ``determinant modifications'' and describe their planar deformation. The action of the above global symmetry algebras on determinant modifications provides a description of the dual D-branes as modules. We compute the action explicitly in the non-geometric example.
\end{itemize}

% \al{Might be useful to mention what are \emph{not} results of ours, aka what we didn't do both in a wider sense of defining a worldsheet theory, and with regards to specific open problems we leave. In particular the reviewer asks in pg 3 of his report about BV actions for our worldshee, and whether we're working locally on the worldsheet. In that sense it could be useful to emphasize what we're not doing. I'll work on it later today.}

An important assumption we make about string theory is its 't Hooft completeness, introduced at the beginning of the introduction. In other words, we do not know a priori whether the 't Hooft expansion of a chiral gauge theory is fully captured by a dual string theory; rather, we take this as a working assumption and demonstrate how the algebraic/categorical data of a 2d dg-TFT can be reconstructed.

Finally, we emphasize that we do not provide the dual worldsheet theory data in the ``conventional" sense; we give no lagrangian for the dual worldsheet theory, much less a 2d CFT. Instead, we hope that by showcasing how dg TFT data captures universal features of the dual worldsheet in the form of algebraic dg-TFT data, we inspire a more general, albeit more abstract \cite{Segal1988,costello2007topological,lurie2008classification,Kontsevich2008NotesOA,kontsevich2021pre,kontsevich2023smooth}, means of defining the dual String Theory for theories with a 't Hooft expansion. 

% We emphasize that we do not present the dual worldsheet theory data in the ``conventional" sense: there is no 2d QFT, much less a CFT. Rather, our algebraic dg-TFT constructions aim at capturing universal properties of the dual string theory within a more general framework. 

% We hope these results inspire work that will define \emph{define} worldhseet theories within this fully algebraic language.

\subsection{Structure of the paper}
Section \ref{sec:example} reviews the standard example of Twisted Holography for chiral algebras, a protected subsector of ${\cal N}=4$ SYM. Section \ref{sec:homB} reviews the Homological Algebra presentation of the B-model with flat target space. Section \ref{sec:hom} reviews and extends Homological Algebra calculations in the $\lambda \to 0$ limit. Section \ref{sec:HAcalculations} introduces a general class of chiral gauge theories admitting a 't Hooft expansion and associates them to 2d Calabi-Yau algebras and dg-TFTs. Section \ref{sec:closed} introduces algebraic structures associated to single-trace operators. Section \ref{sec:general_GCA} reviews and improves the notion of Global Symmetry Algebra of a chiral algebra. Section \ref{sec:generla_flavor} adds fundamental matter and algebraic structures associated to mesonic operators, including the analogue of the algebra of holomorphic functions on $X_3[0]$. Section \ref{sec:det} studies determinant-like operators and associated D-branes in an algebraic language. Section \ref{sec:nctree} introduces the simplest non-commutative example. Section \ref{sec:back} discusses in greater detail the structure of planar corrections. Section \ref{sec:planar_symmetry_algebra}
discusses explicit planar corrections various algebraic structures, including the algebra of holomorphic functions on $X_3[\lambda]$. Section \ref{sec:ncplanar} computes planar corrections and describes $X_3[\lambda]$ in the simplest non-commutative example. Section \ref{sec:conclude} reviews our conclusions and presents various open questions.

\section{A rich example} \label{sec:example}
In this section, we will focus on the holographic duality studied in \cite{Costello:2018zrm}, relating:
\begin{itemize}
    \item The protected chiral algebra subsector of four-dimensional ${\cal N}=4$ $SU(N)$ Supersymmetric Yang Mills theory \cite{Beem:2013sza}, aka supersymmetric chiral $SU(N)$ gauge theory. 
    \item The B-model topological string theory/BCOV theory \cite{Bershadsky:1993cx,Costello:2012cy,Costello:2015xsa} with target $SL(2,\bC)$. This theory is also conjectured to be a twist of the type IIB supergravity on $AdS_5\times S^5$ \cite{Costello:2016mgj}.
\end{itemize}
We will review and extend the known holographic dictionary, including
\begin{itemize}
    \item The match of single-trace operators and closed strings states, as well as the associated ``global symmetry algebra'' $\fL_\lambda$ \cite{Costello:2018zrm}.
    \item The match of mesonic operators and open strings states associated to space-filling branes, as well as the associated ``global symmetry algebra'' $\fP_\lambda$ \cite{Costello:2018zrm}. This algebra can also be identified with the algebra of functions on the backreacted geometry, thereby reproducing the category of D-branes as the category of $\fP_\lambda$-modules.
    \item The match of planar determinant correlation functions and ``giant graviton'' D-branes \cite{Budzik_2023}. 
%    \item The structure of ``non-conformal'' vacua of the two systems.
\end{itemize}
In each case, we will illustrate how algebraic aspects of the string worldsheet theory emerge from planar calculations in the chiral algebra. A summary of the relationship will be given in Section \ref{sec:ex_conclusion}. The strategy will naturally generalize to known and novel examples of chiral algebras which admit a large $N$ expansion. We will introduce our general construction of chiral algebra based on Calabi-Yau algebra/category in Section \ref{sec:HAcalculations}. The construction of the global symmetry algebra $\fL_\lambda$ will be generalized in Section \ref{sec:general_GCA}. We will also discuss the global symmetry algebra $\fP_\lambda$ of mesonic operators in this generalized framework in Section \ref{sec:generla_flavor}, and the determinant modification in Section \ref{sec:det}.

This Section is designed to give a review of known results \cite{Costello:2018zrm,Budzik_2023} in the standard example of Twisted Holography. The presentation anticipates some of the ideas and naming conventions introduced in the rest of the paper and in particular the homological algebra framework reviewed at greater length in Section \ref{sec:closed}. At the very end of the Section, we re-derive the ``saddle equations'' controlling the large $N$ limit of determinant operators as a BRST anomaly cancellation and as the definition of a module for the global symmetry algebra of space-filling D-branes.  

\subsection{The chiral algebra}
The chiral algebra subsector of 4d SYM can be presented as a chiral 2d gauge theory, with action
\begin{equation}
    \frac{1}{\hbar} \int d^2 z \, \Tr X (\bar \partial Y+ [a_{\bar z},Y]) 
\end{equation}
involving two adjoint bosonic fields $X$, $Y$ of scaling dimension $\frac12$ and the $(0,1)$ component 
${a}_{\bar z}$  of a 2d gauge connection. The overall factor of $\hbar^{-1}$ will be helpful in setting up the 't Hooft expansion. 

The $a_{\bar z}$ connection can be gauge-fixed to $0$, at least locally. The gauge-fixing introduces a set of $bc$ ghosts which also transform in the adjoint representation of the gauge group. 

In the following we will work with $U(N)$-valued fields for notational simplicity. The difference between $SU(N)$ and $U(N)$ gauge theories is the presence of a decoupled free $U(1)$ factor in the latter. It will not affect planar calculations. Keeping track of gauge indices, the OPE of all the fields is
\begin{align}
    Y^i_j(z) X^k_t(w) \,&\sim\, \hbar \frac{\delta^i_t \delta^k_j}{z-w} \cr
    b^i_j(z) c^k_t(w) \,&\sim\, \hbar \frac{\delta^i_t \delta^k_j}{z-w} \, .
\end{align}
We will typically leave gauge indices implicit and use matrix notation for the fields. 

The BRST differential is the zero mode of the BRST current:
\begin{equation}\label{eq:cano_BRST}
   Q = \frac{1}{\hbar} \oint \frac{dz}{2 \pi i} \Tr \left(\frac12 b [c,c] + c[X,Y]\right)\,.
\end{equation}
Up to a small subtlety concerning the removal of ghost zero modes,\footnote{The subtlety is that we should take a {\it relative} BRST cohomology: $c$ should only appear through its derivatives and $U(N)$ invariance should be imposed by hand. Doing otherwise adds to the cohomology some spurious elements built from $c$ only. In correlation functions, one saturates $c$ zero modes by hand.} the local operators of the 2d gauge theory are defined as elements in the BRST cohomology $\EuScript{A}_N$ of the 2d free chiral algebra defined by the above OPE. 

The action of the BRST differential on local operators is the sum of two terms involving respectively one or two Wick contractions:
\begin{equation}
    Q = Q_0 + \hbar\, Q_1\,.
\end{equation}
We refer to these as the tree-level and 1-loop parts of the BRST differential. Notice that $Q_0^2=0$, $\{Q_0, Q_1\} =0$ and $Q_1^2=0$. 

The space $\EuScript{A}_N$ of local operators is also a chiral algebra. The OPE is computed by free OPE of cohomology representatives, up to shifts by BRST-exact operators. Alternatively, we can first compute unambiguous sphere correlation functions of cohomology representatives and then derive the OPE from these. 
The sphere correlation functions and OPEs will be the main observable of interest in this paper.\footnote{It is also possible to define and compute $\EuScript{A}_N$ conformal blocks on general Riemann surfaces. An important subtlety is that $a_{\bar z}$ can only be gauge-fixed to an holomorphic bundle and conformal blocks will involve integrals over the space of holomorphic bundles. The details of the large $N$ expansion and the holographic duality will be more complicated.} 

We should recall some special properties of this model which simplify our analysis but will not generalize to other examples. In particular, the conformal symmetry of the model is enhanced to a ``small'' ${\cal N}=4$ super-conformal algebra\footnote{The chiral algebra defined by the BRST cohomology of local operators 
also has an $SL(2)$ global symmetry which acts on cohomology representatives roughly as $b\leftrightarrow \partial c$. It will not play a role in this paper.} with BRST-closed generators which include
\begin{itemize}
    \item The stress tensor
\begin{equation}\label{eq:ad_stress_tensor_original_theory}
    T \equiv \frac{1}{2\hbar} \Tr\left(-2 b \partial c - X\partial Y + Y \partial X \right)
\end{equation}
with central charge $- 3 N^2$.
\item Level $-N^2$ Kac-Moody currents for an $SU(2)_R$ symmetry transforming $X$ and $Y$ as a doublet:
\begin{equation}
    J^{++} \equiv \frac{1}{2\hbar}\, \Tr X^2 \qquad \qquad J^{0} \equiv  \frac{1}{2\hbar}\, \Tr X Y \qquad \qquad J^{--} \equiv \frac{1}{2\hbar} \Tr \,Y^2\,.
\end{equation}
\item Four super-currents 
\begin{align}
    G^+ &\equiv \frac{1}{2\hbar}\, \Tr X b \qquad \qquad  \widetilde G^+ \equiv \frac{1}{2\hbar}\, \Tr X \partial c  \cr
    G^- &\equiv\frac{1}{2\hbar}\, \Tr Y b \qquad \qquad \widetilde G^- \equiv\frac{1}{2\hbar}\, \Tr Y \partial c  \,.
\end{align}
\end{itemize}

Twisted Holography concerns the 't Hooft expansion of the $SU(N)$ gauge theory correlation functions: we trade the rank $N$ for a 't Hooft coupling $\lambda \equiv \hbar N$ and expand in powers of $\hbar$ at fixed $\lambda$.\footnote{Therefore, we have a family of $\hbar$-adic vertex algebras \cite{li2004vertex}.} The expansion in powers of $\hbar$ is expected to match the genus expansion of the dual string theory, while $\lambda$ controls the period of the holomorphic three-form in the $SL(2,\bC)$ target space. We will mostly focus on the planar limit, dual to tree-level string theory calculations. 

The identification of the dual theory as a B-model with $SL(2,\bC)$ target space can be justified by realizing the chiral algebra as the world-volume theory of $N$ D-branes wrapping $\bC \subset \bC^3$ and computing the branes back-reaction in the BCOV description of the topological string theory. One of our objectives is to describe this back-reaction in an algebraic manner, which can be generalized to non-geometric situations.

\subsection{Single-trace Local Operators and the Planar Limit}
Operators built from less than $N$ fields can be organized into polynomials of single-trace operators. A sphere correlation function of such operators is computed as a sum over double-line free Feynman diagrams which may have multiple connected components, each with a specific genus. 

If we normalize traces by an $\hbar^{-1}$ prefactor, as we saw e.g. in the stress tensor \eqref{eq:ad_stress_tensor_original_theory}, and express everythign in terms of the 't Hooft coupling $\lambda$ and $\hbar$ by substuting $N = \frac \lambda \hbar$, we recover the standard String Theory-like form of the 't Hooft expansion: {\it connected} sphere correlation functions can be expanded as 
\begin{equation}
    f(\hbar,N) = \frac{1}{\hbar^2} f_0(\lambda)+ f_1(\lambda) + \hbar^2 f_2(\lambda) + \cdots
\end{equation}
The first term is the ``planar'' part of the connected correlation function.\footnote{Notice that correlation functions of specific single-trace operators are polynomials in $N$ and thus in $\lambda$. They are also Laurent polynomials in $\hbar$ and thus the genus expansion truncates.} 

% \begin{figure}[h]
%     \centering
%     \includegraphics[width=\linewidth]{Figures/Genus expansion 3.png}
%     \caption{Examples of the diagrams involved in the expansion of connected sphere correlation functions, and their connection to string worldsheets.}
%     \label{fig:genus expansion}
% \end{figure}

The planar part of a correlation function is a sum of terms with different numbers of connected components, which appear with a different overall power of $\hbar$. This can be a source of confusion: the planar part of a correlation function is not simply defined as the leading contribution in the 't Hooft expansion! We will see in detail how this affects various 2d CFT structures. %\al{What part of our constructions are affected by this fact? should we point to them?}. 

In a dual String Theory, single-trace operators should be associated to vertex operator insertions on the world-sheet, representing specific closed string states approaching specific locations in the holographic boundary. The planar data should match a tree-level String Theory calculation and non-planar effects should arise from loops.
In the case at hand, the dual calculations could in principle be done via Witten diagram in the BCOV theory on $SL(2,\bC)$ with specific bulk-to-boundary propagators. Such a computational strategy, though, does not obviously generalize to non-geometric settings.\footnote{The reference \cite{Zeng:2023qqp} approaches the problem via a KK reduction of BCOV to a three-dimensional Holomorphic-Topological theory. It may be possible to study the non-geometric backgrounds we are interested in by some sort of KK reduction from a non-commutative Calabi-Yau $Y_3[\lambda]$ to a geometric 3d HT theory. We leave this approach for future work.}

There are various ``natural'' normalizations one may choose for single- and multi-trace local operators:
\begin{itemize}
    \item In a CFT, it is natural to normalize operators so that two-point functions are of order $1$ (or better, functions of $\lambda$). Polynomials in single-trace operators with no extra power of $\hbar$ have such property:
    \begin{equation}
        O^{\mathrm{CFT}} \equiv \Tr (\cdots) \, ,
    \end{equation}
    where the ellipsis denotes a cyclic sequence of (derivatives of) adjoint fields. 
    With this normalization, sphere correlation functions are finite and dominated by disconnected products of two-point functions of single-traces in the $\hbar \to 0$ limit, as in a generalized free field theory. 
    \item The structure of the OPE is best described in a ``classical'' normalization 
    \begin{equation}
        O^\cl\equiv \hbar \Tr (\cdots) \, ,
    \end{equation}
    so that planar singular part of the OPE is of order $\hbar^2$ and the regular part of order $1$:
    \begin{align}
        O^\cl_i(z)  O^\cl_j(0) &\sim \left[O^\cl_i(z)O^\cl_j(w)\right] + \hbar^2 \left[ g_{ij}(\lambda)z^{\cdots}+ \sum c_{ij}^{k,n}(\lambda)z^{\cdots} \partial^n O^\cl_k(w)\right. \\ & \left.+ \sum c_{ij}^{k_1,k_2,n_1,n_2}(\lambda)z^{\cdots} \partial^{n_1} O^\cl_{k_1}(w)\partial^{n_2} O^\cl_{k_2}(w)\right] \nonumber +O(\hbar^4) +  \cdots
    \end{align}
    This makes manifest that the planar limit of the OPE is captured by a (non-linear) Poisson chiral algebra $\EuScript{A}_\infty[\lambda]$, i.e. a commutative algebra with a compatible derivative and associative\footnote{The multi-trace terms in the Poisson chiral algebra OPE are important for associativity: the Jacobi identity of three single-trace operators receives non-trivial contributions from the combination of a central term and a double-trace term in the OPE.} $\lambda$-bracket\footnote{We apologize for denoting the 't Hooft coupling as $\lambda$ in a situation where one may want to reserve the symbol for the notion of $\lambda$-bracket.}. This structure arises from the Fourier transform of the singular part of the OPE (see, e.g., \cite{kac2017introduction}). Obvious (and deceptively simple) examples are the OPE of the rescaled superconformal generators. E.g. 
    \begin{equation}
        T^\cl(z) \, T^\cl(w) \sim \hbar^2\left[ -\frac{3 \lambda^2}{2(z-w)^4} +  \frac{2 T^\cl(w)}{(z-w)^2} + \frac{\partial T^\cl(w)}{z-w} \right]+ (T^\cl T^\cl)(w) + \cdots \, ,
    \end{equation}
    is the Virasoro Poisson chiral algebra. This should be matched with an OPE calculation in the tree-level BCOV theory in $SL(2,\bC)$, with an appropriate holographic dictionary \cite{Zeng:2023qqp}.\footnote{From this perspective, the $\hbar$ expansion of $\EuScript{A}_N$ is thus a ``deformation quantization'' \cite{beilinson2004chiral} of the planar Poisson chiral algebra.  If we could prove the planar limit of the holographic correspondence, extending the proof to the full 't Hooft/genus expansion would entail comparing two deformation quantizations of the same Poisson VOA. The 3d HT theory mentioned in the previous footnote is also a natural tool to study deformation quantization of Poisson VOA \cite{tamarkin2003deformations,beilinson2004chiral}} 
    \item In this paper, we will stick to the normalization  
     \begin{equation}
        O \equiv \frac{1}{\hbar} \Tr (\cdots) \, ,
    \end{equation}
    which turns out to be most suitable for the formal deformation theory considerations we employ to study the world-sheet theory for the dual String Theory.
\end{itemize}

\subsection{Single-trace BRST cohomology}

The notion of single-trace operator has a subtle interplay with the BRST cohomology. The tree level part $Q_0$ of the BRST differential involves a single Wick contraction and reproduces classical BRST transformations:
\begin{align}
    Q_0\,c &= \frac12 [c,c] \cr
    Q_0\,X &= [c,X] \cr
    Q_0\,Y &= [c,Y] \cr
    Q_0\,b &= [c,b] + [X,Y]\,.
\end{align}
In particular, it maps a single-trace operator to a single-trace operator. It acts as a derivation on products of single-trace operators. 

The 1-loop part $\hbar Q_1$ of the BRST differential involves two Wick contractions and is a bit more complicated. When acting on a single-trace operator, the result includes:
\begin{itemize}
    \item Single-trace which arise from the contraction of consecutive symbols. They scale as $\lambda$. We refer to this as the linear part $\lambda Q^l_1$ of the planar answer.
    \item Double-trace terms which arise from the contraction of non-consecutive symbols. These terms are still planar and contribute, say, to the definition of $\EuScript{A}_\infty[\lambda]$.  
\end{itemize}
The linearized operator $Q_0 + \lambda Q_1^l$ is nilpotent and defines a complex $\mathrm {Ops}_\lambda$ of single-trace operators which should match (in the sense of complexes) the space of vertex operators in the dual world-sheet theory. 

In this particular chiral gauge theory, the analysis is simplified by the existence of a large collection of single-trace operators which are exactly BRST-closed. It is easy to see that $\Tr X^n$ is BRST-closed: it is classically invariant and it does not admit two Wick contractions with the BRST current. The $SU(2)_R$ symmetry then implies that the whole ``A'' tower of symmetrized traces
\begin{equation}
    {\cal A}_{a,b} \equiv \frac{1}{(a+b)\hbar} \,\STr X^a Y^{b} 
\end{equation}
is BRST-closed. Acting with super-conformal generators produces three more towers:
\begin{align}\label{eq:BCD_twoer}
    {\cal B}_{a,b} &\equiv \frac{1}{\hbar} \,\STr X^a Y^{b} b\cr
    {\cal C}_{a,b} &\equiv \frac{1}{\hbar} \,\STr X^a Y^{b} \partial c\cr
    {\cal D}_{a,b} &\equiv \frac{1}{\hbar} \,\STr X^a Y^{b} b \partial c + \cdots
\end{align}
The operator ${\cal D}_{0,0}$ is proportional to the stress tensor.

The full expression of the last tower is complicated, as the OPE with super-conformal generators which produces it includes terms with two Wick contractions, producing both single-trace corrections proportional to $\lambda$ and double-trace corrections. The planar single-trace parts define the corresponding element in the cohomology of $\mathrm {Ops}_\lambda$.

So far we have only considered the single-trace part of the BRST cohomology. As the BRST charge is a derivation of the OPE, a simple strategy to define BRST-closed multi-trace operators is to look at the regular part of an OPE of BRST-closed single-trace operators. The regular part of OPE is usually called the regularized product. The price to pay is that such regularized product is not associative in the standard sense. Hence the whole BRST cohomology can not be simply identified as polynomial functions of the single-trace cohomology. However, we still expect the BRST cohomology of local operators built from a number of fields which remains finite as $N$ is increased to consist of regularized products of operators from the single-trace cohomology, namely the above four towers. This expectation has not been systematically tested and will not be needed in the following. 

As $Q_0$ acts on individual traces in a product as a derivation, we can compute the $Q_0$ cohomology on the space of single-trace operators. This calculation is best done with the Homological Algebra tools reviewed in Section \ref{sec:HAcalculations}. The result is precisely given by the  four towers above. Though the results given above are specific to the chiral gauge theory we considered, we will see in Section \ref{sec:closed} that a similar pattern holds for the single-trace cohomology in more general chiral gauge theories.

\subsection{The Global Symmetry Algebra of sphere correlation functions}
\label{sec:GCA_standard}
Recall that any chiral algebra is associated to a Vertex Operator Algebra of 
Fourier modes. We will use a ``math'' labelling of the modes, 
\begin{equation}
	O_{n} \equiv \oint_{|z|=1} \frac{dz}{2 \pi i} z^n \, \cO(z) \, .
\end{equation}
As an exception to this notation, we still denote the global conformal generators as $L_{-1},L_0,L_1$. The action of non-negative modes on local operators and the commutator of two modes are controlled by the singular part of the OPE. 

Recall that a quasi-primary operator is an operator annihilated by the $L_1$ mode of the stress tensor. We are interested in the modes $O_n$ of quasi-primary single-trace operators $\cO$ which annihilate the vacuum at $0$ and $\infty$, i.e. such that $0\leq n \leq 2 \Delta-2$ if the scaling dimension of $\cO$ is $\Delta$. These modes form an irreducible representation of dimension $2 \Delta -1$ under the global conformal group: 
\begin{align}
	[L_{-1}, O_{n}] &= - n O_{n-1} \cr
	[L_{0}, O_{n}] &= (\Delta -1- n) O_{n} \cr
	[L_{1}, O_{n}] &= (n+2 - 2 \Delta) O_{n+1} \,.
\end{align}
An important special property of these modes is that the action on local operators and their commutation relations do not receive contribution from the central terms in the OPE. For example, the Virasoro central charge does not contribute to the commutator of the global conformal generators. As a consequence, the Jacobi identities for three such modes do not receive contributions from non-linear terms in the planar OPE.

The linear terms in the planar OPE and the linear terms in the action of the BRST differential thus equip the modes with the structure of a dg-Lie algebra, which we denote as the {\it Planar Global Symmetry Algebra}\footnote{The name is a bit of a misnomer, as the non-linearities would still contribute to Ward identities for planar correlation functions. In particular, knowledge of $\fL_\lambda$ alone is not obviously sufficient to determine planar correlation functions or OPE.} $\fL_\lambda$. 

The restriction on the mode number is crucial. The linear part of the commutators of general single-trace modes is not associative and does not define a Lie algebra. Because of the definition, $\fL_\lambda$ is the mode algebra of $\mathrm{Ops}_\lambda$, and acts on it. 

Another useful perspective is that modes $\cI$ in $\fL_\lambda$ can be added formally to the BRST differential to deform the chiral algebra. The deformation creates a BRST anomaly $(Q+\cI)^2$ and we can select the planar single-trace part
\begin{equation}
    \{Q,\cI\} + \frac12 \{\cI,\cI\}\,,
\end{equation}
where the bracket denotes the single-trace planar part of the commutator, e.g. the Lie algebra bracket of $\fL_\lambda$.

A basic ingredient of the Twisted Holography conjecture is that the dg Lie algebra $\fL_\lambda$ is quasi-isomorphic to the Lie algebra of holomorphic, divergence free, polynomial polyvector fields on $SL(2,\bC)$. These are the global symmetries of B-model worldsheet theory and act on the vertex operators which represent $\mathrm{Ops}_\lambda$ insertions. We will generalize this identification to other examples. 

We should be careful to distinguish symmetries of the world-sheet theory and symmetries of the corresponding String Theory. Although the two naively coincide, in practice the symmetries can be broken or deformed in the String Theory due to IR divergences. In the case at hand, the subtleties concern the holographic boundary conditions: a polynomial vector field acting on a field configuration which decays at the boundary may produce components which do not decay at the boundary.  

We expect these subtleties to be analogous to the non-linear corrections to the action of $\fL_\lambda$ on single-trace local operators in the chiral algebra. It would be interesting to explore this point and the analogous statement about $\mathrm{Ops}_\lambda$ further. 

A direct comparison between modes in $\fL_\lambda$ and polyvector fields is a bit laborious, especially for the action of modes from the $\cD$ tower. The comparison is facilitated by the $PSU(2|2)$ global super-conformal symmetry group. The conformal generators and the zero modes of the $SU(2)$ currents map to the vector fields for the left and right action of $SL(2)$ on itself. The modes $G^\pm_{\pm \frac12}$ and $\wt G^\pm_{\pm \frac12}$ of the super-currents map are then identified respectively with the coordinate functions on $SL(2,\bC)$ and with four bi-vectors. 

The modes of the $\cB$ tower transform in representations with the same spin for the two bosonic $SL(2)$'s and are identified with polynomials in $SL(2,\bC)$ up to an overall scale for each spin. The $PSU(2|2)$ action completes the matching to the remaining three towers. We expect the Jacobi identities of the Lie algebra to essentially fix the unknown proportionality constants. We will propose an alternative computational strategy momentarily.

The $\lambda \to 0$ limit $\fL_0$ of the global symmetry algebra is well-defined and gives and polynomial, divergence-free holomorphic poly-vectorfields on the resolved singular conifold geometry \begin{equation}
    \cO(-1) \times \cO(-1) \to \mathbb{C}P^1\,.
\end{equation}
This is the natural ``ambient'' geometry for the original stack of $N$ D-branes, which is deformed to $SL(2,\bC)$ by the back-reaction \cite{Gopakumar:1998ki}.

In a standard patch of $\mathbb{C}P^1$, $\fL_0$ maps to a sub-algebra of the Lie algebra of polynomial, divergence-free holomorphic poly-vectorfields on $\bC \times \bC^2$. The latter can be identified with the linearized algebra $\fL_0^\bC$ of all non-negative single-trace modes\footnote{We emphasis that $\fL_0^\bC$ is not the complexification of $\fL_0$, but rather an extension of $\fL_0$ that includes all modes of the single trace operators. Their distinction is also explained in Appendix \ref{sec:global_sym}}, which is well-defined in the $\lambda \to 0$ as the central terms in the OPE vanish in the limit.\footnote{Instead, the analogous $\fL_\lambda^\bC$ is not associative.}

We will return to this comparison in Section \ref{sec:HAcalculations} with powerful homological algebra methods. 

There is an important subtlety which we should mention here. In the discussion above, we have compared modes in the cohomology of $\fL_\lambda$ to holomorphic poly-vectorfields. We should remember that the cohomology of a dg-Lie algebra is not just a Lie algebra: it also naturally gains higher operations making it into an $L_\infty$ algebra built via {\it Homotopy Transfer} \cite{loday2012algebraic}. A good way to understand this fact is to consider again the quadratic BRST anomaly. 
When we add an element $\cI$ in $\fL_\lambda$ to the BRST differential, the corresponding BRST anomaly is given by 
\begin{equation}
    \{Q,\cI\} + \frac12 \{\cI,\cI\}\,.
\end{equation}
However, when the cohomology is not just a Lie algebra but have non-vanishing higher operations, the BRST anomaly becomes a more complicated expression when we try to describe the deformation via canonical representatives of a cohomology class $[\cI]$:
\begin{equation}
    \frac12 \{[\cI],[\cI]\} + \frac16 \{[\cI],[\cI],[\cI]\}_3 + \cdots
\end{equation}
 Essentially, this happens because $\{[\cI],[\cI]\}=0$ does not imply that the BRST anomaly vanishes. The higher operations would vanish automatically if the cohomology was supported in ghost number $0$, but that is not the case here. 

The full physical identification between $\fL_\lambda$ and an algebra of poly-vectorfields thus requires that the higher operations vanish with appropriate choices of cohomology representatives (a ``formality theorem''). This will be nicely accounted for in the D-brane based proof we discuss now. 

\subsection{Fundamental matter and space-filling branes}
Next, we consider our first D-brane-like modification of the system. We add $k$ bosonic and $k$ fermionic (anti)-fundamental matter fields to the gauge theory and study the 't Hooft expansion of the resulting chiral algebra. Holographically, this modification is expected to add $k$ space-filling ``probe'' D-branes and $k$ space-filling ghost probe D-branes, i.e. a probe D-brane $P$ dressed by an $\bC^{k|k}$ Chan-Paton bundle.

In a BCOV description, these D-branes support a $U(k|k)$ holomorphic Chern-Simons theory coupled to the BCOV fields. 

We can denote the extra chiral algebra fields collectively as $I^A$ and $J_A$, with $A$ running over $k$ bosonic and $k$ fermionic values. We normalize the OPE as 
\begin{equation}
    I^A_i(z) J_B^j(w) \sim \hbar \frac{\delta^j_i \delta^A_B}{z-w}\,.
\end{equation}
Somewhat tediously, one sometimes needs to keep track of the Grassmann parity $(-1)^{|A|}$ of individual components of the fundamental fields,~e.g. 
\begin{equation}
    J_A^i(z) I^B_j(w)  \sim  (-1)^{|A|+1} \hbar\frac{\delta^i_j \delta^B_A}{z-w}\,.
\end{equation}

Operators of ``small'' size compared to $N$ can be written as polynomials in single-trace operators and mesonic operators, i.e. open strings of adjoint symbols sandwiched between an anti-fundamental and a fundamental symbol. We will include a factor of $\hbar^{-1}$ in front of mesons. For example, an important class of mesonic operators takes the form 
\begin{equation}
    \cM^A_{a,0;B} \equiv \frac{1}{\hbar} I^A X^a J_B
\end{equation}
together with their $SU(2)_R$ partners $\cM^A_{a,b;B}$ \footnote{Again, one can consider a CFT normalization, which involves a factor of $\hbar^{-\frac12}$, or a classical normalization with no factor of $\hbar$ which leads to a Poisson chiral algebra of mesons with OPE singularities appearing at order $\hbar$. At this order, single-trace operators enter OPE but are central.} Recall that $X$ and $Y$ transform as a doublet under the $SU(2)_R$ action. An explicit form of $\cM^A_{a,b;B}$ can be obtained by expanding the generating function $\frac{1}{\hbar} I^A (X + u Y)^{a + b} J_B$
and extracting the coefficient of $u^b$.

If we keep $k$ arbitrary, correlation functions containing mesonic operators can be decomposed into ``flavour-ordered'' pieces multiplying products of cyclic combinations of  flavour Kronecker $\delta^A_B$ symbols. 
Flavour-ordered, connected correlation functions are the natural quantities appearing in the 't Hooft expansion, with each cycle of flavour indices corresponding to a boundary of the dual string world-sheet. The mesonic operators correspond to open strings. 

The leading order ``planar'' contribution to a connected correlation function of this form has a single boundary and the topology of a disk. It appears at order $\hbar^{-1}$. 

The BRST charge of the flavoured theory includes an extra mesonic term 
\begin{equation}
    \frac{1}{\hbar} \oint \frac{dz}{2 \pi i} I^A c J_A \,. 
\end{equation}
At tree-level, the action of $Q_0$ on $b$ is thus modified: 
\begin{align}
    Q_0\,b &= [c,b] + [X,Y] + J_A I^A \cr
    Q_0\,I^A &= (-1)^{|A|+1} I^A c \cr
    Q_0\,J_A &= c J^A\,.
\end{align}
In particular, $\cM^A_{a,b;B}$ are $Q_0$-closed, but the trace $\cM^A_{a,b;A}$ is the $Q_0$ image of 
$\cB_{a,b}$. We will see later on that the traceless part of $\cM^A_{a,b;B}$ exhausts the mesonic part of the $Q_0$ cohomology. 

At one loop, $\hbar Q_1$ maps $\cM^A_{a,b;B}$ to $\hbar \delta^A_B \cC_{a,b}$. The introduction of flavour thus lifts both the $\cB$ and $\cC$ towers and the surviving mesons can be thought of as valued in $\mathfrak{psu}_{k|k}$. 

In the following, we will focus on meson operators with $A \neq B$, so that we can ignore the mixing with single-trace operators. The action of $\hbar Q_1$ can be restricted to a linear planar part $\lambda Q_1^l$ which maps mesons to mesons by contracting the BRST current with consecutive symbols. The operator $Q_0 + \lambda Q_1^l$ acting on mesons defines a complex $\mathrm{Ops}^P_\lambda$ of mesonic operators which should match a space of boundary vertex operators in the dual world-sheet theory. 

\subsection{Mesonic GSA}
Following the same strategy as for single-trace operators, we can define the linearized global mode algebra of the mesons. In particular, the zero modes of the $\cM^A_{0,0;B}$ currents generate the global $\mathfrak{u}_{k|k}$ flavour symmetry of the problem. The flavour structure in the meson-meson OPE is such that the algebra takes the form $\mathfrak{u}_{k|k}[\fP_\lambda]$ for a unital algebra $\fP_\lambda$ (neglecting the subtleties about the diagonal mixing with $\fL_\lambda$), 
with unit given by the  $\cM^A_{0,0;B}$ zero modes.

More precisely, we should define $\fP_\lambda$ as a dg-algebra, using the modes of all mesonic operators, and then pass to cohomology as an $A_\infty$ algebra via Homotopy Transfer. Assuming that the cohomology consists of the modes of $\cM^A_{a,b;B}$, though, the higher operations vanish automatically because the cohomology is supported in ghost number $0$. This allows us to treat $\fP_\lambda$ as an algebra. 

Another entry of the Twisted Holography dictionary identifies $\fP_\lambda$ as the algebra of holomorphic functions on $SL(2,\bC)$, i.e. with the classical algebra of global gauge transformations in the hCS theory which fixes the background connection. In the $\bar A=0$ background, these are $\delta_\alpha \bar A = \bar \partial \alpha + [\bar A, \alpha] = \bar \partial \alpha = 0$. That is, $\alpha$ is holomorphic. The Lie algebra of symmetries of hCS is then $\mathfrak{u}_{k|k}[\mathcal{O}(SL_2(\bC)]$, with $\mathcal{O}(SL_2(\bC))$ the unital algebra of holomorphic functions on $SL_2(\bC)$. 

As for the single-trace symmetries, we identify $\mathfrak{u}_{k|k}[\fP_\lambda]$ as the algebra of boundary symmetries of the world-sheet theory. In the hCS theory, subtleties in the holographic dictionary will introduce non-linearities which should match the non-linear planar terms in the mode algebra. 

The explicit match is a bit easier to test than the analogous one for $\fL_\lambda$. The mesons $\cM^A_{1,0;B}$ and $\cM^A_{0,1;B}$ each contribute two modes. We get a total of four elements in $\fP_\lambda$, to be identified with the coordinate functions on $SL(2,\bC)$. We can denote them as $(x,x')$ and $(y,y')$ respectively. 

A straightforward tree-level calculation shows that these generators commute and $x' y - y' x=0$ in $\fP_0$. A 1-loop correction deforms that relation to the generating relation for $SL(2,\bC)$: $x' y - y' x=\lambda$ \cite{Costello:2018zrm}. Composing these generators, specific polynomial holomorphic functions on $SL(2,\bC)$ can be associated to modes in $\fP_\lambda$.

Of course, the $SU(2)_R$ symmetry and $SL(2)$ conformal symmetry fix most of the identification up to overall coefficients: if we denote as $z_\alpha = (x,y)$ and $z'_\alpha = (x',y')$ the two $SU(2)$ doublets of generators, the symmetrization of $SU(2)_R$ indices in 
\begin{equation}
    z_{\alpha_1}\cdots z_{\alpha_{n-m}} z'_{\alpha_{n-m+1}}\cdots z'_{\alpha_{n}}
\end{equation}
gives the $m$-th mode of the meson of $SU(2)_R$ spin $n/2$.

One should note that the back-reacted geometry $SL(2,\bC)$ is an affine variety. We can therefore identify the category of coherent sheaves on $SL(2,\bC)$ with the category of modules over the algebra of polynomial holomorphic functions on $SL(2,\bC)$. Thus in this example, we are able to reconstruct the D-branes category by studying the mesonic global symmetry algebra $\fP_{\lambda}$, realized simply as the category of modules over $\fP_{\lambda}$. 

\subsection{Open vs Closed}
We can probe the relation between $\fL_\lambda$ and $\fP_\lambda$ by looking at the interplay between single-trace operators and mesons. Operators with different ghost number play a slightly different role here, though ultimately all statements here can be formulated uniformly by looking at the BRST anomalies which occur when adding to the BRST charge modes in both algebras.

In the string theory dual, turning on a closed string mode which is a function creates a BRST anomaly on space-filling branes. The anomaly is the restriction of the function to the branes. In the QFT, this is dual to the observation that $\cB_{a,b}$ is not closed anymore and rather maps to 
a diagonal meson. Accordingly, the action of $Q$ maps modes of $\fL_\lambda$ 
associated to functions to modes of $\fP_\lambda$ associated to the same functions. 

Closed string modes which are vectorfields act as symmetries on the 
open string modes by the standard action on functions. Accordingly, the linearized action of modes of the $\cA$ and $\cD$ towers map mesons to mesons and give an action of the ghost number $0$ part of $\fL_\lambda$ as a derivation of $\fP_\lambda$. 

Closed string modes which are bivectors can be turned on infinitesimally to give a non-commutative deformation of the gauge algebra on space-filling branes. In the QFT, we can imagine adding a mode of the $\cC$ tower to the BRST differential. This means that $[X,Y]$ is not exact but rather equals some polynomial in $X$ and $Y$. When we compute the planar OPE of two mesons, we generically produce an expression which is not symmetrized in $X$ and $Y$. Normally, it would be symmetrized by adding BRST-exact terms. If the BRST charge is deformed, the product in $\fP_\lambda$ will be deformed accordingly. 

These three cases can be unified by looking at the ``Hochschild cohomology'' $\HH^{\bullet}(\fP_\lambda,\fP_\lambda)$ of $\fP_\lambda$ , together with a map $\fL_\lambda \to \HH^{\bullet}(\fP_\lambda,\fP_\lambda)$. For exmaple, $H^{0}(\fP_\lambda,\fP_\lambda)$ is a copy of $\fP_\lambda$ corresponding to BRST anomalies, $H^{1}(\fP_\lambda,\fP_\lambda)$ corresponds to derivations of $\fP_\lambda$ and $H^{2}(\fP_\lambda,\fP_\lambda)$ corresponds to non-commutative deformations of $\fP_\lambda$. More generally, the entire $\HH^{\bullet}(\fP_\lambda,\fP_\lambda)$ can be interpreted as controlling the $A_\infty$ deformation of $\fP_\lambda$. General considerations about the structure of BRST anomalies and open closed coupling tell us that there is an $L_\infty$ morphism from 
$\fL_\lambda$ to the dg-Lie algebra $\HH^{\bullet}(\fP_\lambda,\fP_\lambda)$. 

Once $\fP_\lambda$ is identified with the algebra of functions on $SL(2,\bC)$, $\HH^{\bullet}(\fP_\lambda,\fP_\lambda)$ is known to be $L_\infty$ quasi-isomorphic to the Lie-algebra of polynomial poly-vectorfields, without higher operation. As a consequence, we have an $L_\infty$ morphism from 
$\fL_\lambda$ to polynomial poly-vectorfields.

However, the above morphism $\fL_\lambda\to \HH^{\bullet}(\fP_\lambda,\fP_\lambda)$ is not a quasi-isomorphism of commplexes. In order to complete the holographic dictionary, we should also characterize the image of $\fL_\lambda\to \HH^{\bullet}(\fP_\lambda,\fP_\lambda)$ as consisting of divergence-free polynomial poly-vectorfields, as in the B-model. It is not difficult to do so ``by hand'', by computing the action of a generating set of modes of $\fL_\lambda$. 

It would be better to do so in an Homological Algebra language which can be generalized to the general examples discussed in Section \ref{sec:HAcalculations}. We leave this question as an open problem.

\subsection{Determinant operators}
The prototypical determinant operator is $\det X$. The same argument used for $\Tr X^n$ shows that it is a BRST-closed local operator. It is a quasi-primary of dimension $\frac{N}{2}$. Using $SU(2)_R$ rotations, one gets a whole family of BRST-closed determinant operators $\det (X + u Y)$ which are also quasi-primary operators of dimension $\frac{N}{2}$.

The insertion of a $\det X$ operator in a correlation function 
requires order of $N$ Wick contractions with $Y$ fields in other operators. 
For example, a two-point function $\langle \det X(z) \det Y(w)\rangle$
involves an overall factor of $(z-w)^{-N}$. Accordingly, correlation functions of determinant operators tend to scale as $e^{\frac{1}{\hbar} S_{o}}$ in a 't Hooft expansion, with $S_{o}$ behaving as the action of a dual D-brane.

If we insert multiple determinants in a correlation function, we will add up contributions which have $n_{ij}$ Wick contractions between the $i$-th and $j$-th determinants. In a 't Hooft expansion, we may encounter a variety of non-trivial saddles as we vary the order $1$ parameters $\hbar n_{ij}$. Holographically, each determinant insertion imposes the presence of a ``giant graviton'' D-brane with a specific asymptotic behaviour at the holographic boundary. These boundary conditions may be satisfied by D-branes with a variety of non-trivial shapes in any given correlation function. In this section we will review and improve the known correspondence between saddles in the 't Hooft expansion and D-branes in $SL(2,\bC)$ with specified asymptotic behaviour.

It is useful to consider expressions such as $\det(m + X + u Y)$. These can be interpreted as generating functions for ``shortened'' determinant operators. i.e. traces of matrices of minors. Individual shortened determinants can be recovered by a contour integral 
\begin{equation}
    \oint_{|m|=1} \frac{dm}{m^{k+1}} \det(m + X + u Y) \,.
\end{equation}
All of these options are BRST-closed. These generating functions have a slightly counter-intuitive but useful behaviour in the large $N$ expansion: the associated correlation functions admit saddles where the number of Wick contractions remains finite and the answer is dominated by an overall factor 
of 
\begin{equation}
    m^N= e^{\frac{1}{\hbar} \lambda \log m} \, .
\end{equation}
Even though the saddle focuses on parts of the generating series with a finite number of $X$ and $Y$ fields, the 't Hooft expansion still includes surfaces with boundaries and the dual geometry includes a giant-graviton brane with a simple shape and action $\log m$. 

The D-brane-like properties of the determinant  becomes more manifest if we express the determinant as an integral over auxiliary (anti)fundamental fermions:\footnote{Replacing fermions with bosons gives inverse determinant operators. These are also interesting, but the analysis for the two cases is very similar.}
\begin{equation}
    \det(m + X)  = \int \left(\frac{d \psi d\bar \psi}{\hbar}\right)^N e^{\frac{1}{\hbar} \bar \psi (m + X)  \psi} \,.
\end{equation}
A straightforward expansion in Feynman diagrams with an $m^{-1}$ propagator for the auxiliary fermions leads to a 't Hooft expansion with an open string sector. 

In \cite{Budzik_2023}, the saddles for correlation functions of multiple determinants were recovered by merging all determinants 
\begin{equation}
    \prod_{i=1}^n \det\left(m_i + X(z_i) + u_i Y(z_i) \right)  = \int \prod_i \left(\frac{d \psi_i d\bar \psi^i}{\hbar}\right)^N e^{\sum_i \frac{1}{\hbar} \bar \psi^i \left(m_i + X(z_i) + u_i Y(z_i) \right)  \psi_i} \,.
\end{equation}
into a single normal-ordered expression within the auxiliary integrals: 
\begin{equation}
    e^{\sum_i \frac{1}{\hbar} \bar \psi^i \left(m_i + X(z_i) + u_i Y(z_i) \right)  \psi_i} = e^{\frac{1}{\hbar}\sum_{i<j} \frac{u_i-u_j}{z_i-z_j}\bar \psi_i \psi_j \bar \psi_j \psi_i }:e^{\sum_i \frac{1}{\hbar} \bar \psi^i \left(m_i + X(z_i) + u_i Y(z_i) \right)  \psi_i}:\,.
\end{equation}
A Hubbard-Stratonovich transformation eliminates the quartic terms at the price of introducing auxiliary variables $\rho^i_j$ for $i \neq j$. Setting $\rho_i^i=m_i$, the integrand becomes
\begin{equation}
    e^{-\frac{1}{\hbar}\sum_{i<j} \frac{z_i-z_j}{u_i-u_j}\rho^i_j \rho^j_i }:e^{\frac{1}{\hbar} \sum_{i,j} \rho^i_j \bar \psi_i \psi_j  + \sum_i \frac{1}{\hbar} \bar \psi^i \left(X(z_i) + u_i Y(z_i) \right)  \psi_i}:\,.
\end{equation}
As no other determinants are present and only a finite number of Wick contractions remain to be done, the $\psi$ integral behaves as $\det \rho^N$
and one arrives at saddle equations 
\begin{equation}
    \frac{z_i-z_j}{u_i-u_j}\rho^i_j = \lambda (\rho^{-1})^i_j\,, \qquad \qquad i \neq j
\end{equation}
for the $\rho^i_j$ integral. These equations were shown to determine the shape of a dual giant graviton D-brane in $SL(2,\bC)$. We will momentarily explain and extend that result.

There is a neat alternative way to arrive at these equations. We can just start from the normal-ordered exponent with a generic source $\rho$ 
\begin{equation}
    :e^{\frac{1}{\hbar} \sum_{i,j} \rho^i_j \bar \psi_i \psi_j  + \sum_i \frac{1}{\hbar} \bar \psi^i \left(X(z_i) + u_i Y(z_i) \right)  \psi_i}: 
\end{equation}
and compute the BRST variation. The BRST operator here consists of the part that acts on the chiral algebra fields $X,Y$ \eqref{eq:cano_BRST} and the part that acts on the auxiliary fermions $\bar{\psi}^i,\psi_j$, which is given by
\begin{align}
    \psi_i &\to c(z_i) \psi_i \cr
    \bar \psi_i &\to \bar \psi_i c(z_i)
\end{align}
We find that the BRST variation is proportional to
\begin{equation}
     \sum_{i,j}\left[\frac{u_i-u_j}{z_i-z_j} \bar \psi^i  \psi_j -\rho^i_j \right]\bar \psi^j (c(z_i) - c(z_j)) \psi_i \, .
\end{equation}
The first term comes from the 1-loop part $Q_1$ of \eqref{eq:cano_BRST} and the second the variation of the auxiliary fermions. At planar order we can replace $\bar \psi^i  \psi_j$ in parentheses with the Wick contraction $\lambda (\rho^{-1})^i_j$ and we recover the saddle equations 
\begin{equation}
     \sum_{i,j}\left[\lambda (u_i-u_j)  (\rho^{-1})^i_j -(z_i-z_j) \rho^i_j \right]\bar \psi^j \frac{c(z_i) - c(z_j)}{z_i-z_j} \psi_i \, .
\end{equation}
as conditions for planar BRST invariance of the combined operator. 

\subsection{Determinant modifications}
The integral expression for determinant operators allows us to define a large class of ``determinant modifications'', in the form of insertions of mesonic operators $\hbar^{-1} \bar \psi \cdots \psi$ in the auxiliary integral, which replace one symbol in the determinant with some string of symbols. These modifications behave as open string states attached to the dual D-brane \cite{Balasubramanian:2002sa}.\footnote{Several aspects of the construction and computation of open modifications described in this Section originally emerged in unpublished work with Kasia Budzik.}

It is also useful to formally add determinant modifications to the auxiliary action to define finite deformations. E.g. 
\begin{equation}
    \det(m + X+ \epsilon Y^2)  = \int \left(\frac{d \psi d\bar \psi}{\hbar}\right)^N e^{\frac{1}{\hbar} \bar \psi (m + X+\epsilon Y^2)  \psi}\,. 
\end{equation}
The algebraic structures we define below can all be understood in terms of the BRST anomaly (e.g. variation) of such formal expressions, possibly after deforming the bulk BRST charge as well.  

Even at the planar level, determining which modifications preserve BRST invariance takes some work. An important subtlety is that an expression involving the auxiliary field may vanish upon integration. For example, the BRST variation of the integral expression for $\det m+X(z)$ is
\begin{equation}
    \int \left(\frac{d \psi d\bar \psi}{\hbar}\right)^N e^{\frac{1}{\hbar} \bar \psi (m + X)  \psi} \, \frac{1}{\hbar} \bar \psi [c,m + X]  \psi
\end{equation}
and only vanishes after integration by parts. We will discuss momentarily how to use the BV formalism for the auxiliary integral to systematically deal with integration by parts. 

An important observation is that an insertion of $\bar \psi \psi$ 
can be traded of for $\hbar \partial_m$. Because of the overall prefactor $m^N$ to the 't Hooft expansion, this takes the form of $\lambda m^{-1}$ up to corrections subleading in the planar approximation. This effect is important in computing the planar OPE between single-trace operators and (modified) determinants, as well as the planar BRST variations.  
The analysis for non-trivial saddles is completely parallel with $m \to \rho$. 

In order to set up a BV formalism for the auxiliary integral, we introduce anti-fields $u$ and $\bar u$. To insure BRST invariance of the determinant, 
we can extend the auxiliary action to a BV action
\begin{equation} \label{eq:detBV}
    S_{\BV} = m \bar \psi \psi - \bar u c \psi + \bar \psi X \psi - \bar \psi c u \,.
\end{equation}
The two extra terms insure that the tree-level BV differential $\{S_{\BV}, \bullet\}$ maps $\psi \to c \psi$ and $\bar \psi \to \bar \psi c$ and 
cancels the chiral algebra BRST variation. 

More precisely, the consistency of the setup is constrained by a master equation:
\begin{equation}\label{eq:ad_BRST_plus_BV_laplacian}
    (Q_{\mathrm{BRST}}+ \hbar \Delta_\BV)e^{\frac{1}{\hbar} S_\BV} =0\,,
\end{equation}
where $Q_{\mathrm{BRST}}$ is the chiral algebra BRST variation and $\Delta_\BV$ the BV Laplacian 
\begin{equation}\label{eq:BV_laplacian}
    \Delta_\BV = \frac{\partial}{\partial u}\frac{\partial}{\partial \bar \psi}+\frac{\partial}{\partial \bar u}\frac{\partial}{\partial \psi}\,,
\end{equation}
which guarantees that the BRST variation integrated by parts to zero inside the auxiliary integral. 

Additionally, we may study space of deformations of such a BV action and the BRST anomalies they induce. These anomalies endow our space of deformations with a dg-Lie algebra structure:
%Furthermore, having a BV action satisfying this master equation, we may leverage the action of the operator $Q_{\mathrm{BRST}} + \hbar \Delta_\BV$ in the presence of formal deformations $e^{\frac{1}{\hbar} (S_\BV + \sum_i\epsilon_i \mathcal{O}^i)}$, to induce a modified BV differential and a BV bracket 
% If we expand out this equation, we have linear and quadratic terms which receive contributions from both summands and define a modified BV differential $Q$ and BV bracket $\{ \bullet, \bullet \}$.
\begin{align}\label{eq:differential_and_bracket_for_det_modificaitons_from_anomalies}
(Q_{\mathrm{BRST}}+ \hbar \Delta_\BV)&e^{\frac{1}{\hbar} (S_\BV + \sum_i\epsilon_i \mathcal{O}^i)} = \nonumber\\ &=\frac 1 \hbar \left(\sum_i\epsilon_i Q_{\text{det}}\mathcal{O}^i + \sum_{i,j}\epsilon_i\epsilon_j \{\mathcal{O}^i,\mathcal{O}^j\}_{\text{det}} + O(\hbar)\right)e^{\frac{1}{\hbar} (S_\BV + \sum_i\epsilon_i \mathcal{O}^i)}     
\end{align}
The anomaly linear in the $\mathcal{O}$'s endows them with a differential which at tree level is given by
\begin{align}
Q_{\text{det}}\mathcal{O} \coloneqq Q_{0}\mathcal{O} + \{S_{\BV}, \mathcal{O}\}_{\text{det}}  
\end{align}
with $Q_0$ the tree level BRST differential.
The bracket can be read off from the anomaly that is quadratic in the $\mathcal{O}$'s which at tree level is given by the usual BV bracket associated to the BV laplacian \eqref{eq:BV_laplacian}. %\al{Also, the bracket picks up a term from a crossed term from the BRST charge. Do we mention this any where? cause it modifies the dg lie algebra structure}

Determinant modifications should be thought of as being in correspondence with these formal modifications of the action and are thus also equipped by the modified differential and bracket 
with the structure of a dg-Lie algebra. As usual, we can include higher $\hbar$ corrections restricted to 
a planar linear part which maps mesonic modifications to mesonic modifications. 

It is natural to replace the determinant with a $k$-th power of the determinant, by taking $k$ copies of all the auxiliary fields. This makes determinant modifications into $k \times k$ matrices $\mathfrak{gl}_k[\fD_\lambda]$
for some unital dg-algebra $\fD_\lambda$. We will denote this as the global symmetry algebra of the determinant operator(s). 

The algebra $\fD_\lambda$ should be identified with the algebra of global gauge transformations for the dual D-brane or, better, boundary world-sheet symmetries for the giant graviton brane $D$ dual to the determinant insertions. Although our notation does not keep track of that, the algebra depends on the choice of saddle via $\rho$.

In Section \ref{sec:HAcalculations} we will compute $\fD_0$. 

Much as we saw for $\fP_\lambda$, there is an interplay between $\fD_\lambda$ and $\fL_\lambda$:
\begin{itemize}
    \item A mode in $\fL_\lambda$ may create a modification out of an un-modified determinant. This gives a linear map $\fL_\lambda \to \fD_\lambda$.
    \item A mode in $\fL_\lambda$ may map a modification of a determinant to a different modification. This gives a bi-linear map $\fL_\lambda \times \fD_\lambda \to \fD_\lambda$.
    \item A mode in $\fL_\lambda$ acting on a determinant with two modification may merge them into a single modification. This gives a tri-linear map $\fL_\lambda \times \fD_\lambda \times \fD_\lambda \to \fD_\lambda$, etcetera. 
\end{itemize}
As we did for $\fP_\lambda$, we can do the above calculations at generic $k$, acting on a determinant modified by a sequence of mesons with matching consecutive flavour indices. 

The structure can be arranged into an $L_\infty$ morphism from $\fL_\lambda$ and the Hochschild cohomology of $\HH^{\bullet}(\fD_\lambda,\fD_\lambda)$. It encodes the BRST anomalies which appear if we add a mode of $\fL_\lambda$ to $Q_{\mathrm{BRST}}$ and determinant modification to $S_\BV$.

Both the space-filling D-brane $P$ and the giant graviton D-brane $D$ thus capture the bulk symmetries of the back-reacted world-sheet theory. We will now see that the giant graviton branes do better than $P$ in a different respect: they capture categorically the closed string states associated to single-trace local operators $\mathrm{Ops}_\lambda$.

There are two dual recipes: we either consider connected two-point functions of a modified determinant operator and a single-trace operator or expand a modified determinant operator at large $m$ 
and pick the single-trace part. As usual, it is convenient to use multiple determinants in order to pick out disk diagrams more easily. For example, 
\begin{equation}
    (\det\nolimits m+X)^k = \exp k \,\Tr \log (m+X) = m^N \exp k \sum_i (-1)^{i+1} \, m^{-i} \, \Tr X^i
\end{equation}
and we can take the part linear in $k$. When adding modifications, we can focus on a term where the flavour indices are contracted in a specific cyclic pattern. 

The overall result is a collection of multi-linear cyclic maps 
from $\fD_\lambda$ to $\mathrm{Ops}_\lambda$. With a bit of work, one can interpret this as a pairing between the cyclic cohomology $\HC^{\bullet}(\fD_\lambda)$ and the space of single-trace local operators. Giant graviton branes are thus a natural ingredient in a dg-TFT description of the deformed world-sheet theory. 

\subsection{Open Modifications}
We now focus on the interplay between the giant graviton brane(s) $D$ and the space-filling brane $P$. Accordingly, we add the $I$, $J$ fields to the chiral algebra and consider ``open'' determinant modifications such as $\frac{1}{\hbar} I^A \psi$ or $\frac{1}{\hbar}\bar \psi J_B$. Of course, such modifications will need to appear in pairs. The planar BRST complex of open modifications give 
two spaces $\fM_\lambda$ and $\wt \fM_\lambda$ respectively. We expect BRST-closed open modifications of the schematic form $\frac{1}{\hbar} I^A Y^n \psi$ and 
$\frac{1}{\hbar} \bar \psi Y^n J^A$.

Open modifications correspond holographically to open strings stretched between the space-filling branes and the giant graviton. On the world-sheet, they correspond to specific boundary-changing vertex operators (see Figure~\ref{fig:open modfs}). %\al{One of the images we talked about could go nicely here, Hanne} 
We can tentatively probe the action of boundary local operators on $P$ onto this junctions by looking at the action of $\fP_\lambda$ onto the open modifications. It is also possible to define an action of $\fD_\lambda$ via the modified BV brackets, which should match the action of local operators on $D$. 

\begin{figure}[h]
    \centering
    \includegraphics[width=0.65\linewidth]{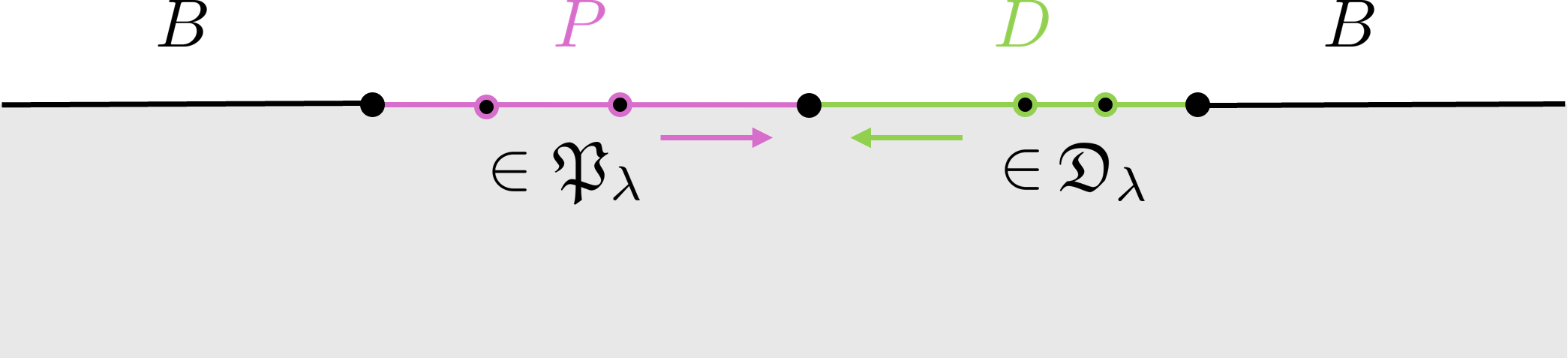}
    \caption{Illustration of the algebra of operators associated to the junction between the $P$ and $D$ branes acting as a left $\fP_{\lambda}$-module and a right $\fD_{\lambda}$-module.
    %left- and right-modules for $\fP_{\lambda}$ and $\fD_{\lambda}$ respectively. 
    Purple dots, aka boundary local operators on $P$, can combine with elements from the junction algebra from the left. Green dots, aka boundary local operators on $D$, can combine with elements from the junction algebra from the right. }
    \label{fig:open modfs}
\end{figure}

We act with a meson on a determinant with open modifications. If our starting point is a determinant modified by $\frac{1}{\hbar} I^A \psi$ and $\frac{1}{\hbar}\bar \psi J_B$ and we act with a mode of $\cM^C_{a,b;D}$, the planar part of the answer is the sum of an action on $\frac{1}{\hbar} I^A \psi$ proportional to $\delta^A_D$, an action on $\frac{1}{\hbar}\bar \psi J_B$ proportional to $\delta^C_B$ and a 1-loop part 
which merges the two open modifications into a $\bar \psi \psi$ modification.

The first two parts are our main focus. They equip $\fM_\lambda$ and $\wt \fM_\lambda$ respectively with the structure of a left- and a right- module for  $\fP_\lambda$.

Consider in particular the action of the four modes $x$, $y$, $x'$ and $y'$ in $\fP_\lambda$. The meson $\frac{1}{\hbar} I^C X J_D$ can only have one Wick contraction with $\frac{1}{\hbar} I^A \psi$. Accordingly, the action of $x'$ vanishes and the action of $x$ produces a $\frac{1}{\hbar} I^C X \psi$ modification. Up to a total derivative, this gives back $-m \frac{1}{\hbar} I^C \psi$, i.e. $x$ acts as multiplication by $m$. 

On the other hand, $\frac{1}{\hbar} I^C Y J_D$ can also have a simultaneous Wick contraction with the action in the auxiliary integral, producing the combination of fields $\frac{1}{\hbar} I^C \psi \bar \psi \psi$ which at the planar level is equivalent to $\lambda m^{-1}\frac{1}{\hbar} I^C \psi$. We thus learn that $y'$ acts as $\frac{\lambda}{m}$, as required by the algebra relation $x y' - x' y = x y' =  \lambda$, and $y$ produces a new modification 
\begin{equation}
    \frac{1}{\hbar} I^C Y \psi + (\bar \psi \psi) \frac{1}{\hbar} \partial I^C \psi \,.
\end{equation} 

The action of the generators of $\fP_\lambda$ determines the action of the full algebra of holomorphic functions on $SL(2,\bC)$, which reproduces the expected answer for a D-brane $D$ supported at $x'=0$, $x+m=0$. This is indeed the expected shape of the ``basic'' saddle for $\det(m + X)$. Similarly, the basic saddle of $\det (m + X + u Y)$ gives a brane supported on $x'+ u y'=0$ and $m+ x + u y = 0$.

An analogous calculation can be done for non-trivial saddle of multiple determinants. We now have a vector $\frac{1}{\hbar} I^A(z_i) \psi_i$ of modifications and the four generators act as matrices 
on this vector, built from $\rho$, $\rho^{-1}$ and diagonal matrices $\mu$ and $\zeta$ containing the parameters $u_i$ and positions $z_i$. We have relations $\rho + x + y \mu =0$ and $x' + y' \mu = \zeta \rho$.

Remarkably, these two equations and the third implied relation below
\begin{align}
    x &= - \mu y - \rho \cr
    x' &= \zeta \rho - y' \mu \cr
  y' &= y \zeta + \lambda \rho^{-1}
\end{align}
define an action of the commutative algebra $\fP_\lambda$ iff the saddle equations for $\rho$ are satisfied. These expressions match the commuting matrices which appear in the spectral curve presentation of the dual giant graviton saddle \cite{Budzik_2023}.

We have thus connected directly the giant graviton probe of the dual geometry and the $\fP_\lambda$ probe and presented the dual brane in terms of a left- and right- module for the algebra of functions on $SL(2,\bC)$, corresponding to the boundary-changing local operators between $P$ and $D$. 

One could consider further deformations of the determinant/space filling brane system that are baryonic
\begin{align}
\epsilon_{a_1\cdots a_n} J^{a_1}\cdots J^{a_n} = \int \left(\frac{d \bar\psi}{\hbar}\right)^N e^{\bar\psi J} 
\end{align}
or antibaryonic 
\begin{align}
\epsilon^{a_1\cdots a_n} I_{a_1}\cdots I_{a_n} = \int \left(\frac{d\psi}{\hbar}\right)^N e^{I\psi } 
\end{align}
and combinations of these with the determinant insertion. In the large $N$ limit these correspond to the presence of an instanton background in the bulk arising from the interaction between the space-filling and determinant branes. Such instantons are studied in \cite{Lopez-Raven:2024vop}, and we leave it for future work to incorporate them within this dg-TFT framework.
\subsection{Conclusions}
\label{sec:ex_conclusion}
The main outcome of this section was an algorithmic proposal to extract certain algebras of world-sheet local operators from the 't Hooft expansion of OPE's between single-trace operators, mesons and determinant modifications. 

In particular, \begin{enumerate}
    \item The linearized Global Symmetry Algebra $\fL_\lambda$ of single-trace operators reproduced global gauge symmetries in the dual closed string theory, aka the worldsheet bulk local operators.
    \item The linearized Global Symmetry Algebra $\fP_\lambda$ of mesonic operators reproduced global gauge symmetries in the dual open string theory supported on space-filling branes, i.e. the corresponding boundary local operators.
    \item Determinant modifications reproduced the algebra $\fD_\lambda$ of open string states on giant-graviton branes, i.e. the corresponding boundary local operators. 
    \item Open determinant modifications reproduced open string states stretched between giant-graviton branes and space-filling branes as a bi-module $\fM_\lambda$ for the associated algebras, i.e. the corresponding boundary-changing local operators. 
\end{enumerate}

There are also rich connections between the algebras $\fL_\lambda$, $\fP_\lambda$ and $\fD_\lambda$, as closed string modes can always be used to deform open string theories on branes. We have
\begin{enumerate}
    \item As a result of open closed coupling, we have a $L_\infty$ morphism from $\fL_\lambda$ to the Hochschild cohomology $\HH^{\bullet}(\fP_\lambda,\fP_\lambda)$, whose images lie in the kernel of divergence operator. 
    \item A mode in $\fL_\lambda$ can create a determinant modification, change a determinant modification, merge two modifications into one, and so on. Hence, we also have a $L_\infty$ morphism from $\fL_\lambda$ to $\HH^{\bullet}(\fD_\lambda,\fD_\lambda)$.
\end{enumerate}

We will next do some more explicit calculations in the $\lambda \to 0$ limit
and then extend the analysis to more-general chiral algebras.

\section{Homological algebra and the B-model} \label{sec:homB}
It turns out that many of the $Q_0$-cohomology calculations we encountered in the previous section have a neat Homological Algebra interpretation. This should not be surprising: many constructions in Homological Algebra effectively formalize aspects of 2d dg-TFT, possibly with boundaries. In the $\lambda \to 0$ limit, the $Q_0$-cohomology calculations reproduce dg-TFT aspects of the B-model with target $\bC^3$, the original D-branes $B$ wrapping $\bC$, space-filling branes $P$ and branes $D$ associated to determinants. Accordingly, we will review in this section Homological Algebra aspects of the B-model and in the next how Homological Algebra appears in $\lambda \to 0$ calculations. 

While this section mainly illustrates the cohomology calculation and how they reproduce the space of bulk local operators of the B-model, the algebraic structures on the $Q_0$-complex are also of crucial importance. As will be discussed in Sections \ref{sec:B_model_C},\ref{sec:homological_dgTFT} and \ref{sec:Bulk_sym_def}, the Lie algebra structure on the bulk local operators encodes information about symmetries and infinitesimal deformations of both the bulk and boundary theories on branes, and will be further elaborated in later sections. In general, one should expect an $L_\infty$ structure, but in our main examples the higher operations vanish by virtue of the formality theorem \cite{Kontsevich:1997vb}. One can also consider the whole $E_2$ algebra structure on the bulk local operators. Reproduction of this structure from the homological computations of the boundary algebra is a well established mathematical result know as the "Deligne conjecture" \cite{Voronov1995,tamarkin1998another,mcclure1999solution,brav2023cyclic}. However, this aspect $2d$ dg-TFT will not be pursued here, as our main approach and focus remain on the symmetry algebras derived from various boundary theories on branes. 

\subsection{The B-model with target $\bC$}
\label{sec:B_model_C}
The B-model with target $\bC$ can be presented in a maximally simplified form \cite{Cattaneo:1999fm} as a 2d TFT with a first order action:
\begin{equation}
    \int \theta \, \text{d} \zeta
\end{equation}
where both $\theta$ and $\zeta$ are super-fields consisting of formal sums of forms of all degrees on the worldsheet, with a BRST differential acting as the de Rham differential. Classically, the B-model is a theory of constant maps from the world-sheet to a target space. 

The cohomology of the free BRST differential is built from the zero-form components of $\zeta$ and $\theta$, with worldsheet derivatives being exact. In order to lighten the notation, we will use the same symbol below for the 
super-fields and for their zero-form component fields. Higher form components occur via descent relations
when superfields are integrated over cycles on the world-sheet to define higher operations and integrated correlation functions. We refer to \cite{witten1988topological} for a general review.

The field $\zeta$ is bosonic and has ghost number $0$. It represents a holomorphic coordinate on $\bC$. The field $\theta$ is fermionic and has ghost number $1$. Local operators built as polynomials in $\zeta$ and $\theta$, i.e. elements of $\bC[\zeta, \theta]$, are identified as polyvector fields on $\bC$, i.e. elements of $\bC[\zeta, \partial_\zeta]$. 

The identification is strengthened by considering natural algebraic structures on the space of local operators. 
First of all, polynomial local operators can be safely multiplied: the propagator of the theory pairs 0-form and 1-form components of the superfields. 

There is also a ``bracket'' operation $\{\bullet, \bullet\}$ where the first descendant of a local operator is integrated on a circle around the other \cite{beem2020secondary}. In this free theory the bracket involves a single Wick contraction: it satisfies Leibniz rules and acts as 
\begin{equation}
    \{\theta,\zeta\} = 1
\end{equation}
It reproduces the Schouten bracket on poly-vector fields. 

The field theory meaning of the bracket operation is that it controls the BRST anomalies which may arise when the theory is deformed by some interaction term $\cI(\zeta,\theta)$. 
The anomaly is computed at tree level as 
\begin{equation}
    \{\cI, \cI\} \, ,
\end{equation} 
and in particular a deformation is non-anomalous if this vanishes. The product and the bracket define a Gerstenhaber algebra structure on polynomial local operators.

At loop order, one may have higher order perturbative corrections to this result. A ``formality theorem''  \cite{Kontsevich:1997vb} provides a scheme where the loop correction vanish and the product and bracket on local operators exactly matches the product and Schouten bracket on polyvector fields.

It is useful to elaborate on the idea that local operators which can be added to the action define the ``global symmetry algebra'' of the theory. Irrespectively of the BRST anomaly being cancelled or not, a deformation $\cI$ of the action modifies the BRST differential on local operators to
\begin{equation}
    \cO \to \{\cI,\cO\}
\end{equation}
and deforms similarly the action of the BRST charge on any other object in the theory: boundary conditions, defects, etc. 

The deformation of the BRST charge is, essentially by definition, a symmetry of the system. If $\cI$ is a vector field, this is precisely the standard action of the vector field on $\bC$.  

\subsection{Distributional local operators}
This simplified presentation of the B-model, as opposed to the full topological twist of a $(2,2)$ sigma model with $\bC$ target, makes it tricky to talk about objects in $\bC$ which are not holomorphic. 
Some can be described as disorder vertex operators. In particular, we can introduce a family of vertex operators 
\begin{equation}
    \delta(\zeta-z)
\end{equation}
playing the same role as a distributional vertex operator $\delta^{(2)}(\zeta-z, \bar \zeta- \bar z)d \bar \zeta$ in the twisted sigma model.\footnote{Here $z$ labels a point in the target space, not the worldsheet!} We assign ghost number $1$ to this operator. 
It is the first member of a tower consisting of derivatives $\partial_z^n \delta(\zeta-z)$
and of $\theta \partial_z^n\delta(\zeta-z)$. 

Such distributional vertex operators are typically needed in order to get some sensible correlation functions on a compact Riemann surface, because $\zeta$ has a zero mode which would make the path integral diverge unless it is soaked by a delta function. It seems reasonable to define the product and bracket of a polynomial local operator and a distributional one, but it does not seem natural to define the product or bracket between two distributional vertex operators. 

Distributional local operators should be crucial in defining a holographic dictionary. For example, in a B-model with target space $SL(2,\bC)$ we should consider local operators which correspond to the boundary-to-bulk propagators representing boundary insertions in holography. In a $\lambda \to 0$ limit, we will recognize them as local operators which are distributional in the directions parallel to the original D-branes.

% An important feature of string theories built from the B-model is that integrated correlation functions receive important contributions which are localized at the boundaries of the moduli space of Riemann surfaces. Intuitively, as the model localizes on locally constant maps, the only way for a string worldsheet to connect different points in the target space is to propagate for an infinite distance, i.e. for the Riemann surface to break into pieces connected by infinitely long tubes or strips. This is part of the reason the B-model string theories can be captured by the QFT-like BCOV description \cite{bershadsky1993holomorphic,Bershadsky:1993cx}. As we attempt a dg-TFT characterization of world-sheet theories, a crucial question is how to capture these contributions algebraically rather than through BCOV Feynman diagrams \al{As far as I see this paragraph contains three ideas:\\
% .\\
% 1) B-model string theories receive important contributions from the boundaries of moduli space. The only way for... is to propagate for an infinite distance.\\
% 2) This is why B-model ST's have a QFT description.\\
% 3) We want to capture this boundary moduli contributions algebraically instead of through BCOV Feynman diagrams.\\
% .\\
% However, I don't understand the connection between 1) and 2).\\
% .\\
% Separately, I don't know what Feynman diagram calculation relates to a boundary moduli contribution, nor how these are in turn related to an algebraic calculation. Having an example of the feynman diagram and/or the algebraic calculation could be useful.}.

\subsection{Boundary conditions in the B-model with target $\bC$}

We will often employ two types of boundary conditions: Neumann b.c. ${\theta\bigr|}_\partial =0$ and Dirichlet ${\zeta\bigr|}_\partial = 0$, possibly deformed to ${\zeta\bigr|}_\partial = z$. Polynomial boundary local operators are respectively identified with polynomials $\bC[\zeta]$ or $\bC[\theta]$, with the natural multiplication corresponding to the composition of the corresponding operators. For Neumann b.c. we may also want the boundary version of $\partial_z^n\delta(\zeta-z)$.

We should quickly elaborate on the relation between (polynomial) boundary local operators and boundary symmetries. Consider e.g. Neumann boundary conditions. We can modify such boundary conditions by adding a Chan-Paton factor $M$ and a boundary interaction
\begin{equation}
    \cI^\partial(\zeta) \in \End(M)[\zeta]\,.
\end{equation}
Such an interaction may give rise to a boundary BRST anomaly, which is precisely 
\begin{equation}
   \cI^\partial(\zeta)^2
\end{equation}
and thus ``knows'' about the algebra structure on the space $\bC[\zeta]$ of polynomial boundary local operators. Furthermore, the interaction induces a BRST differential, by using the commutator of $\bC[\zeta]$ valued matrices $[-,-]$:
\begin{equation}
    \cO^\partial \to [ \cI^\partial, \cO^\partial]
\end{equation}
on bundary operators
\begin{equation}
    \cO^\partial(\zeta) \in \End(M)[\zeta] \, ,
\end{equation}
which is a symmetry of the boundary condition (holomorphic 
rotation of $M$).

\subsection{Disc correlation functions}

Dirichlet b.c. give a fermionic zero mode on the disk $D^2$, allowing for the definition and calculations of disk correlation functions involving bulk and boundary polynomial vertex operators. In particular, we normalize the disk 1pt function of $\theta$ to be $1$. Similarly, we normalize the Neumann disk 1pt function of $\delta(\zeta)$ to be also $1$. 

There are unique boundary-changing local operators $\cO_{ND}$ and $\cO_{DN}$ intertwining Neumann and Dirichlet boundary conditions. They are annihilated by the action from the respective sides of $\theta$ or $\zeta$ boundary local operators. It is natural to normalize these junctions so that the disk correlation function 
\begin{equation}
    \langle \cO_{ND} \, \cO_{DN}\rangle^{D^2}_{N,D}=1\,,
\end{equation} with half Neumann and half Dirichlet b.c. is also normalized to $1$. Then the fusion of the two junctions to a Dirichlet local operator must produce $\theta$, while the fusion in the opposite direction must produce $\delta(\zeta)$:
\begin{equation}
    \cO_{DN}\, \cO_{ND} = \theta_{DD} \qquad \qquad \cO_{ND} \, \cO_{DN}=\delta(\zeta)_{NN}\,.
\end{equation}

There is no BRST-invariant junction between Dirichlet boundary conditions placed at different positions in the target space. It is convenient, though, to represent the (absence of) junctions via a two-dimensional space with a constant $z-z'$ differential mapping one summand onto the other. Then at $z=z'$ we recover $\bC[\theta]$. 

A good way to understand this setup is to describe the Dirichlet b.c. $\zeta = z$ as the deformation of the b.c. $\zeta = 0$ by a boundary coupling $z \theta$. 
Then the boundary-changing local operators between $\zeta = z$ and $\zeta =z'$ are presented as $\bC[\theta]$ equipped by a differential given by multiplication by $z \theta$ from the left and $z' \theta$ from the right, 
i.e. $(z-z')\theta$. The cohomology vanishes unless $z=z'$.

\subsection{Homological algebra and dg-TFT}
\label{sec:homological_dgTFT}
Homological algebra emerges first in a dg-TFT as one probes the potential BRST anomalies which arise from a deformation of the theory or of a boundary condition. Bulk local operators which can be used to deform the theory are equipped with an $L_\infty$ structure and a compatible product, such as the Gerstenhaber algebra of holomorphic polyvector fields for the B-model with some target space $X$. Boundary operators which can be used to deform the boundary condition are equipped with an $A_\infty$ structure, such as the algebra of holomorphic functions 
for Neumann boundary conditions in the B-model.  

A typical situation is that we are given a brane $B$ and the corresponding $A_\infty$ algebra $A$, sometimes equipped with the data of some disk correlation functions, and we want to learn information about the 
bulk local operators of the world-sheet theory it belongs to, or about other D-branes. 

The simplest example could be the Dirichlet brane for the B-model with target $\bC$: $A=\bC[\theta]$ and the only non-zero disk correlation function is $\langle \theta \rangle_{D^2} =1$. The disc correlation function can be thought of as a trace on $\bC[\theta]$. This is called an ``one-dimensional Calabi-Yau algebra'', because the trace lowers ghost number by $1$. It is a good blueprint for situations where we probe a target space by looking at point-like D-branes, and for non-geometric generalizations of that notion. In general, $n$-dimensional Calabi-Yau algebras and their generalizations give a standard description of dg-TFTs \cite{costello2007topological,lurie2008classification}. We also give a briefe review of various definition of Calabi-Yau algebras and their physical interpretation in Appendix \ref{appendix:CY}.

Neumann b.c. provide a more intricate example: $A$ is the algebra $\bC[z]$, but non-zero disk correlation functions require the insertion of a single distributional boundary local operator. Distributions supported at $\zeta=0$  can be thought of as the dual space $A^\vee$, with disk two-point functions giving the duality pairing. We will see that recovering a dg-TFT description of the world-sheet theory from this data is a bit more laborious. 

First of all, we can recall two distinct ways to recover bulk local operators from the data of a boundary condition. 

\subsection{Bulk local operators as symmetries/deformations of the boundary data.}
\label{sec:Bulk_sym_def}

Bulk local operators can be employed to infinitesimally deform the worldsheet theory, which in turn will give an infinitesimal deformation of the 
boundary $A_\infty$ structure.

There is a hierarchy of possible deformations:
\begin{enumerate}
    \setcounter{enumi}{-1}
    \item The simplest possible effect is that the deformation causes a BRST anomaly on the brane, i.e. a {\it curving} of the $A_\infty$ algebra. This is computed simply by bringing the bulk local operator to the boundary. A classic example is a superpotential deformation $\cI(\zeta)$, which generates an anomaly on Neumann b.c. and leads to the theory of matrix factorizations \cite{Kapustin:2002bi}, i.e. solutions of the boundary BRST anomaly cancellation condition:
    \begin{equation}
        \cI(\zeta) + \cI^\partial(\zeta)^2 =0\, .
    \end{equation}
    This effect is simply described as a map from bulk local operators to $A$ which preserves ghost number. 
    \item The next simplest effect is that the deformation modifies the BRST differential on boundary local operators. For example, a bulk deformation by $\zeta$ will induce $Q \theta=1$ on Dirichlet b.c, while a bulk $\theta$ will induce $Q\zeta=1$ on Neumann b.c. This effect is described as a map from bulk local operators to the space of maps $A \to A$, which lowers the ghost number by $1$. 
    \item Bulk deformations can also deform the multiplication of boundary local operators. For example, we will see that in the B-model with target $\bC^n$, Poisson bivectors give a deformation quantization of the algebra of functions on Neumann bounary conditions. This effect is described at the leading order as a map from bulk local operators to the space of maps $A \otimes A \to A$, which lowers the ghost number by $2$. 
    \item Deformations of boundary $n$-ary operations are described by a map from bulk local operators to the space of maps $A^{\otimes n} \to A$, which lowers the ghost number by $n$.
\end{enumerate}
A collection $\fh$ of maps $A^{\otimes n} \to A$ for all $n$, lowering ghost number by $n$, is by definition an element of the Hochschild complex $\CH^\bullet(A,A)$. The complex is equipped with a differential $Q_{\CH}$, 
such that the map from bulk local operators to $\CH^\bullet(A,A)$ intertwines the bulk BRST charge and $Q_{\CH}$. \footnote{A small subtlety which we will typically neglect is that $A$ is unital and $n$-ary operations by definition act trivially on the identity. Physically, the identity operator has trivial descendants. Accordingly, one should employ the {\it relative} Hochschild complex $\CH_{\mathrm{rel}}^\bullet(A,A)$ of maps $(A/\bC 1)^{\otimes n} \to A$ here and elsewhere in the paper. Furthermore, if $B$ is a direct sum of elementary D-branes and $A$ is secretly a category, the definition of relative Hochschild complex should be further amended to remove identity operators for each individual D-brane. See a discussion in Section \ref{sec:closed}.}

In other words, there is a chain map from the space of bulk local operators which can deform the theory (which we expect to include polynomial ones, but not distributional ones) to the Hochschild cohomology $\HH^\bullet(A,A)$ for the algebra of boundary operators. This chain map is often a quasi-isomorphism of complexes, and should also extend to a quasi-isomorphism of $L_\infty$ algebras or $E_2$ algebras. Intuitively, we expect it to capture all bulk local operators which can act on the brane.

We will denote the maps associated to a local operator $\cI$ as
\begin{equation}
    \{\cI|\bullet, \cdots, \bullet\}_n\,.
\end{equation}

We can also push the deformation theory beyond the leading order. The Hochschild complex $\CH^\bullet(A,A)$ is equipped with a bracket, which is roughly defined by summing over all possible ways one map can be composed with the other and makes it into a dg-Lie algebra. Deformations satisfy a Maurer-Cartan equation $Q_{\CH} \cI + \{\cI, \cI\}=0$ which must be the image of the MC equation satisfied by the bulk local operators defining the deformations. 

In other words, the map from bulk local operators to the Hochschild complex $\CH^\bullet(A,A)$ is a morphism of $L_\infty$ algebras. This map can be rather non-trivial. For example, in the case of the B-model with target $\bC^n$ and
Neumann b.c., a bulk deformation given by a Poisson bivector $\cI$ modifies the boundary algebra in a way which is highy non-linear in $\cI$, giving the corresponding deformation-quantization star-product. The coefficients are the higher maps
\begin{equation}
    \bC[\zeta_i, \theta^i]^{\otimes k} \to \CH^2(\bC[\zeta_i],\bC[\zeta_i])
\end{equation}
in the morphism of $L_\infty$ algebras are computed by certain loop Feynman diagrams \cite{Kontsevich:1997vb}. 

In our analysis, we will often encounter situations where a theory and boundary conditions are assembled by stacking simpler theories and boundary conditions. For example, a B-model with target space $\bC^3$ can be obtained by stacking three copies of the B-model with target space $\bC$. A brane wrapping $\bC \subset \bC^3$
can be obtained by stacking a Neumann b.c. in one direction and Dirichlet b.c. in other directions. 

If we are given two systems with boundary $A_\infty$ algebras $A_1$ and $A_2$, the algebra for the combined system will be a tensor product $A_1 \otimes A_2$.\footnote{The definition of tensor product for $A_\infty$ algebras is typically scheme-dependent, but when at least one of two factors is a dg-algebra, we can take the standard tensor product.} Bulk local operators should also be the tensor product of the operators in the two factors. It turns out that there is an isomorphism of Gerstenhaber algebra
\cite{LE2014Hochtensor}:
\begin{equation}
    \HH^\bullet(A_1 \otimes A_2,A_1 \otimes A_2)  \simeq \HH^\bullet(A_1,A_1) \otimes \HH^\bullet(A_2,A_2)\,
\end{equation}
which should be compatible with the analogous statement for bulk local operators. 

Notice that we only discussed brackets of bulk local operators for now, and not the product. This is because the product does not have an immediate interpretation as a BRST anomaly. Stacking offers a somewhat indirect interpretation: the bracket in a product system is the combination of the bracket in one of the two factors and the product in the other factor. Correspondingly, the Hochschild complex can be equipped with a product as well, making it into a Gerstenhaber algebra \cite{Gerstenhaber} in a manner compatible with the bulk product. 

\subsection{Hochschild cohomology and Dirichlet boundary conditions}
We can illustrate these ideas for the case of the B-model with $\bC$ target space and Dirichlet b.c., so that $A=\bC[\theta]$ and bulk local operators are $\bC[\zeta,\theta]$: 
\begin{enumerate}
\setcounter{enumi}{-1}
    \item The ``bulk to boundary'' map is the obvious $\bC[\zeta,\theta] \to \bC[\theta]$ given by setting $\zeta=0$:
    \begin{equation}
        \{1|\}_0 =1 \qquad \qquad \{\theta|\}_0 =\theta
    \end{equation}
    and everything else vanishes.
    \item The deformation of the differential arises from a Feynman diagram with a single Wick contraction, 
    so it requires the presence of a single factor of $\zeta$ and removes a $\theta$ from the boundary operator:
    \begin{equation}
        \{\zeta|\theta\}_1 =1 \qquad \qquad \{\theta\zeta|\theta\}_1 =\theta\,.
    \end{equation}
    \item More generally (we are not careful here with overall non-zero coefficients), 
     \begin{equation}
        \{\zeta^n|\theta, \cdots, \theta\}_n =1 \qquad \qquad \{\theta\zeta^n|\theta, \cdots, \theta\}_n =\theta
    \end{equation}
    and everything else vanishes.
\end{enumerate}
The above physical considerations can be summarized as yielding a tower of maps $\bC[\zeta,\theta] \to \mathrm{Hom}(\bC[\theta]^{\otimes n},\bC[\theta])$. Simple computation shows that these maps preserve the Hochschild differential, and hence define a chain map $\bC[\zeta,\theta] \to \CH^{\bullet}(\bC[\theta],\bC[\theta])$. With a little bit more work one can show that it is a quasi-isomorphism \cite{loday2013cyclic}. Therefore, polynomial bulk local operators are thus fully accounted by the Hochschild cohomology, i.e. 
\begin{equation}
    \HH^{\bullet}(\bC[\theta],\bC[\theta]) \simeq \bC[\zeta,\theta]\,.
\end{equation}
\subsection{Hochschild cohomology and Neumann boundary conditions}
In the case of the B-model with $\bC$ target space and Neumann b.c., we have $A=\bC[\zeta]$ and bulk local operators are $\bC[\zeta,\theta]$:
\begin{enumerate}
\setcounter{enumi}{-1}
    \item The ``bulk to boundary'' map is the obvious $\bC[\zeta,\theta] \to \bC[\zeta]$ given by setting $\theta=0$:
    \begin{equation}
        \{\zeta^n|\}_0 =\zeta^n 
    \end{equation}
    and everything else vanishes.
    \item The deformation of the differential arises from a Feynman diagram with a single Wick contraction:
    \begin{equation}
        \{\zeta^n \theta|\zeta^m\}_1 =m \zeta^{n+m-1} \,.
    \end{equation}
    \item Bulk interactions cannot give any higher deformations. 
\end{enumerate}
As in the discussion of the previous section, the above physical analysis gives rise to a chain map $\bC[\zeta,\theta] \to \CH^{\bullet}(\bC[\zeta],\bC[\zeta])$, which is in fact a quasi-isomorphism. Therefore, polynomial bulk local operators are thus fully accounted by the Hochschild cohomology, i.e. 
\begin{equation}
    \HH^{\bullet}(\bC[\zeta],\bC[\zeta]) \simeq \bC[\zeta,\theta]\,.
\end{equation}

\subsection{Disc correlation functions and $\HH^\bullet(A,A^\vee)$}
Another probe of bulk local operators are correlation functions on the disk, with one bulk insertion, one boundary insertion and possibly $n$ integrated boundary insertions. 
Each bulk operator $o$ thus gives maps  
\begin{equation}
        (o|\bullet;\bullet,\cdots, \bullet)_n 
\end{equation}
from $A^{\otimes (n+1)}$ to the complex numbers, or equivalently from $A^{\otimes n}$ to $A^\vee$.

A collection of such maps defines an element of the Hochschild complex $\CH^\bullet(A,A^\vee)$. The complex is again equipped with a differential $Q_{\CH}$, such that the map from bulk local operators to $\CH^\bullet(A,A)$ intertwines the bulk BRST charge and $Q_{\CH}$. 

Depending on the choice of D-brane, the bulk operators detected by $\CH^\bullet(A,A^\vee)$
may be normalizable or distributional. For example, in the case of the B-model with target space $\bC$: 
\begin{itemize}
    \item For the Dirichlet boundary condition, $\bC[\theta]^\vee$ is essentially equivalent to $\bC[\theta]$ and $\CH^\bullet(\bC[\theta],\bC[\theta]^\vee)$ reproduces the polynomial bulk operators $\bC[\zeta,\theta]$. For example, 
\begin{equation}
    (1|\theta;)_0 =1 \qquad \qquad (\theta|1;)_0 =1 \qquad \qquad (\zeta|\theta;\theta)_0 =1
\end{equation}
etcetera.
    \item For Neumann b.c., disk correlation functions with boundary operators in $\bC[\zeta]$ detect distributional bulk states. For example, 
\begin{equation}
    (\delta(\zeta)|1;)_0 =1 \qquad \qquad (\partial^n \delta(\zeta)|\zeta^n;)_0 =n! \qquad \qquad (\theta \delta(\zeta)|1;\zeta)_0 =1
\end{equation}
etcetera. Polynomial bulk operators can be detected if we use distributional boundary operators, but this brings us back to $\HH^\bullet(A,A)$.
\end{itemize}

The $\CH^\bullet(A,A^\vee)$ complex is not equipped with a product or bracket, but there are mixed products and brackets with elements of $\CH^\bullet(A,A)$: intuitively, symmetries act on disk correlation functions. 

\subsection{Equivariant bulk local operators and $\HC^\bullet(A)$}
It is also possible to consider disk correlation functions with a bulk local operator but without boundary insertions, or with all integrated boundary insertions. An important subtlety is that such configurations are rotationally invariant and thus involve rotation-invariant bulk local operators. In a BRST setting, this notion needs to be refined to {\it rotation-equivariant} bulk local operators. In String Theory, rotation-equivariant bulk local operators are precisely the building blocks of closed string states \cite{Zwiebach:1992ie}! \footnote{In conformal gauge, this is the statement that the BRST cohomology of string states is computed by removing the ghost zero mode $c_0 - \bar c_0$ corresponding to rotations and imposing rotational symmetry/level matching by hand. It is a {\it relative} BRST cohomology, due to the fact that the rotation sub-group of the diffeomorphisms group is compact.} For example, in the B-model, rotation-equivariant bulk local operators correspond to divergence-free polyvector fields, which are the building blocks of BCOV theory.

These disk correlation functions thus map rotation-equivariant bulk operators to {\it cyclic} (i.e. $\bZ_n$-invariant) maps $A^{\otimes n} \to \bC$, which we denote as 
\begin{equation}
        (c|\bullet,\cdots, \bullet)_n \,.
\end{equation}
This leads to the definition of the cyclic cohomology complex $\CC^\bullet(A)$, with cohomology $\HC^\bullet(A)$ \cite{loday2013cyclic}. This is roughly the cyclic-invariant part of $\HH^{\bullet-1}(A,A^\vee)$. This intuition is made precise by the Connes construction, which involves a map $B$ on $\HH^{\bullet}(A,A^\vee)$ that is the analogue of the divergence operation on polyvector fields.   

For the B-model on $\bC$, divergence-free polynomial polyvector fields consist of $\bC[\zeta] \oplus \theta \bC$. Distributional divergence-free polyvector fields consist of $\bC[\partial]\delta(\zeta)$.
\begin{itemize}
    \item For the Dirichlet boundary condition, we have disk correlation functions 
    \begin{equation}
        (\theta|)_0 =1 \qquad \qquad (\zeta^{n-1}|\theta, \cdots, \theta)_n =\frac{1}{n} \,.
    \end{equation}
An useful string theory intuition to understand these formulae is that a polyvector field $n \zeta^{n-1}$ can be identified with the 1-form $n \zeta^{n-1}d\zeta$, whose primitive $\zeta^n$ evaluated at the location $z$ of a Dirichlet brane gives the disk correlation function deformed by a $z\theta$ boundary interaction. 
    \item For Neumann b.c., we have disk correlation functions  
\begin{equation}
    (\partial^n \delta(\zeta)|\zeta^n)_1 =n!\,.
\end{equation}
\end{itemize}

There is an action of $\CH^\bullet(A,A)$ on $\CC^\bullet(A)$, describing the action of symmetries on disc correlation functions. 

\subsection{Branes, modules and tensor products}
We can also use a specific reference D-brane $B$ to probe other D-branes $B'$, through the properties of boundary-changing local operators. The spaces $M_{BB'}$ and $M_{B'B}$ of boundary-changing local operators are naturally left- and right- ($A_\infty$) modules
for $A$, by composition of local operators along the boundary. 
These modules encode many properties of $B'$. 

Boundary local operators on $B'$ can be probed in multiple  ways:
\begin{itemize}
    \item They can act on boundary-changing local operators. Parsing definitions, this gives a chain map from local operators on $B'$ to endomorphisms of $A_\infty$ modules for $M_{BB'}$ and for $M_{B'B}$. This will typically capture normalizable operators on $B'$.
We can also vary the choice of $B'$, building a whole functor from the $A_\infty$ category of branes to the dg-categories of left- and right- $A$-modules. \footnote{As we briefly mentioned for the Hochschild cohomology, we should really consider notions of morphisms appropriate to {\it unital} algebras and modules.}
\item We can also compose boundary-changing local operators to produce (possibly distributional) operators on $B'$. Parsing definitions, this gives a chain map from the derived tensor product $M_{B'B} \otimes_A M_{BB'}$ \footnote{We use the symbol $\otimes_{A}$ to denote derived tensor product $\otimes_{A}^{\mathbb{L}}$ throughout the paper} to a space of (possibly distributional) local operators on $B'$.
\item If disk correlation functions are available, they will instead give a pairing between a space of (possibly distributional) local operators on $B'$ and $M_{B'B} \otimes_A M_{BB'}$. 
\end{itemize}
These maps and structures can be combined with the relations to bulk operators, and there are several Homological Algebra constructions which encode these combinations. They guarantee, say, that $\HH^{\bullet}(A,A)$ maps to the Hochschild cohomology of local operators on $B'$, etc. 

\subsection{B-model probe branes}
As an illustration, we can describe branes in the B-model with target $\bC$ using the Dirichlet brane as a reference, with $A=\bC[\theta]$. The Neumann brane is described by trivial one-dimensional modules. As the disk correlation functions in this B-model have a one unit of ghost number anomaly,  if we set $M_{DN}=\bC$ in ghost number $0$, then $M_{ND}=\bC[1]$ is supported on ghost number $1$. 

The $A_\infty$ endomorphisms of $\bC$ as a $\bC[\theta]$-module are a collection of maps $\bC[\theta]^{\otimes n} \to \bC$ which lower the ghost number by $n$, deforming the module action. The maps $\theta^{\otimes n} \to 1$ represent the $\zeta^n$ local operators on the Neumann b.c.

The derived tensor product $M_{ND}\otimes_A M_{DN}$, on the other hand, 
includes summands of the form $M_{ND}\otimes A^{\otimes n}\otimes M_{DN}$
in ghost number shifted by $-n$, with a specific differential. Here the elements $\bC[1] \otimes \theta^{\otimes n}\otimes \bC$ represent the 
distributional local operators $\partial^n \delta(z)$. For example, a local operator defined by going from Neumann to Dirichlet and back to Neumann b.c. clearly forces $\zeta \to 0$ and is identified with $\delta(z)$. 

In general, a $\bC[\theta]$ $A_\infty$ module $M_{DB}$ consists of some vector space $V$ and a collection of maps $V \to V$ of ghost number $1$ representing the action of $n$ $\theta$'s. We can collect all the maps into a polynomial differential $d(\zeta)$ on $V$, which satisfies $d(\zeta)^2=0$ by the module axioms. It is not hard to see that this coincides with the standard ``tachyon condensation'' presentation of B-model D-branes as a complex on $\bC$. We can think about that as an enriched Neumann brane, with Chan-Paton factor $V$ and boundary interaction $\cI^\partial(\zeta) = d(\zeta)$. 

In this situation, the dual module $M_{BD}$ will coincide with 
$V^\vee[1]$ with transpose differential. The derived tensor product should coincide with distributional maps on $V$ with a $[d(\zeta),\bullet]$ differential.  

The case of reference Neumann boundary conditions can be analyzed in a similar fashion. Here we simply report the description of Dirichlet local operators:
\begin{equation}
    M_{DN} \otimes_A M_{ND} = \bC \otimes_{\bC[\zeta]} \bC[1] = \bC[\theta]\,.
\end{equation}

\subsection{B-model on $\bC^n$}
The B-model on $\bC^n$ can be presented in terms of (super)fields $\zeta_i$ and $\theta^i$. 
Polynomial local operators form the algebra $\bC[\zeta_i,\theta^i]$ which is identified with holomorphic poly-vectorfields on $\bC^n$, with bracket induced from 
\begin{equation}
    \{\theta^i, \zeta_j\} = \delta^i_j\,.
\end{equation}
Rotation-equivariant local operators are identified with divergence-free holomorphic polyvector fields. General branes can be presented as enriched Neumann b.c., in terms of a (graded super-)vector space $V$ 
equipped with a nilpotent differential $d(\zeta)$.

These structures are all naturally recoverable by looking at a Dirichlet brane probe $\zeta_i |_\partial=0$, with boundary local operators $A=\bC[\theta^i]$. The Hochschild cohomology $\HH^{\bullet}(A,A)$ is well-behaved under tensor product and thus must be isomorphic to $\bC[\zeta_i,\theta^i]$, the polynomial local operators. 
The Hochschild cohomology $\HH^{\bullet}(A,A^\vee)$ is isomorphic to $\HH^{\bullet}(A,A)$, via the pairing $(ab)$ on $A$, with $(\theta^1 \cdots \theta^n)=1$ being the disk boundary 1-pt function. It is useful to think about $\HH^{\bullet}(A,A^\vee)$ as holomorphic $(\bullet,0)$ forms, by contracting the polyvector fields with the top form $d\zeta_1 \cdots d\zeta_n$.

Cyclic cohomology $\HC^{\bullet}(A)$ can be recovered from Connes construction, the role of $B$ being played by the $\partial$ operator acting on $(\bullet,0)$ forms. The result is $\partial$-closed holomorphic $(\bullet,0)$ forms, i.e. divergence-free polyvector fields. This computation is somewhat indirect, as it goes through the non-trivial tensor product quasi-isomorphism for $\CH^{\bullet}(A,A^\vee)$ and Connes construction. 

A useful perspective which we will explore further in the next section is that a $\partial$-closed holomorphic form $\alpha$ on $\bC^n$ has a primitive $\partial^{-1} \alpha$ which is what enters a D-brane's action. In particular, a cyclic homology element such as 
\begin{equation}
    (c|\theta^{i_1},\cdots, \theta^{i_n}) = c^{(i_1 \cdots i_n)}
\end{equation}
with a totally symmetric right hand side and other entries being $0$ represents a coupling of the Dirichlet brane to a function $c^{(i_1 \cdots i_n)}\zeta_{i_1} \cdots \zeta_{i_n}$ and thus the $\partial$-closed holomorphic $(1,0)$ form 
\begin{equation}
    \partial (c^{(i_1 \cdots i_n)}\zeta_{i_1} \cdots \zeta_{i_n})\,.
\end{equation}

We can present (but not justify here) a neat generalization of this statement. Introduce the algebra $\bC[\theta^i, d\theta^i]$ equipped with the obvious differential $d$. It is easy to see that a form $\partial^{-1} \alpha$, which is defined up to $\partial$-exact forms, defines a linear function on $d$-closed elements of $\bC[\theta^i, d\theta^i]$, which pairs up $\theta^i$ with $d\zeta_i$ and $d\theta^i$ with $\zeta_i$. We can extend that to an element of $\HC^{\bullet}(\bC[\theta^i])$ as follows: act with $d$ on all arguments, multiply them together and pair them with $\partial^{-1} \alpha$

Finally, other D-branes can be probed as $A_\infty$ modules for $\bC[\theta^i]$, giving directly the data of $V$ and $d(\zeta)$.

If we replace the Dirichlet branes in the analysis with some other probe branes, we may encounter various distributional local operators. The main actor in this paper will be D-branes wrapping a $\bC$ factor in $\bC^3$, which have an algebra $A= \bC[\theta^1,\theta^2,\zeta_3]$. While the Hochschild cohomology $\HH^{\bullet}(A,A)$ still recovers 
polynomial polyvector fields, $\HH^{\bullet}(A,A^\vee)$ and $\HC^{\bullet}(A)$ involve polyvector fields which are distributional in the $\zeta_3$ direction, interacting with the $\bC$ D-brane at a specific point. 

\subsection{B-model with target $\bC^2/\Gamma$}
Finally, we discuss an orbifold geometry which appears in generalizations of the Twisted Holography setup. The ADE singularities can be defined as the orbifold of $\bC^2$ by the action of a discrete subgroup $\Gamma$ of $SU(2)$ which fixes the origin. Correspondingly, the B-model on $\bC^2/\Gamma$ can be obtained from the B-model on $\bC^2$ by an orbifold by $\Gamma$. 

It is particularly useful to think about the behaviour of Dirichlet branes under the orbifold:
\begin{itemize}
    \item A Dirichlet brane in $\bC^2$ supported away from the origin combines with its $\Gamma$ images and survives the orbifold to give a Dirichlet brane in $\bC^2/\Gamma$ supported away from the origin.
    \item A Dirichlet brane at the origin of $\bC^2$ decomposes under the orbifold to a collection of ``exceptional branes'' supported at the origin of $\bC^2/\Gamma$. 
\end{itemize}
A Dirichlet brane supported away from the origin will decompose to a collection of exceptional branes at the origin.

Denote as $\bC[\Gamma]$ the vector space with a basis labelled by elements of $\Gamma$. The direct sum of $\Gamma$ images of a Dirichlet brane away from the origin of $\bC^2$, can be brought to the origin
to give a Dirichlet brane with CP factor $\bC[\Gamma]$. The local operators on such a brane $D_{\mathrm{tot}}$ form the matrix algebra 
\begin{equation}
    \mathrm{End}(\bC[\Gamma])[\theta^1, \theta^2]
\end{equation}
with elements of the form $e_{g,g'}$, $e_{g,g'}\theta^i$, $e_{g,g'}\theta^1 \theta^2$, with 
\begin{equation}
    e_{g,g'} e_{g'',g'''} = \delta_{g',g''} e_{g,g'''}\,.
\end{equation}

The orbifold projects that algebra to the $\Gamma$-invariant part
\begin{equation}
    A \equiv \mathrm{End}(\bC[\Gamma])[\theta^1, \theta^2]^\Gamma\,,
\end{equation}
where $\Gamma$ acts simultaneously on the $\theta^i$ as an $SU(2)$ rotation and on the basis element by group multiplication. 
The individual exceptional branes in $D_{\mathrm{tot}}$ are identified as idempotent elements in the ghost number $0$ part $\mathrm{End}(\bC[\Gamma])^\Gamma$. 

It is useful to identify 
$\bC[\Gamma]$ as the group algebra. It contains every irreducible representation $R_i$ of $\Gamma$ exactly $\mathrm{dim} \; R_i$ times. 
Indeed, it is a know result (see e.g. \cite{serre1977linear}) that $\bC[\Gamma]$ decomposes as 
\begin{equation}
    \bC[\Gamma] = \oplus_i R_i \otimes R_i^\vee
\end{equation}
under the left- and right- actions of $\Gamma$.

Then by Schur's lemma, $\mathrm{End}(\bC[\Gamma])^\Gamma$
coincides with 
\begin{equation}
    \mathrm{End}(\bC[\Gamma])^\Gamma= \oplus_i 1_{R_i} \otimes \mathrm{End}(R_i) \,.
\end{equation}
We interpret this as giving a decomposition 
\begin{equation}
    D_{\mathrm{tot}} = \oplus_i D_i \otimes R_i\,,
\end{equation} 
where the exceptional branes $D_i$ are labelled by irreducible representations of $\Gamma$ and appear $\mathrm{dim} R_i$ times in 
$D_{\mathrm{tot}}$.

We can then decompose the whole algebra $A$ into a category of exceptional branes $D_i$: 
\begin{equation}
    \mathrm{End}(\bC[\Gamma])[\theta^1, \theta^2]^\Gamma = \oplus_{i,j} \Hom(R_i,R_j)[\theta^1,\theta^2]^\Gamma \otimes \Hom(R_i,R_j)\,.
\end{equation}
The first factor can be computed by decomposing 
\begin{equation}
    R_j[\theta^1,\theta^2] = \oplus_i R_i \otimes \Hom(R_i,R_j)[\theta^1,\theta^2]^\Gamma
\end{equation}
into irreducible representations of $\Gamma$, with coefficients $\Hom(R_i,R_j)[\theta^1,\theta^2]^\Gamma$. 

We learn that $\Hom(R_i,R_j)[\theta^1,\theta^2]^\Gamma$ contains the local operators from $D_i$ to $D_j$. 
In particular, at ghost numbers $0$ and $2$ we have a single element between $D_i$ and $D_i$, while at ghost number $1$ we have a generator for every time $R_i$ enters in the tensor product of $R_j$ and the fundamental representation of $\Gamma$. This is the number of edges in the ``affine ADE quiver'' associated to $\Gamma$. 

An orbifold has two effects on local operators: it projects to $\Gamma$-invariants but it can also add new twisted sectors, which in the original theory are local operators living at the end of topological line defects which implement the action of elements in $\Gamma$.

The Hochschild cohomology of $A$ reproduces, non-trivially, this statement. It includes the $\Gamma$-invariant part of the Hochschild cohomology of $\bC[\theta^1, \theta^2]$, but also twisted sectors localized at the origin of the ADE singularity \cite{FARINATI2005415}. 

Another useful perspective on branes in the orbifold theory is that they can be understood in terms of a vector space $V$ equipped with a $\Gamma$ action and a differential $d(\zeta)$ compatible with that action. In particular, setting $d(\zeta)=0$ gives us a variant $N_i$ of Neumann b.c. for every irreducible representation $R_i$, which has a one-dimensional space of junctions with $D_i$ and zero-dimensional with other $D_j$ Dirichlet branes. 

\section{Homological algebra in the $\lambda \to 0$ limit. }\label{sec:hom}
We are now equipped to review the computation of the tree-level (aka planar, $\lambda \to 0$) BRST cohomology of single-trace operators. Recall that the chiral algebra describes the world-volume theory of $N$ D$1$ branes in $\bC^3$. Concretely, the D$1$ brane is a boundary condition $B$ in the B-model which combines Dirichlet b.c. $\zeta_1 |_\partial = \zeta_2 |_\partial = 0$ in two directions and Neumann $\theta_3 |_\partial = 0$ in the third. Boundary local operators on $B$ form the algebra $\bC[\theta^1,\theta^2,\zeta_3]$.

The chiral algebra fields are couplings for a general deformation of the stack of D-branes:
\begin{equation}
    \Phi(\theta^1,\theta^2,\zeta_3) \equiv c(\zeta_3) + X(\zeta_3) \theta^1 + Y(\zeta_3) \theta^2+ b(\zeta_3) \theta^1 \theta_2\,.
\end{equation}
We denote this sort of object, which pairs up the fields and their derivatives with the corresponding boundary vertex operators, a {\it generating field}. The tree-level BRST transformations of the generating field are neatly expressed in terms of algebra structure: 
\begin{equation}
    Q_0 \Phi = \Phi \Phi\,.
\end{equation}
Conversely, any tree-level BRST differential for a gauge theory with single-trace action can be interpreted as arising from an $A_\infty$ algebra and thus a world-sheet dg-TFT. The 2d chiral gauge theory arises from $\bC[\theta^1,\theta^2,\zeta_3]$.

The deformation $\Phi$ of the stack of D-branes changes the coupling of the closed string modes to the D-branes, i.e. the disk bulk 1-pt functions. In an Homological Algebra language, 
the disc 1-pt function of a bulk rotation-equivariant local operator $\fc$ becomes
\begin{equation}
    \cO_\fc \equiv \frac{1}{n \hbar} \Tr (\fc|\Phi, \cdots, \Phi)_n \, .
\end{equation}
Here, the map $\fc$ acts on the algebra elements and the fields are brought out of the map by linearity, up to Koszul signs. They are then composed as matrices and traced. We included a factor of $\hbar^{-1}$ because of the disk topology. 

Essentially by definition of the cyclic cohomology complex, the action of $Q_0$ on $\Phi$ is intertwined with the action of the differential $Q_\CC$ on $\fc$. We have thus gained an immediate identification:
\begin{itemize}
    \item The cohomology of single-trace local operators coincides with $\HC^\bullet(\bC[\theta^1,\theta^2,\zeta_3])$.
    \item The identification encodes the coupling of a closed string state to the stack of D-branes. 
\end{itemize}
The closed string state is a divergence-free polyvector field which is distributional in the $\zeta_3$ direction and couples to the stack of D1 branes at a point via a specific single-trace operator built from the world-volume fields on the D-brane. 

In the BCOV description of the B-model string theory, the distributional holomorphic polyvector field is mapped to a form $\alpha$ and $\partial^{-1} \alpha$ is restricted to the D-brane world-volume and coupled to the fields there. The space of divergence-free polyvector fields in $\HC[\bC[\theta^1,\theta^2,\zeta_3]]$ reproduce the single-trace operators in the four towers together with their derivatives:
\begin{itemize}
    \item An $\cA_{a,b}(z)$ single-trace operator 
is induced from  
\begin{equation}
    \partial^{-1} \alpha= \zeta_1^a \zeta_2^b \delta(\zeta_3-z) d\zeta_3 
\end{equation}
i.e. a divergence-free polyvector field 
\begin{equation}
    \beta= \delta(\zeta_3-z)  (a \zeta_1^{a-1} \zeta_2^b \partial_{\zeta_2}- b \zeta_1^{a} \zeta_2^{b-1}\partial_{\zeta_1}) \,.
\end{equation}
    \item An $\cB_{a,b}(z)$ single-trace operator is induced from 
\begin{equation}
    \alpha= \zeta_1^a \zeta_2^b \delta(\zeta_3-z) d\zeta_1 d\zeta_2 d\zeta_3 \,.
\end{equation}
\item An $\cC_{a,b}(z)$ single-trace operator is induced from 
\begin{equation}
    \partial^{-1} \alpha= \zeta_1^a \zeta_2^b \delta(\zeta_3-z)\,.
\end{equation}
\item A $\cD_{a,b}(z)$ single-trace operator is induced from 
\begin{equation}
    \alpha= \zeta_1^a \zeta_2^b \delta(\zeta_3-z)d\zeta_1 d\zeta_2 + (\cdots)\delta'(\zeta_3-z)d\zeta_3\,,
\end{equation}
where the ellipses denotes an appropriate 1-form to make it $\partial$-closed. There is a mixing with $\partial \cA_{a,b}(z)$ which can be resolved by imposing a quasi-primary condition.
\end{itemize}
In particular, this proves that the four towers exhaust the single-trace cohomology! 

Another route to produce representatives of $Q_0$ cohomology classes is to define the extended algebra $\bC[\theta^1,\theta^2,\zeta_3,d\theta^1,d\theta^2,d\zeta_3]$, equipped with the de Rham operator $d$. We can thus define the generating field $d\Phi$, which transforms as 
\begin{equation}
    Q_0 d\Phi = [\Phi, d\Phi]\,.
\end{equation}
Accordingly, 
\begin{equation}
    \frac{1}{\hbar n} \Tr (d\Phi)^n
\end{equation}
is $Q_0$-closed. It is also $d$-closed, and we can expand it into a basis of closed forms in the $\theta^1$,$\theta^2$ and $\zeta_3$ variables. The coefficients of the expansion will be a basis of $\HC^\bullet(\bC[\theta^1,\theta^2,\zeta_3])$.

\subsection{The Global Symmetry Algebra at tree-level}
At tree-level, the modes in the global symmetry algebra $\fL_0$ act on other single-trace operators by a single Wick contraction and thus by mapping an adjoint field to a sequence of adjoint fields:
\begin{equation}
    o:\quad \Phi \to \{o|\}_0 +\{o|\Phi\}_1 + \{o|\Phi,\Phi\}_2+ \cdots  \, .
\end{equation}
For example, the modes of $\frac{1}{n \hbar}\Tr X^n$ acts as 
\begin{equation}
   \left(\frac{1}{n \hbar}\Tr X^n\right)_k:  \quad Y(\zeta_3) \to \zeta_3^k X(\zeta_3)^{n-1}
\end{equation}
i.e. the maps
\begin{equation}
   \left\{\left(\frac{1}{n \hbar}\Tr X^n\right)_k | \theta_1 \zeta_3^{k_1}, \cdots, \theta_1 \zeta_3^{k_{n-1}}\right\}_{n-1} = \theta_2 \zeta_3^{k_1+ \cdots + k_{n-1} + k}\,.
\end{equation}
This type of transformations define elements in the $\HH^{\bullet}(\bC[\theta^1,\theta^2,\zeta_3],\bC[\theta^1,\theta^2,\zeta_3])$ and can thus be directly compared to normalizable local operators in the world-sheet theory (B-model on $\bC^3$), i.e. to polynomial polyvector field in $\bC^3$. 

A natural perspective on this is that a mode of $\fL_0$ could be added to the BRST charge of the chiral algebra, leading to a deformation of the BRST transformation of $\Phi$ and thus of the $A_\infty$-algebra structure on $\bC[\theta^1,\theta^2,\zeta_3]$, which is an element of $\HH^{\bullet}(\bC[\theta^1,\theta^2,\zeta_3],\bC[\theta^1,\theta^2,\zeta_3])$.

We expect $\fL_0$ to actually correspond to the $\lambda \to 0$ limit of divergence-free holomorphic polynomial polyvector fields in $SL(2,\bC)$. These are the same as divergence-free holomorphic polynomial polyvector fields in $\bC^3$ which satisfy a certain growth condition at large $\zeta_3$. The latter condition can be removed by looking at all non-negative modes of single-trace operators, which define a Lie algebra at $\lambda \to 0$. The restriction to the modes to $\fL_0$ can be expressed geometrically by promoting $\zeta_3$ to a $\bC P^1$ coordinate and the whole geometry to $\cO(-1) \oplus \cO(-1) \to \bC P^1$ with coordinates $\zeta_1$, $\zeta_2$, $\zeta_1 \zeta_3$ and $\zeta_2 \zeta_3$. This is the natural geometry where $N$ D1 branes would reproduce the chiral algebra on the sphere.

The divergence-free condition is trickier to understand. It must be associated to the fact that not all possible deformations of the BRST charge should be expressible as modes of single-trace operators. We do not have a good Homological Algebra understanding of this condition beyond checking that it is satisfied by the images of the maps
\begin{equation}
    \HC^\bullet(\bC[\theta^1,\theta^2,\zeta_3]) \to \HH^\bullet(\bC[\theta^1,\theta^2,\zeta_3],\bC[\theta^1,\theta^2,\zeta_3])
\end{equation}
given by taking the modes of single-trace operators. We leave this question as an open problem.

\subsection{The mesons at $\lambda \to 0$.}
Neumann branes $P$ in $\bC^3$ have junctions to $B$ described by $\bC[\zeta_3]$. We identify the (anti)fundamental matter fields $I(\zeta_3)$ and $J(\zeta_3)$ we introduced before as world-volume fields associated to these boundary-changing local operators. 

As discussed at greater length in the next section, the $Q_0$ cohomology of mesonic operators is dual to the derived tensor product 
\begin{equation}
    \bC[\zeta_3]\otimes_{\bC[\theta^1,\theta^2,\zeta_3]} \bC[\zeta_3] \simeq \bC[\zeta_1,\zeta_2,\zeta_3]\,.
\end{equation}
It is identified with a space of local operators on $P$ which can enter a disk correlation function with a $B$ segment, i.e. functions of the form $\zeta_1^a \zeta_2^b\partial^n \delta(\zeta_3)$. This matches the $\cM$ tower of mesons.

Modes of the open symmetry algebra $\fP_0$ map naturally into the (derived) endomorphisms of $\bC[\zeta_3]$ as a $\bC[\theta^1,\theta^2,\zeta_3]$
module, and coincide with polynomial vertex operators $\bC[\zeta_1, \zeta_2, \zeta_3]$ on $P$. The restriction to modes in the correct range gives holomorphic functions which extend to $\cO(-1) \oplus \cO(-1) \to \bC P^1$. 

The action of $\fL_0$ onto $\fP_0$ gives a perspective on matching $\cL_0$ to polyvector fields which we have seen extends nicely to non-zero $\lambda$: 
\begin{enumerate}
    \item Modes of operators in the $\cB_{a,b}$ tower are directly mapped by the mesonic part of the BRST differential to modes in $\fP_0$.
    \item Modes of operators in the $\cA_{a,b}$ and $\cD_{a,b}$ towers act as derivations on $\fP_0$.
    \item Modes of operators in the $\cB_{a,b}$ tower added to the BRST differential will modify the product structure constants of $\fP_0$. 
\end{enumerate}
These three statements map $\cL_0$ into the Hochschild cohomology $\HH^{\bullet}(\fP_0,\fP_0)$.

\subsection{Determinants at $\lambda \to 0$.}
The basic giant graviton brane at $\lambda \to 0$ is a probe brane $D$ which has Dirichlet b.c. $\zeta_1 |_\partial= \zeta_3|_\partial=0$ and Neumann $\theta^2|_\partial=0$. The junctions to $B$ are controlled by $\bC[\theta^1]$. The $m$ parameter can be introduced by setting $\zeta_1 |_\partial=-m$ instead, and $u$ by fixing a linear combination $(\zeta_1 + u \zeta_2)|_\partial$.

As we have seen, the integral defining determinant operators can be presented in a BV formalism. The auxiliary fermions $\psi$ and $\bar \psi$ and their anti-fields $u$ and $\bar u$ enter in a BV action (\ref{eq:detBV}) which generalizes $\bar \psi X \psi$. The BRST differential is consistent with the identification of $\psi + \theta^1 u$ and $\bar \psi + \theta^1 \bar u$ as the open string fields stretched between the branes. \footnote{Several aspects of the construction and computation of open modifications described in this Section originally emerged in unpublished work with Kasia Budzik.}

The computation of the tree-level BRST cohomology of determinant modifications reduces to the computation of the dual to the derived tensor product
\begin{equation}
    \bC[\theta_1] \otimes_{\bC[\theta^1,\theta^2,\zeta_3]}\bC[\theta_1] \,,
\end{equation}
which coincide with the space $\bC[\theta^1, \zeta^2, \theta_3]$ of polynomial boundary local operators on $D$. The $\zeta^2$ variable is clearly dual to $Y$ insertions. An explicit description of the cohomology of determinant modifications goes beyond the scope of our discussion.

Open modifications, instead, are dual to 
\begin{equation}
    \bC[\zeta_3] \otimes_{\bC[\theta^1,\theta^2,\zeta_3]}\bC[\theta_1] \,,
\end{equation}
which is the space $\bC[\zeta_2]$ of junctions between $P$ and $D$. These reproduce the $I Y^n \psi$ open modifications of determinant operators we employed in explicit calculations. We now give a dg-TFT interpretation of these calculations.

In order to study a non-trivial saddle for a collection of determinant operators, we would start from a collection of branes 
\begin{align}
    \zeta_3|_\partial &= z_i \cr
    \zeta_1|_\partial + u_i \zeta_2|_\partial &=0\,.
\end{align}
We can describe each D-brane as a deformation of 
a basic D-brane $D$ by 
\begin{equation}
   \cI^\partial_{z,u} =  z_i \theta_3 + u_i \zeta_2 \theta_1\,.
\end{equation}
Turning on a general $\rho$ corresponds to a further boundary interaction $\cI^\partial_\rho =  \rho \theta_1$. The separation of the D-branes in the $\zeta_3$ direction obstructs that, via a BRST anomaly
\begin{equation}
    \{\cI^\partial_{z,u},\cI^\partial_\rho \}= (z_i - z_j) \rho_{ij} \theta_1 \theta_3\,.
\end{equation}
Once we turn on $\lambda$, we expect this anomaly will cancel against an extra $\lambda(u_i-u_j) (\rho^{-1})_{ij}$, leading to the saddle equations. 
This can be made concrete by computing the deformation of the $\bC[\theta^1, \zeta^2, \theta_3]$ algebra due to the bulk back-reaction,
e.g. by computing the planar corrections to the space of modifications or, more indirectly, as we did originally: compute the deformation of $\bC[\zeta_2]$ to a $\fP_\lambda$ module and use it to describe the deformation of $D$. 

\section{Two-dimensional chiral gauge theories at large $N$}\label{sec:HAcalculations}
Formally, a two-dimensional chiral gauge theory is defined by coupling a matter chiral algebra with Kac-Moody symmetry $G$ to a 2d chiral gauge field, i.e. a gauge field which only has a $(0,1)$ form component. Upon gauge-fixing, this definition results in a 2d chiral algebra presented as the cohomology of a certain BRST complex we discuss below. The BRST complex is well-defined only if the matter Kac-Moody currents have a specific level which cancels a one-loop gauge anomaly.

We have reviewed the supersymmetric chiral $SU(N)$ gauge theory defined by taking bosonic matter in two copies of the adjoint representation.
This is a protected sub-sector of four-dimensional ${\cal N}=4$ $SU(N)$ gauge theory and also the world-volume theory of $N$ D$1$ branes in the $\bC^3$ B-model, up to the decoupled $U(1)$ center-of-mass degrees of freedom. 

Four-dimensional ${\cal N}=2$ SCFTs also have protected subsectors \cite{Beem:2013sza,Beem:2017ooy,Bonetti:2016nma,Dedushenko:2019mnd}. Four-dimensional gauge theories with gauge group $G$
and matter transforming in a symplectic representation $R$ have a protected subsector consisting of a 2d chiral gauge theory with gauge group $G$ and bosonic matter in representation $R$. 

The anomaly cancellation condition for 4d SCFTs or 2d chiral gauge theories with bosonic matter only is rather restrictive. 
For example, quiver gauge theories with special unitary gauge group must be modelled on affine ADE quivers. The corresponding ``ADE'' chiral gauge theories appear on the world-volume of 
$N$ D$1$ branes in a B-model with target space $\bC \times \frac{\bC^2}{\Gamma}$, up to $U(1)$ factors in the gauge group. Twisted Holography relates such ADE chiral gauge theories to the B-model on $SL(2,\bC)/\Gamma$.

The definition of 2d chiral gauge theory allows the introduction of fermionic matter fields as well, transforming into an orthogonal representation $R_f$ of $G$. Fermions and bosons contribute to the anomaly with opposite signs and thus the choice of gauge group and matter representation is much less constrained. For example, we could (and will) consider an $SU(N)$ gauge theory with $2n+2$ adjoint bosons and $2n$ adjoint fermions. 

The theories with fermionic matter do not appear to be protected sectors of 4d SUSY theories. Any statement we may derive about the 't Hooft expansion of such theories will not encode a protected part of a standard holographic correspondence. We are also not aware of 3d CY geometries such that D$1$ branes would support such chiral theories. A 't Hooft analysis of general 2d chiral gauge theories will thus likely lead us to unexplored corners of String Theory, {\it if} 't Hooft completeness holds for this class of gauge theories.

\subsection{The large $N$ expansion of 2d chiral gauge theories}

For conciseness, in the remainder of this section we will take the fields to be a collection of $N \times N$ matrices, which could be organized further into adjoint or bifundamental representations of one or more $U(k_i N)$ groups with $k_i \in \bZ$. There are important differences between $U(N)$ and $SU(N)$, but they are immaterial in the planar limit. Anomaly cancellation may also require the addition of order $1$ fields which transform as $SU(N)$ scalars. These are also immaterial in the planar limit. (Anti)fundamental degrees of freedom will be discussed separately.

This assumption could be easily relaxed to allow for more general ranks $N_i$, with minimal changes to our formulae below. Generalizations to $SO(k_i N)$ and $Sp(2 k_i N)$ gauge groups are also possible, as well as matter in various two-index representations of the gauge groups, but require some considerations about unorientable worldsheets. In these cases, local operators are no longer computed by cyclic cohomology but instead by the so-called Dihedral cohomology \cite{LODAY198893,loday2013cyclic}. We briefly discuss these cases in Appendix \ref{appendix:ortho_symp}. 

We organize correlation functions, OPEs, etc. in a 't Hooft expansion just as we did in the standard example. 

\subsection{A hidden algebra}
We denote the Grassmann parity of a symbol $x$ as $|x|$ and its ghost number as $\mathrm{gh}[x]$. In the absence of free fermions, the Grassmann parity of fields coincides with the ghost number modulo $2$. If free fermions are present, we instead need to allow the Grassmann parity to be distinct from the ghost number and thus work with graded super vector spaces. 

The free fields we work with include: 
\begin{itemize}
    \item A collection of $bc$ systems with scaling dimensions $\Delta_c =0$ and $\Delta_b=1$ and ghost numbers $1$ and $-1$ respectively. Both sets of fields are fermionic,\footnote{It may be possible to extend the formalism to include super-groups. The ghosts corresponding to fermionic generators in $G$ would then be bosons.} i.e. $|c|=|b|=1$. 
    \item A collection of symplectic bosons and free fermions with scaling dimension $\frac12$ and ghost number $0$. We denote them collectively as $Z$.  
\end{itemize}

We now discuss a crucial observation: the entire field content and BRST symmetry of a 2d chiral gauge theory built from $N \times N$ matrices of free fields can be encoded into a {\it 2d-cyclic \footnote{Here, $2d$ refers to the degree of the cyclic pairing, not the dimension of the algebra}, finite-dimensional, graded associative super-algebra} $A$. Vice versa, any such algebra defines a 2d chiral gauge theory of $N \times N$ matrices at tree level. An anomaly cancellation condition is required at one loop.

The super vector space $A$ can be introduced as a way to package all of the fields into a single generating field $\Phi$, an $N \times N$ matrix valued in $A$ with $|\Phi|=\mathrm{gh}[\Phi]=1$:
\begin{equation}
	\Phi(z) \equiv a_{0,u} c^u(z) + a_{1,\alpha} Z^\alpha(z) + a_{2}^u b_u(z)\,.
\end{equation}
Here the $u$ and $\alpha$ indices run over the collections of $bc$ ghosts, symplectic bosons and free fermions defining the chiral algebra under consideration. Accordingly, we denoted as $a_{0,u}, a_{1,\alpha}, a_{2}^u$ a basis of 
\begin{equation}
	A = A_0 \oplus A_{1} \oplus A_{2}\,,
\end{equation} 
where $A_i$ are the components of $A$ of ghost number $i$. The scaling dimension of different components of $\Phi$ can be encoded in an operator $\Delta$ acting on $A_i$ as $\frac{i}{2}$. 
 
We can also denote individual components of $\Phi$ collectively as $\phi^a \in \mathfrak{gl}(N)$ and the basis elements of $A$ as $a_a$:
\begin{equation}
    \Phi(z) = a_a \phi^a\,.
\end{equation}
In practice, we will do our best to minimize any references to individual component of $\Phi$ except in examples. Working with the generating field $\Phi$ has considerable conceptual and practical advantages.

The OPE of elementary fields can be written concisely as 
\begin{equation}\label{eq_al:ope_abstract}
	\Phi^i_j(z) \otimes \Phi^k_t(w) \sim  \delta^i_t \delta^k_j\hbar \frac{\eta}{z-w}\,,
\end{equation}
where we wrote explicitly the $U(N)$ indices $i,j,k,t$. In the following we will leave $U(N)$ indices implicit when possible.

The numerator $\eta \in A \otimes A$ is a graded-symmetric tensor which collects the two-point functions. It has non-zero components
\begin{equation}
	\eta_u^v = \delta_u^v \qquad \qquad \eta^{\alpha \beta} = \omega^{\alpha \beta}\,.
\end{equation}
Expanding out the concise OPE, we recover the familiar OPE of 
a collection of $bc$ systems and symplectic bosons/free fermions:
\begin{align}
    b_u(z) c^v(w) &\sim \hbar\frac{\delta_u^v}{z-w} \cr
    Z^\alpha(z) Z^\beta(w) &\sim \hbar\frac{\omega^{\alpha \beta}}{z-w}\,.
\end{align}
We could also write 
\begin{equation}\label{eq_al:ope_with_ab_indices}
	\phi^a(z) \phi^b(w) \sim \hbar \frac{\eta^{ab} }{z-w}
\end{equation}
with $\eta = \eta^{ab} a_a \otimes a_b$.

In concrete OPE calculations, we will encounter expressions where $\eta^{ab}$ is contracted with pairs of $A$ basis elements scattered through the expression. We find it useful to borrow the Sweedler notation from the theory of Hopf algebras and write $\eta = \eta^{(1)} \otimes \eta^{(2)}$ as a stand-in for the full expansion in a basis for the tensor product. If $l$ number of contractions occur, we use pairs $\eta_i^{(1)} \otimes \eta_i^{(2)}$ with $i = 1, \cdots, l$ to keep track of the different contractions. 

We denote the (graded symmetric) pairing dual to $\eta$ simply as $(a_a a_b) \in \bC$, so that 
\begin{align}
    \eta^{(1)}(\eta^{(2)} a) &= a \cr
    (a \eta^{(1)})\eta^{(2)} &= a
\end{align}
for all $a \in A$.

We can now write concise expressions for the ghost number current and the stress tensor. Notice that $\Tr(\Phi \Phi)\equiv (-1)^{|\phi^b||a_a|} \, \Tr \phi^a \phi^b (a_a a_b) = 0$ because the symmetry properties of $(a_a a_b)$, when non-zero, are opposite to these of $\phi^a \phi^b$. 

The ghost number current can be written as 
\begin{equation}
    J_{\mathrm{gh}} = \frac{1}{\hbar}\Tr(\Phi \Delta \Phi) = \frac{1}{\hbar} \Tr c^u b_u\,.
\end{equation}
In particular, 
\begin{equation}
    J_{\mathrm{gh}}(z) \Phi(w) \sim \frac{1-2\Delta}{z-w}\, \Phi(w)\,.
\end{equation}
We can also write the Stress Tensor $T(z)$ as 
\begin{equation}
    T = \frac{1}{2\hbar}\Tr(\partial \Phi \Phi) +\frac12 \partial J_{\mathrm{gh}}(z)=\Tr(\partial \Phi \Delta \Phi)\,.
\end{equation}

\subsection{An algebra structure from the BRST differential}

We will discuss the BRST current momentarily. At first, we can focus on the tree level part $Q_0$ of the BRST transformations, i.e. the part involving a single Wick contraction. The action of $Q_0$ maps a field to a sum of (matrix) products of fields:
\begin{align}\label{eq:al_tree-level_BRST}
	Q_0 c^u &= f^u_{vw} c^v c^w \cr
	Q_0 Z^\alpha &= f^\alpha_{v \beta} \left[c^v Z^\beta-Z^\beta c^v \right] \cr
	Q_0 b_u &= f^w_{vu} \left[c^v b_w+b_w c^v \right] + f_{u\alpha \beta} Z^\alpha Z^\beta\,.
\end{align} 
It is easy to see that the structure constants on the right hand side equip $A$ with the structure of an associative algebra. The $c$ ghost for the diagonal $U(N)$ gauge action equips $A$ with an unit. 

The algebra structure preserves the weight and ghost number. It allows us to write a simple transformation rule 
\begin{equation}
	Q_0 \Phi(z) = \Phi(z) \Phi(z)
\end{equation}
extended by the Leibniz rule to products of fields. Associativity is closely related to $Q_0^2=0$ (remember that $\Phi$ is fermionic):
\begin{equation}
	Q_0^2 \Phi(z) = (\Phi(z) \Phi(z))\Phi(z) - \Phi(z)(\Phi(z) \Phi(z))\,.
\end{equation}

The BRST current is a cubic expression in the elementary fields. It has a very concise expression
\begin{equation}
    J_{\mathrm{BRST}} = \frac{1}{3 \hbar} \Tr (\Phi\Phi\Phi)\,.
\end{equation}
Here we denote as $(\bullet)$ a linear map $A \to \bC$ such that 
the composition $(\bullet \bullet)$ with the product on $A$ coincides with the pairing dual to $\eta$. In particular, $(\bullet)$ is a graded trace supported on $A_2$. 

We will denote an associative algebra equipped with a trace with these properties as a 2d-cyclic algebra. Its relationship with the standard definition of Calabi-Yau algebra is reviewed in Appendix \ref{appendix:CY}. Conversely, any such an algebra $A$ can be used to define a free chiral algebra equipped with a BRST differential of this form. 

The full BRST differential acting on general local operators includes both a $Q_0$ term with a single Wick contraction and a 1-loop term with two Wick contractions. There is a potential 1-loop BRST anomaly which further constrains the form of $A$.

\subsection{Back to $\bC^3$}
As an example, we consider the case of the supersymmetric chiral gauge theory with gauge group $U(N)$. The collection of matrix-valued fields consists of a single $bc$ system and a single set of symplectic bosons $X$, $Y$. These can be collected into a generating field
\begin{equation}
	\Phi(z) = c(z) + \theta_1 X(z) + \theta_2 Y(z) + \theta_1 \theta_2 b(z)  
\end{equation}
valued in the algebra $A = \bC[\theta_1, \theta_2]$ of polynomials in two anti-commuting fermionic variables $\theta_\alpha$. 

We have already encountered this parametrization in Section \ref{sec:HAcalculations}. It identifies $\Phi(z)$ with the open string field for the stack of D$1$ branes supported on $\bC \in \bC^3$. 
In particular, the algebra $A$ is simply the algebra of boundary local operators for the Dirichlet boundary conditions  in the transverse directions. 

The scaling dimension operator can be written as 
\begin{equation}
	\Delta = \frac12 \theta_\alpha \partial_{\theta_\alpha}\,.
\end{equation}
We can write the pairing as 
\begin{equation}
	\eta = (\theta_1 - \theta_1')(\theta_2 - \theta_2')\,,
\end{equation} 
where the unprimed and primed variables denote the two factors in the tensor product $A \otimes A$, i.e. we identified 
\begin{equation}
    \bC[\theta_1, \theta_2] \otimes \bC[\theta_1, \theta_2] = \bC[\theta_1, \theta_2,\theta'_1, \theta'_2]\, .
\end{equation} 
The corresponding trace is 
\begin{equation}
    (\theta_1 \theta_2)=1 \, ,
\end{equation}
and $0$ otherwise.

\subsection{The ADE chiral algebra and the B-model}
Recall that the ADE quiver has nodes labelled by the representations of the discrete group $\Gamma$ and edges controlled by the tensor product with the fundamental representation. For example, for $\Gamma = \mathbb{Z}_k$ we have one-dimensional representations $R_i$ with $0\leq i<k$ modulo $k$  
and a necklace quiver. 

The ADE gauge theory has ranks equal to the dimensions $\mathrm{dim}R_i$ and matter fields $(X_e, Y_e)$ for each edge $e$. It is easy to recognize that 
\begin{equation}
    A = \mathrm{End}(\bC[\Gamma])[\theta^1, \theta^2]^\Gamma
\end{equation}
coincides with the algebra of local operators on $D_{\mathrm{tot}}$.

Indeed, the ADE chiral algebra is the world-volume theory of $N$ $\bC \times D_{\mathrm{tot}}$ branes in the B-model with target 
\begin{equation}
    \bC \times \frac{\bC^2}{\Gamma}\, .
\end{equation}

The trace on $A$ is simply the matrix trace combined with the 
trace on $\bC[\theta^1, \theta^2]$.

\subsection{A small generalization}
The notion of 2d-cyclic associative algebra can be generalized to 
that of 2d-cyclic $A_\infty$ algebra. Schematically, we may imagine a very general tree-level BRST transformation rule: 
\begin{equation}
    Q_0 \Phi = \{\Phi\} +  \{\Phi,\Phi\}+  \{\Phi,\Phi,\Phi\}+ \cdots
\end{equation}
where 
\begin{equation}
    \{\bullet, \cdots, \bullet\}: A^{\otimes n} \to A
\end{equation}
are multi-linear maps which change the overall ghost number by $2-n$. These maps generalize the associative product encountered in the rest of this section. Essentially by definition, they equip $A$ with the structure of an $A_\infty$ algebra. The existence of a BRST current 
\begin{equation}
   J_{\mathrm{BRST}} =\frac12 \Tr(\{\Phi\}\Phi) +  \frac13\Tr(\{\Phi,\Phi\}\Phi)+ \frac14 \Tr(\{\Phi,\Phi,\Phi\}\Phi)+ \cdots
\end{equation}
such that structure constants are cyclic symmetric make $A$ into a 2d-cyclic $A_\infty$ algebra. We also note, following \cite{Kontsevich2008NotesOA}, that a 2d-cyclic algebra is equivalent to a 2d Calabi-Yau algebra, which can be used to define an abstract dg-TFT.

\subsection{Algebras and branes}
Suppose now that we are given some abstract dg-TFT $T_2$ 
with a ghost number anomaly of $2$ and a D-brane $B$ with a finite-dimensional boundary (possibly $A_\infty$) algebra $A$ which admits a trace, i.e. gives finite disc correlation functions, and has the correct scaling properties. We have seen how such a D-brane can probe normalizable local operators and a category of D-branes in $T_2$ via Homological Algebra constructions applied to $A$.

We can combine $T_2$ and the B-model with target $\bC$ to make a world-sheet theory suitable to define a B-model-like String Theory. We can consider a stack of $N$ D-branes of the form $B \times \bC$ in that String Theory. By construction, the world-volume theory of such D-branes can be identified with the 2d chiral gauge theory we associated to $A$.

In such a situation, we would expect the 't Hooft expansion of the 2d chiral gauge theory to be dual to a modified String Theory, deformed by the back-reaction of the $N$ branes.

The question we explore in the rest of the paper is: can we characterize this back-reaction algebraically, even if $T_2$ does not have a sigma-model interpretation and BCOV theory is not available? Even better, 
can we somehow {\it define} the deformed String Theory if all we have is an algebra $A$ with the correct properties, perhaps by giving a dg-TFT description of the corresponding world-sheet theory? 

\section{Local operators at tree level and cyclic cohomology} \label{sec:closed}
General local operators are built as normal-ordered polynomials in the fields and their derivatives. The action of the tree-level differential $Q_0$ \eqref{eq:al_tree-level_BRST} does not change the number of derivatives present in a monomial. 
A useful warm-up is to consider the $Q_0$-cohomology of single-trace local operators which do not contain derivatives, analogous to the $\cA$ and $\cB$ towers in the canonical example. We will then characterize the whole cohomology of $\mathrm{Obs}_0$.

\subsection{The first tower}
\label{sec:first_tower}
We will now introduce a useful notation which allows us to express all calculations in terms of the 2d-cyclic algebra $A$. Consider an expression of the form 
\begin{equation}
\label{eq:towerone}
	\cO_\fc(z) \equiv \frac{1}{\hbar \ell(\fc)} \Tr (\fc|\Phi(z), \cdots, \Phi(z))\,,
\end{equation}
where $\fc$ denotes a cyclic-symmetric (with signs) multi-linear map 
\begin{equation}
	(\fc|\bullet, \cdots, \bullet): \left(A^{\otimes \ell(\fc)}\right)^{\mathbb{Z}_{\ell(\fc)}} \to \bC \,,
\end{equation} 
and $\ell(\fc)$ is the number of inputs in $\fc$.\footnote{The signs insure compatibility with the cyclicity of the trace. If we rotate the trace and bring the last entry to the beginning, we pay a Koszul price from passing the fields $\phi^a$ across each other. The Koszul parity of the $\phi^a$ is opposite to the Koszul parity of the $a_a$ elements, so we get a $-1$ factor for each pair of  {\it bosonic} elements in $A[[s]]$. 
% \al{Maybe it's worth saying that the maps $(C|\cdot,...,\cdot)$ 
%  that take an \textit{even} number of arguments pick up a minus sign when ``rotated", e.g. in the case of two bosonic arguments, $(C|a,b) = -(C|b,a)$). For $(C|)$'s with an odd number of arguments you can rotate freely; any minus signs you accrue come exclusively from (anti)commutation of $\phi$'s and $a$'s. For example, in the case $a,b,c$ bosonic, $(C|a,b,c) = (C|c,a,b) = (C|b,c,a)$}
 } By linearity, we can expand 
\begin{equation}
	\cO_\fc(z) \equiv \sum_{a_1, \cdots a_{\ell(\fc)}}\pm \frac{1}{\hbar |\fc|}  (\fc|a_{a_1}, \cdots, a_{\ell(\fc)})\Tr \phi^{a_1} \cdots \phi^{a_{\ell(\fc)}}(z)
\end{equation}
and recognize the matrix elements of $\fc$ as coefficients of a generic linear combination of single-trace local operators. The overall factor of $(\hbar \ell(\fc))^{-1}$ is introduced for later convenience. 
As these operators do not contain derivatives, they are manifestly quasi-primary operators in the chiral algebra. 

We can easily compute the action of $Q_0$: 
\begin{equation}
\label{eq:toweroneQ0}
	\cO_{Q_0\,\fc}(z) \equiv Q_0\,\cO_{\fc}(z) = \frac{1}{\hbar} \Tr (\fc|\Phi(z)\Phi(z), \cdots, \Phi(z))\, .
\end{equation}
Symmetrizing carefully, 
\begin{align}
\label{eq:cyclic}
	&(Q_0\,\fc|a_1,\cdots, a_{n+1}) = (\fc|a_1 a_2, \cdots, a_{n+1}) - (\fc|a_1,a_2 a_3, \cdots, a_{n+1})+ \cr 
	&+ (\fc|a_1,a_2, a_3 a_4, \cdots, a_{n+1}) + \cdots \pm (-1)^{n} (\fc|a_{n+1} a_1,a_2, \cdots, a_n)\,.
\end{align}
We recognize the differential defining the {\it cyclic cohomology} complex $\CC^\bullet(A)$ for $A$. This tower of local operators is thus labelled by classes in the cyclic cohomology $\HC^\bullet(A)$.

There is a small subtlety which we should address here. The space of local operators in the gauge theory should be built as the {\it relative} BRST cohomology: the ghost $c$ is only allowed to appear in local operators through its derivatives and $G$-invariance is imposed by hand. A naive calculation which ignores this point will produce some extra cohomology classes of scaling dimension $0$ built as polynomials in the $c$ ghosts, as well as the derivatives of these classes. That extra cohomology can be removed by hand, as it is the only cohomology in the sector with scaling dimension $0$.\footnote{If $A_0$ consists of the identity $1_A$ only, i.e. the gauge group is $U(N)$, a relative cohomology calculation only requires us to restrict to functions $\fc$ which vanish on $1_A$. This restriction defines the relative cyclic cohomology complex $\CC_{\mathrm{rel}}^\bullet(A)$. In a more general situation, the correct procedure would be to promote $A$ from an algebra to a category whose objects label individual gauge groups, as we saw in the ADE example. Then the definition of relative cyclic cohomology for a category automatically keeps track of the requirement of $G$-invariance. 
We leave this generalization implicit for conciseness. }

\subsection{The second tower}
\label{sec:second_tower}
Next, we can look at operators involving a single derivative, analogous to the $\cC$ and $\cD$ towers in the standard example (and first derivatives of the other two):
\begin{equation}
\label{eq:towertwo}
	\cO_\fh(z) \equiv \frac{1}{\hbar} \Tr (\fh|\partial \Phi(z);\Phi(z), \cdots,\Phi(z))\,,
\end{equation}
where the multilinear map $\fh$ is not cyclic symmetric. As
\begin{equation}
	Q_0 \partial \Phi = \Phi \partial \Phi + \partial \Phi \Phi
\end{equation}
we have 
\begin{align}
\label{eq:towertwoQ0}
	\cO_{Q_0\,\fh}(z) &\equiv Q_0\,\cO_{\fh}(z) =  \frac{1}{\hbar} \Tr (\fh|\Phi \partial \Phi, \cdots, \Phi(z)) +  \frac{1}{\hbar} \Tr (\fh|\partial \Phi \Phi, \cdots, \Phi(z)) + \cr &-\frac{1}{\hbar} \Tr (\fh|\partial \Phi, \Phi(z)\Phi(z),\cdots, \Phi(z)) + \cdots \, .
\end{align}
i.e. 
\begin{align}
\label{eq:hoch}
	&(Q_0\,\fh|a_1;\cdots, a_{n+1}) = (\fh|a_1 a_2; \cdots, a_{n+1}) - (\fh|a_1;a_2 a_3, \cdots, a_{n+1})+ \cr 
	&+ (\fh|a_1,a_2, a_3 a_4, \cdots, a_{n+1}) + \cdots \pm (-1)^{n} (\fh|a_{n+1} a_1,a_2, \cdots, a_n)\,.
\end{align}
Notice that this differential is identical in form to the one we wrote for the cyclic cohomology complex,
but it acts here on maps which are not cyclic invariant. It defines the {\it Hochschild cohomology} complex $\CH^\bullet(A,A^\vee)$ valued in the dual $A^\vee$ of $A$.\footnote{As the pairing identifies $A$ and $A^\vee$, this is essentially the same as the standard Hochschild cohomology complex $\CH^\bullet(A)$. The physical meaning of the latter, though, is slightly different.} 
This tower of local operators is thus labelled by classes in the Hochschild cohomology $\HH^\bullet(A,A^\vee)$. Again, we can avoid the issue of relative vs absolute BRST cohomology by 
restricting to cohomology classes of scaling dimension greater than $1$.

Some of the operators we have identified are actually derivatives of operators in the first tower. The operation of taking a derivative, i.e. the $L_{-1}$ Virasoro generator, gives a standard morphism $I: \CC^\bullet(A)\to \CH^\bullet(A,A^\vee)$, which embeds the space of cyclic maps into all possible maps. 

We can look for quasi-primary operators of the form $\cO_\fh(z)$ by looking at the action of the $L_1$ Virasoro generator\footnote{Here we mean the action of the $L_1$ centered at the location of the field, $L_1 \Phi(z) = \oint dw (w-z)^2 T(w) \Phi(z)$, as is usual in determining if an operator is (quasi-)primary.}. At tree level, $L_1$ simply maps $\partial \Phi \to -2\Delta \Phi$ (and $\Phi \to 0$). Accordingly, 
\begin{equation}
	L_1 \cO_\fh(z) = -\frac{2}{\hbar} \Tr (\fh|\Delta\Phi(z);\Phi(z), \cdots,\Phi(z))\,.
\end{equation}
We thus encounter a map $\CH^\bullet(A,A^\vee)\to \CC^\bullet(A)$ which composes $\fh$ with $(-2\Delta)$ at the first argument and then applies a  (graded) cyclic symmetrization to the result. The kernel of this map gives the space of quasi-primary operators in the second tower. 

Operators of the form $\mathcal{O}_{\mathfrak{h}}$ should either be a derivative of the first tower or generate a new Verma module (a quasi primary). We conclude that we have an equivalent characterization of the quasi-primary operators in the second tower, as the quotient $\HH^\bullet(A,A^\vee)/I(\HC^{\bullet}(A))$.

The morphism $I$ is part of Connes Periodicity long exact sequence. The long exact sequence also involve certain ``periodicity maps'' $S$ which controls the kernel of $I$. The map $L_1$ above certifies that $S$ is trivial as long as we ignore operators of scaling weight $0$ and thus the long exact sequence collapses to a collection of short exact sequences, identifying the quotient $\HH^n(A,A^{\vee})/\HC^n(A)$ with $\HC^{n-1}(A)$ by Connes $B$ operator. \footnote{More precisely, the degree (scaling dimension) of cyclic cochain induced from the degree of $A$ is preserved by the differential. We can split the cyclic cohomology according to the degree
	\begin{equation}
		\HC^{\bullet}(A) = \bigoplus_{w \geq 0}\HC^{\bullet}(A)^{(w)}\,.
	\end{equation}
	In particular, the degree zero part $\HC^{\bullet}(A)^{(0)}$ is the same as the cyclic cohomology for $A_0$, $\HC^{\bullet}(A)^{(0)} = \HC^{\bullet}(A_0)$. 
	
	We show in Appendix \ref{appendix:vanish_S} that the Connes' periodicity map $S$ vanishes on the  positive degree part of cyclic cohomology $\HC^{\bullet}(A)^{(\geq 1)}$. Therefore, the Connes' long exact sequence reduces into a collection of short exact sequences
	\begin{equation}\label{short_ex_Hoch_cyc}
		0\longrightarrow \HC^{n}(A)^{(\geq 1)} \overset{I}{\longrightarrow} \HH^n(A,A^{\vee})^{(\geq 1)} \overset{B}{\longrightarrow} \HC^{n-1}(A)^{(\geq 1)} \longrightarrow 0\,.
	\end{equation}
	As a result, we have the following isomorphism
	\begin{equation}
		\HH^n(A,A^{\vee})^{(\geq 1)}/\HC^{n}(A)^{(\geq 1)} \overset{B}{\cong} \HC^{n-1}(A)^{(\geq 1)}\,.
	\end{equation}}

This identification suggests that we can map an element of the first tower of length $n-1$ to a quasi-primary in the second tower of length $n$. When a dual geometric picture is available, this corresponds to solving $\partial^{-1}\alpha$ for a divergence free vector $\alpha$. Here, we can solve it with the help of the stress tensor and the cup product on Hochschild cohomology. The stress tensor $T$ is a canonical member of the second tower of quasi-primary local operators. It corresponds to a function 
\begin{align}
	(T|  a_{0,v}; a_{2}^u) &= \delta^u_v \cr 
	(T|  a_{1}^\alpha;  a_{1}^\beta) &= \frac12 \omega_{\alpha \beta}
\end{align}
i.e. 
\begin{equation}
    (T|a;b) = (a \Delta b)\,.
\end{equation}
The cup product is conventionally defined on the Hochschild complex $\CH^{\bullet}(A,A)$ as follows
\begin{equation}\label{eq:cup}
    (f\cup g) (a_1,,\dots,a_{n+m})= f(a_1,\dots,a_n)g(a_{n+1},\dots,a_{n+m})
\end{equation}
for $f\in \CH^{n}(A,A),g\in \CH^{m}(A,A)$. We can translate this operation to $\CH^{\bullet}(A,A^\vee)$ through the identificatioin $A\cong A^{\vee}$. We find that, given a cyclic map $\mathfrak{c}\in \HC^{n}(A)$, its cup product with the stress tensor $\mathfrak{c}\cup T$ is given by
\begin{equation}\label{eqn:Hoch_to_cyc}
    (\mathfrak{c}\cup T)(a_1,\dots,a_{n+1}) = (\mathfrak{c}\mid a_{n+1}\Delta a_1,a_2,\dots,a_n)\,.
\end{equation}
Using the expression for $B$ \cite{loday2013cyclic}, we can check that the following identity holds
\begin{equation}
    B(\mathfrak{c}\cup T) = (-1)^{|\mathfrak{c}|}(\Delta_{\mathfrak{c}}+1)\mathfrak{c}\,.
\end{equation}
The above identity also follows from the fact that $(\HH^{\bullet}(A,A^{\vee}),B,\cup,\{,\})$) forms a BV algebra \cite{tradler2002bv}, analogous to the BV structure on Polyvector fields. We have
\begin{equation}
    B(\mathfrak{c}\cup T) = B(\mathfrak{c})\cup T + (-1)^{|\mathfrak{c}|}\mathfrak{c}\cup BT + (-1)^{|\mathfrak{c}|}\{\mathfrak{c},T\}\,.
\end{equation}
Then (\ref{eqn:Hoch_to_cyc}) follows from $\{\mathfrak{c},T\} = \Delta_{\mathfrak{c}}\mathfrak{c}$, $B(T) = 1$. The map $\mathfrak{c} \to \mathfrak{c}\cup T$ provides for us the identification between cyclic cohomology element $\HC^{\bullet-1}(A)$ and quasi-primary of the second tower $\HH^\bullet(A,A^\vee)/\HC^{\bullet}(A)$. For example, the stress tensor $T$ itself can be thought of as coming from the map $(\bullet): A \to \bC$ under this identification. 

\subsection{Operators with any number of derivatives}
Next, we will adopt a notation which allows us to deal transparently with derivatives of fields. The expression $\Phi(z+s) \in A[[s]]$ is a useful generating function for derivatives of $\Phi(z)$:
\begin{equation}
	\Phi(z+s) = \sum_{n=0}^\infty \frac{s^n}{n!} \partial_z^n \Phi(z)\,.
\end{equation}
The super vector space $A[[s]]$ plays the role of $V$ from the general discussion in the Introduction: it is dual to the collection of ``letters'' 
$\frac{1}{n!}\partial_z^n \phi^a$ which can occur in a single-trace local operator.

We can denote single-trace operators built from $\Phi$ and its derivatives concisely as 
\begin{equation}
	\cO_C(z) = \frac{1}{\hbar \ell(C)} \Tr (C|\Phi(z+s), \cdots, \Phi(z+s))
\end{equation}
with $(C|\cdots)$ defined as a multi-linear map 
\begin{equation}
	(C|\bullet, \cdots, \bullet): \left(A[[s]]\otimes \cdots \otimes A[[s]]\right)^{\mathbb{Z}_{\ell(C)}} \to \bC \,.
\end{equation} 
Explicitly, 
\begin{align}
	\cO_C(z) &= \frac{1}{\hbar \ell(C)} \Tr \left(C\Big|\,a_{a_1} \frac{s^{n_1}}{n_1!} \partial^{n_1} \phi^{a_1}(z), \cdots, a_{a_{|C|}} \frac{s^{n_{\ell(C)}}}{n_{\ell(C)}!} \partial^{n_{|C|}} \phi^{a_{|C|}}(z)\right) = \cr
 &= \pm \,\frac{ 1}{\hbar \ell(C)} \left(C\Big|\,a_{a_1} \frac{s^{n_1}}{n_1!} , \cdots, a_{a_{|C|}} \frac{s^{n_{\ell(C)}}}{n_{\ell(C)}!} \right)\Tr \partial^{n_1} \phi^{a_1}(z) \cdots \partial^{n_{\ell(C)}} \phi^{a_{\ell(C)}}(z) \, .
\end{align}
Hence the matrix elements of $C$ are essentially the coefficients in a general linear combination of single-trace operators built from $\ell(C)$ fields.

For example, the stress tensor
\begin{equation}
	T = -\frac{\omega_{\alpha \beta}}{2 \hbar} \Tr Z^\alpha \partial Z^\beta-\frac{1}{\hbar} \Tr b_u \partial c^u
\end{equation}
corresponds to a function $(T|\bullet, \bullet)$ with several non-zero entries 
\begin{align}
	(T|  a_{0,v} s, a_{2}^u) &= \delta^u_v \cr 
	(T|  a_{2}^u,  a_{0,v} s) &= -\delta^u_v \cr 
	(T|  a_{1}^\alpha s,  a_{1}^\beta) &= \frac12 \omega_{\alpha \beta}\cr 
	(T|  a_{1}^\alpha,  a_{1}^\beta s) &= -\frac12 \omega_{\alpha \beta}\,.
\end{align}
The single-trace local operators form a representation of the global conformal symmetry algebra. The global conformal generators $L_{-1}$, $L_0$ and $L_1$ act as vector fields on $\Phi(z)$. The action on the functionals $C$ follows from the action of the same vector fields on $A[[s]]$. For example, $L_{-1}$ adds a derivative on $\Phi$, i.e. acts as $\partial_s$ on the generating function. The other generators also use the information about the weight:
\begin{align}
	L_{-1} &= \partial_s \cr
	L_0 &= s \partial_s + \Delta \cr
	L_1 &= -s^2 \partial_s - 2 s \Delta\,.
\end{align}
In particular, quasi-primary operators are described by functionals annihilated by $L_1$. 

As the scaling dimensions are non-negative half-integers (in particular, $L_0$ is diagonalizable) and the number of operators of a given dimension is finite, the $SL(2)$ representation theory is quite restrictive. Quasi-primaries of positive dimension generate Verma modules consisting of their derivatives. Operators of dimension $0$ generate Verma modules which can contain quasi-primaries of dimension $1$. This only affects the spurious cohomology classes of dimension $0$ built from $c$ ghosts only.  

The tree-level BRST differential is easily identified with the differential for the cyclic cohomology complex $\CC^\bullet(A[[s]])$, which can be organized by the total number of derivatives appearing in the operator. 
We have already characterize the two towers of cohomology with $0$ and $1$ derivatives. We will now argue that all other cohomology consists of derivatives of operators in the two towers. 

\subsection{A cohomology computation}
It is useful to go back to our definition of the differential $C \to Q_0 C$. As we saw in the analysis of the second tower, the explicit formula for $Q_0$ maps cyclic-invariant maps to cyclic-invariant maps, but also makes sense on generic maps and defines the Hochschild cohomology complex $\CH^\bullet(A[[s]],A[[s]]^\vee)$. 

An important property of Hochschild cohomology complex is a good behaviour under tensor product: the complex $\CH^\bullet(A \otimes A',M \otimes M')$ is quasi-isomorphic to the complex $\CH^\bullet(A,M) \otimes \CH^\bullet(A',M')$.
This quasi-isomorphism is non-trivial \cite{LE2014Hochtensor}. Applied to the case at hand, this gives a quasi-isomorphism of complexes
\begin{equation} \label{eq:hhtensor}
    \CH^\bullet(A[[s]],A[[s]]^\vee) \simeq \CH^\bullet(A;A^\vee) \otimes \CH^\bullet(\bC[[s]],\bC[[s]]^\vee) \, .
\end{equation}
In order to proceed further, we need to recall the Connes construction relating Hochschild cohomology and 
cyclic cohomology.

% It is conventional to express the relation in terms of dual complexes: cyclic and Hochschild homology complexes $\CC_\bullet(A[[s]])$ and $\CH_\bullet(A[[s]];A[[s]]^\vee)$.\footnote{Confusingly, $\CH_\bullet(A[[s]];A[[s]]^\vee)$ is often denoted as $\CH_\bullet(A[[s]])$ even though it is not the dual to $\CH_\bullet(A[[s]])=\CH_\bullet(A[[s]],A[[s]])$.}
% \kz{ I think it's more convenient to stick to cohomology. We still have the Connes exact sequence, and working in homology spoils the fact that cyclic is the kernel of $B$. In homology, it instead is the cokernel of $B$}

There is an odd map $B$ which acts on $\CH^\bullet(A[[s]];A[[s]]^\vee)$ in the opposite direction as $Q_0$, roughly given by inserting the unit $1$ to each slot of the map. It is nilpotent and anti-commutes with $Q_0$. The relationship between the cyclic and Hochschild cohomology can be made precise by considering a bi-complex defined using $B$ \cite{loday2013cyclic}. This bi-complex can be written as follows
\begin{equation}
    (\CH^\bullet(A[[s]];A[[s]]^\vee)[v], Q_{\CH}+vB)\,,
\end{equation}
where $v$ is a formal parameter of degree $2$. Then $\CC^\bullet(A[[s]])$ is quasi-isomorphic to the above complex. The cohomology at order $v^0$ is simply given by the kernel of $B$ on the Hochschild cohomology. In general, cohomology at higher order of $v$ are non-zero, but in our case, the algebra is assumed to have a scaling degree. According to the short exact sequence in (\ref{short_ex_Hoch_cyc}), the positive degree part of cyclic cohomology\footnote{Here we give $s$ degree $1$, so only constant $c$ modes are in the zero degree part.}, which is the part we are interested in, is the kernel of Connes' operator $B$. This construction commutes with the action of global conformal transformations, so we expect an analogous relation for quasi-primaries.

We can combine this analysis with the fact the Hochschild cohomology factor with tensor product (\ref{eq:hhtensor}).
\begin{equation}
    \CH^\bullet(A[[s]];A[[s]]^\vee) \simeq \CH^\bullet(A;A^\vee) \otimes \CH^\bullet(\bC[[s]];\bC[[s]]^\vee) \, ,
\end{equation}
with Connes operator $B + \partial_{\bC}$, where $\partial_{\bC} = \frac{\partial}{\partial (\partial_s)}\frac{\partial}{\partial s}$ is the divergence operator on the $s$ plane and $B$ is the Connes' operator on $\CH^\bullet(A;A^\vee)$. 

We thus have a quasi-isomorphism relating the cyclic complex $\CC^\bullet(A[[s]])$ and the following complex 
\begin{equation}
    (\CH^\bullet(A;A^\vee)\otimes \CH^\bullet(\bC[[s]];\bC[[s]]^\vee)[v],Q_{\CH} + v(B + \partial)) \, .
\end{equation} 
Moreover, the relative cyclic cohomology can be identified with the $(B + \partial)$-invariant part of $\HH^{\bullet}(A,A^{\vee})\otimes \HH^\bullet(\bC[[s]];\bC[[s]]^\vee)$.

The tower of cohomology with no derivatives corresponds to elements of the form 
\begin{equation}
    \beta \delta(s) \in \HH^{\bullet}(A,A^{\vee}) \delta(s)
\end{equation}
which are invariant under $B$, i.e. with the expected $\HC^{\bullet}(A,A^{\vee}) \delta(s)$ from Section \ref{sec:first_tower}. 

Operators with one derivative correspond to elements of the form 
\begin{equation}
    \beta \delta'(s) - B \beta \delta(s)\partial_s 
\end{equation}
and are in correspondence with $\HH^{\bullet}(A,A^{\vee}) \delta'(s)$.
This is the answer we computed before in Section \ref{sec:second_tower}. 

The novel step is a characterization of operators in $\mathrm{Obs}_0$ containing more derivatives: they take the form 
\begin{equation}
    \beta \delta^{(k+1)}(s) - B \beta \delta^{(k)}(s)\partial_s 
\end{equation}
and are thus always derivatives of other operators. 

\subsection{A tree-level holographic dictionary for local operators}
The algebra $A[[s]]$ has a straightforward dg-TFT interpretation if $A$ does: it is the space of boundary local operators  for a brane of the form $B[A] \times \bC$ in a world-sheet theory combining $T[A]$ and the B-model with target $\bC$. Formally, these are the D-branes which support the chiral algebra as a world-volume theory. 

The coupling of closed string states to a D-brane is controlled by disc amplitudes. The disk amplitudes take as an input a closed string vertex operator and a cyclic collection of open string vertex operators. Accordingly, a closed string vertex operator for which the $B[A] \times \bC$ disk amplitudes are well defined can be mapped to a cyclic multi-linear function on $A[[s]]$. It is easy to see that the BRST operator 
acting on the closed string vertex operator maps to the differential in the cyclic complex. 

A standard entry of the dg-TFT dictionary is thus that there is chain complex from the space of such closed string vertex operators to $\CC^\bullet(A[[s]])$. The latter is also identified the space of single-trace local operators, giving a natural entry in the tree level holographic dictionary. 

Recall that closed string vertex operators are ``rotation-equivariant'' vertex operators in the 
dg-TFT. In a more conventional string theory setup, rotation-equivariant vertex operators are defined 
by a BRST complex restricted to rotationally-symmetric vertex operators. Standard vertex operators can also be inserted in disk correlation functions, but one of the boundary vertex operators remains unintegrated 
and cyclic symmetry may be absent. A standard entry of the dg-TFT dictionary is that there is chain complex from the space of standard bulk vertex operators to $\CH^\bullet(A[[s]],A[[s]]^\vee)$. The Connes complex 
describes the relation between standard and rotation-equivariant vertex operators in the dg-TFT. 

The quasi-isomorphism (\ref{eq:hhtensor}) expresses the fact that the space of vertex operators in the product of two dg-TFTs should be equivalent to the product of the spaces of vertex operators for the individual theories. The same is not true for rotation-equivariant vertex operators, which can combine factors of opposite worldsheet spin. The vertex operators in the B-model with target $\bC$ which 
are described by $\CH^\bullet(\bC[[s]],\bC[[s]]^\vee)$ are distributional in nature, as appropriate for 
representing insertions of local operators in a tree-level holographic dictionary. 

From a dg-TFT perspective, we can describe the tree-level holographic dictionary as follows:
\begin{itemize}
    \item An operator $\cO_\fc(z)$ in the first tower maps to a world-sheet operator of the form 
    \begin{equation}
         \fc \otimes \delta(\zeta-z) \, .
    \end{equation}
    This is a cyclic cohomology element in the full theory combining $T[A]$ and the B-model with target $\bC$. 
    \item An operator $\cO_\fh(z)$ in the second tower maps to a ($B$+divergence)-closed world-sheet operator of the form 
    \begin{equation}
        \fh \otimes \delta'(\zeta-z)  - B\fh \otimes  \delta(\zeta-z)\partial_\zeta \, .
    \end{equation}
    Quasi-primaries correspond to $\fh$ which vanish when acted upon by $\Delta$ and mapped to cyclic cohomology.
\end{itemize}
As discussed in Section \ref{sec:second_tower}, we can also construct quasi-primaries of the second tower from a cyclic map $\mathfrak{c}'$. It takes the form
$$
(\mathfrak{c}'\cup T)\otimes\delta'(\zeta - z) -  (\Delta_{\mathfrak{c}'}+1)\mathfrak{c}'\otimes \delta(\zeta - z)\partial_{\zeta}\,.
$$

This generalizes the standard example.  

\section{The global symmetry algebra}
\label{sec:general_GCA}
In this section we will analyze the tree-level limit $\fL_0$ of the global symmetry algebra of single-trace operators. Our objective is to compare it with the global symmetry algebra of the dual $\lambda \to 0$ worldsheet dg-TFT.

The algebra has a well-defined $\lambda \to 0$ limit. Much as for the BRST generator, the tree-level action of $\fL_0$ generators on local operators is encoded in the transformation of individual fields:
\begin{equation}
    [O, \Phi(s)] = \{O|\Phi(s),\cdots, \Phi(s)\}\,,
\end{equation}
where we employ a multi-linear map 
\begin{equation}
    \{O|\bullet,\cdots, \bullet\}: A[[s]]^{\otimes \ell(O)} \to A[[s]]
\end{equation}
to describe the structure constants of the transformation. Such a map can only be a symmetry at tree level if it commutes with $Q_0$. 

In the homological algebra language, we can study the complex defined by these maps with a differential $[Q_0,O]$. 
Essentially by definition, this is the Hochschild cohomology complex $\CH^\bullet(A[[s]]) \equiv \CH^\bullet(A[[s]],A[[s]]) $. Recall that $\CH^\bullet(A[[s]])$ is a dg-Lie algebra, with Lie bracket $[O,O']$ induced by the commutator of the corresponding transformations. Important examples of elements of $\CH^\bullet(A[[s]])$ are 
the global conformal generators 
\begin{align}
	L_{-1} &= \partial_s \cr
	L_0 &= s \partial_s + \Delta \cr
	L_1 &= -s^2 \partial_s - 2 s \Delta\,.
\end{align}

Not all such transformations will arise as modes of single-trace operators! We thus only have a dg-Lie algebra map
\begin{equation}
    \fL_0 \to \CH^\bullet(A[[s]])\,.
\end{equation}
The tree-level action of symmetries on local operators follows directly from these definitions and matches a well-known action of $\CH^\bullet(A[[s]])$ on $\CC^\bullet(A[[s]])$. The complex $\CH^\bullet(A[[s]])$ gives a standard dg-TFT description of the symmetries of the theory combining
$T[A]$ and the B-model on $\bC$.

We can use the nice properties of Hochschild cohomology under tensor product to produce another chain map:
\begin{equation}
    \fL_0 \to \CH^\bullet(A)[[s,\partial_s]]
\end{equation}
into polynomial polyvector fields on $\bC$ valued in $\CH^\bullet(A)$. In order to make sense of this statement, we should recall another property of $\CH^\bullet(A)$: it is also endowed with a cup product $\cup$ (see Equation \ref{eq:cup}) distinct from the bracket. This product allows one to express the Lie bracket on $\CH^\bullet(A)[[s,\partial_s]]$ as a combination of Lie brackets and products on $\CH^\bullet(A)$ and on $\bC[[s,\partial_s]]$.

A piece of the quasi-isomorphism between the complexes $\CH^\bullet(A)[[s,\partial_s]]$ and $\CH^\bullet(A[[s]])$ can be made rather explicit. Denote as $\fh$ an element of $\CH^\bullet(A)$. We can build a collection of elements of $\CH^\bullet(A[[s]])$ as 
\begin{equation}
    \{\fh|\Phi(s), \cdots, \Phi(s)\} s^n
\end{equation}
i.e. as a map which acts on $A[[s]]$ by acting on $A$ and combining the powers of $s$ in the arguments with $s^n$ to go in the output. This works essentially because translations commute with $Q_0$ and gives a map 
$\CH^\bullet(A)[[s]] \to \CH^\bullet(A[[s]])$. We will recover the second half of the quasi-isomorphism momentarily.

\subsection{The mode algebra}
We denote the Global Symmetry Algebra modes as 
\begin{equation}
	O_{n;C} \equiv \frac{1}{\hbar |C|} \oint_{|z|=1} \frac{dz}{2 \pi i} z^n \, \Tr (C|\Phi(z+s), \cdots, \Phi(z+s))
\end{equation}
with maximum $n$ at $2 \Delta_C -2$, i.e. the weight of $C$ minus $2$. Notice that the map ${\cal O}_C \to O_{0;C}$ annihilates descendants. These modes form an irreducible representation of dimension $2 \Delta_C -1$ under the global conformal group: 
\begin{align}
	[L_{-1}, O_{n;C}] &= - n O_{n-1;C} \cr
	[L_{0}, O_{n;C}] &= (\Delta_C -1- n) O_{n;C} \cr
	[L_{1}, O_{n;C}] &= (n+2 - 2 \Delta) O_{n+1;C} \,.
\end{align}

The tree-level action of modes $O_{n,\fc}$ of the first tower of quasi-primary operators on $\Phi$ is easily described as the Wick contraction has a single pole: 
\begin{equation}
    [O_{n,\fc}, \Phi(s)] =  (\fc|\Phi(s), \cdots, \eta^{(1)}) s^n\eta^{(2)}\,.
\end{equation}
This is clearly the combination of the map $\CC^\bullet(A) \to \CH^\bullet(A)$ given by contraction with $\eta$ and of the above-described collection of maps $\CH^\bullet(A) \to \CH^\bullet(A[[s]])$.

The action of the modes of the second tower is a bit more complicated
\begin{align}
    [O_{n,\fh}, \Phi(s)] &=  \partial_s \left((\fh|\eta^{(1)}; \cdots, \Phi(s)) s^n\right)\eta^{(2)} -  
    (\fh|\partial \Phi(s);\eta^{(1)}, \cdots, \Phi(s)) s^n \eta^{(2)} + \cr
    &\cdots \pm (\fh|\partial \Phi(s); \Phi(s), \cdots,\eta^{(1)}) s^n \eta^{(2)}\,.
\end{align}
This expression must give a second collection of maps $\CH^\bullet(A)\partial_s \to \CH^\bullet(A[[s]])$
which annihilate the image of $I$. Putting all together, we have identified the BRST cohomology in
$\fL^\bC_0$ with $\CC^\bullet(A)[[s,\partial_s]]$ and given a chain map
\begin{equation}
    \CC^\bullet(A)[[s,\partial_s]] \to  \CH^\bullet(A[[s]]) \simeq \CH^\bullet(A)[[s,\partial_s]]\,,
\end{equation}
which captures the image of $\fL^\bC_0$ in $\CH^\bullet(A[[s]])$. These should be analogues of divergence-free polynomial polyvector fields in $\bC \times X_2$. 

The construction is compatible with the action of the global conformal algebra. 
We can start from a lowest weight element in $\CH^\bullet(A)[[\partial_s]]$ defined as a zero mode of a 
quasi-primary local operator with a given half-integral $L_0$ eigenvalue $1-d$. We can then build up a representation of dimension $2d-1$ by acting repeatedly with $L_1$ until we hit an element which is annihilated by $L_1$.

For example, if we start from a lowest weight element of the form $\beta \in \CH^\bullet(A)$,
we will build a sequence 
\begin{align}
    & \beta \cr
    & (2d-2) s\beta \cr
    & (2d-2) (2d-3) s^2 \beta \cr
    &\cdots
\end{align}
If we start from $\beta \partial_s$ we will build a sequence 
\begin{align}
    & \beta \partial_s \cr
    & (2d-2) s \partial_s\beta+ 2\Delta \beta \cr
    &\cdots
\end{align}
This gives the image of $\fL_0$ in $\CH^\bullet(A[[s]])$.

\subsection{Polyvector fields on $\bC P^1$}
There is another useful geometric perspective on the problem. The complex 
$\CH^\bullet(A)[[s,\partial_s]]$ can be promoted to a complex of vector bundles on $\bC P^1$ by extending it over $s=\infty$ with the help of the 
scaling weight $\Delta$, so that $L_1$ and $L_{-1}$ are exchanged by $s \to s^{-1}$. 

Then the elements of $\CH^\bullet(A)[[s,\partial_s]]$ which fit into finite-dimensional irreps of the global conformal group are simply globally-defined holomorphic polyvector fields on $\bC P^1$ valued in $\CH^\bullet(A)$.

In geometric situations, these are the global symmetries of a B-model defined on a 3d CY geometry
\begin{equation}
    X_2(-1) \to \bC P^1\,,
\end{equation}
which is related to the $\lambda \neq 0$ dual geometry $Y_3$ by a conifold transition induced by $N$ branes wrapping the $\bC P^1$ base. It is a natural way to engineer sphere correlation functions. 

The planar corrections to the global symmetry algebra for a generic $A$ should generalize this conifold transition to a non-geometric setting. 

\subsection{The action of general local operators}
For completeness, we can describe the action of the modes of a general $\cO_C$ local operator. 
The OPE of all derivatives can be organized in a generating function \begin{equation}
    \Phi(z_1+s) \otimes \Phi(z_2+s) \in A[[s]] \otimes A[[s]]\,,
\end{equation}
which is proportional to
\begin{align}
	\sum_{n,m} \binom{n}{m} (-1)^m \frac{s ^{m}\eta^{(1)} \otimes s^{n-m} \eta^{(2)}}{(z_1-z_2)^{n+1}} \,.
\end{align}
We can now introduce some notation to lighten the complexity of this expression. The tensor product 
$A[[s]] \otimes A[[s]]$ can be described naturally in terms of polynomials in two variables $s_1$, $s_2$
so that the sum collapses to the obvious 
\begin{align}
	\frac{\eta}{z_1-z_2 + s_1 - s_2} &= \sum_n \frac{(s_2-s_1)^n \eta}{(z_1-z_2)^{n+1}} \,.
\end{align}
Furthermore, the numerator $E_n \equiv (s_2-s_1)^n \eta$
can be written as $E_n^{(1)} \otimes E_n^{(2)}$ just as we did with $\eta$, leaving implicit the sum over $m$ and over summands in $\eta$.

In conclusion, we write
\begin{equation}
	\Phi^i_j(z+s) \otimes \Phi^k_t(w+s) \sim \delta^i_t \delta^k_j \hbar \sum_n \frac{E_n}{(z-w)^{n+1}}
\end{equation}
and treat $E_n$ as we would $\eta$.

We compute
\begin{equation}
    [O_{n;C},\Phi] = \Tr (C|\Phi(s), \cdots, E_n^{(1)})E_n^{(2)}\,,
\end{equation}
which gives the explicit dg-Lie algebra map from $\fL^{\bC}_0$ to $\CH^\bullet(A[[s]])$.

\subsection{A conformal-invariant presentation of the global symmetry algebra}
The global conformal symmetry constrains the form of the global symmetry algebra $\fL_\lambda$. 

For example, consider the zero modes $O_{0,C_1}$ and $O_{0,C_2}$ of quasi-primary operators $\cO_{C_i}$ of scaling dimensions $\Delta_i$. The commutator is also lowest weight and thus must be the zero mode of some quasi-primary operator $\cO_{[C_1,C_2]_0}$ of scaling dimension $\Delta_{C_1} +\Delta_{C_2} -1$:
\begin{equation}
	[O_{0;C_1},O_{0;C_2}] = O_{0;[C_1,C_2]_0} \, .
\end{equation}
This relation determines the whole spin $\Delta_{C_1} +\Delta_{C_2} -2$ part of the $[O_{n;C_1},O_{m;C_2}]$ 
commutation relations. 

Analogously, we can verify that $[O_{1;C_1},O_{0;C_2}]-[O_{0;C_1},O_{1;C_2}]$ is also annihilated by $L_{-1}$ and thus 
we can define 
\begin{equation}
	O_{0;[C_1,C_2]_1} =[O_{1;C_1},O_{0;C_2}]-[O_{0;C_1},O_{1;C_2}]  \, .
\end{equation}
This determines the whole spin $\Delta_{C_1} +\Delta_{C_2} -3$ part of the $[O_{n;C_1},O_{m;C_2}]$ 
commutation relations.

We can systematically define 
\begin{align}
	 O_{0;[C_1,C_2]_2} &= [O_{2;C_1},O_{0;C_2}]-2 [O_{1;C_1},O_{1;C_2}] +[O_{0;C_1},O_{2;C_2}] \cr
	O_{0;[C_1,C_2]_3} &= [O_{3;C_1},O_{0;C_2}]-3 [O_{2;C_1},O_{1;C_2}] +3[O_{1;C_1},O_{2;C_2}]-[O_{0;C_1},O_{3;C_2}]  
\end{align}
etcetera. These relations capture the spin $\Delta_{C_1} +\Delta_{C_2} -n-1$ part of the $[O_{n;C_1},O_{m;C_2}]$ 
commutation relations. For example, with these notations we find $[T,T]_0=0$, $[T,T]_1 =2 T$.

This notation is useful because we can recover the full commutator $[O_{n,C_1},O_{m,C_2}]$ recursively from the $SL(2)$ symmetry and the $[C_1,C_2]_n$ brackets. It also turns out to simplify explicit calculations. 
Indeed, consider the schematic contribution of a term in the OPE which scales as $(z_1-z_2)^{-n-1}$:
\begin{align}
	&\oint_{|z_2|=1}  \frac{dz_2}{2 \pi i} z_2^{n_2} \oint_{|z_1-z_2|=\epsilon} \frac{dz_1}{2 \pi i} \frac{z_1^{n_1}}{(z_1-z_2)^{n+1}}f(z_1)g(z_2) = \frac{1}{n!}\oint_{|z|=1}  \frac{dz}{2 \pi i} z^{n_2}g(z) \partial^n_{z} (z^{n_1}f(z)) \, .
\end{align}
Then the contributions to $[\bullet, \bullet]_k$ simplifies to
\begin{equation}
    \frac{1}{(n-k)!} g(z) \partial^{n-k}_{z} f(z)
\end{equation}
and in particular is only non-vanishing for $n \geq k$. 

The structure of $\fL_0$ in this presentation is thus rather simple:
\begin{itemize}
    \item The tree-level OPE of two operators from the first tower involves a single Wick contraction and can only generate a simple pole. Only $[\fc_1,\fc_2]_0$ is thus non-vanishing. 
    \item The tree-level OPE of operators from the two towers contain a simple pole and a double pole. The latter allows a non-vanishing $[\fc,\fh]_1$ contribution belonging to the first tower. Both contribute to an $[\fc,\fh]_0$ belonging to the second tower.
    \item The tree-level OPE of operators from the second tower contributes 
    both an $[\fh_1,\fh_2]_2$ in the first tower and an $[\fh_1,\fh_2]_1$ in the second. The $[\fh_1,\fh_2]_0$ terms contains two derivatives and must thus vanish in cohomology
\end{itemize}

The tree-level OPE of two general operators is 
\begin{align}
	\cO_{C_1}(z_1) \cO_{C_2}(z_2) &\sim \sum_{n=0}^\infty \frac{1}{(z_1-z_2)^{n+1}} \cdot \cr &\frac{1}{\hbar}\Tr (C_1|\Phi(z_1+s), \cdots, E_n^{(1)})(C_2|E_n^{(2)}, \cdots, \Phi(z_2+s) ) \,.
\end{align}
We compute 
\begin{align}
	 \cO_{[C_1,C_2]_k}(z) &= \sum_{n=0}^\infty \frac{1}{n!}  \frac{1}{\hbar}\Tr \left[\partial^{n}_z (C_1|\Phi(z+s), \cdots, E_{n+k}^{(1)})\right](C_2|E_{n+k}^{(2)}, \cdots, \Phi(z+s) ) 
\end{align}
up to total derivatives. 

Somewhat implicitly, this expression defines the maps $[C_1,C_2]_k$. They are a collection of brackets on the quasi-primary part of $\CC^\bullet(A[[s]])$
which encode $\fL_0$. They provide a (poor) alternative computational method to mapping the quasi-primaries to the algebra of holomorphic polyvector fields on $\bC P^1$ valued in $\CH^\bullet(A)$.

In order to be a bit more explicit, we can observe that away from dimension $0$, the tree-level $Q$ must pair up whole Verma modules. We can define a ``quasi-primary complex'' $\CC_{\mathrm{qp}\geq 1}^\bullet(A[[s]])$
consisting of quasi-primaries in the cyclic cohomology complex.\footnote{Quasi-primaries of dimension $1$ which are total derivatives do not contribute zero modes and should be removed from $\CC_{\mathrm{qp}\geq 1}^\bullet(A[[s]])$.} We expect $\CC_{\mathrm{rel}}^\bullet(A[[s]])$ and $\CC^\bullet(A[[s]])$ to be quasi-isomorphic as chain complexes up to a collection of dimension $0$ fields built from the $c$ ghost only, without derivatives. If that is the case, $\CC_{\mathrm{rel},\mathrm{qp}}^\bullet(A[[s]])$ and $\CC_{\mathrm{qp}\geq 1}^\bullet(A[[s]])$ will be equivalent. 

The $[C_1, C_2]_k$ operations must then be well-defined on the $\CC_{\mathrm{qp}\geq 1}^\bullet(A[[s]])$ complex.

\section{Tree Level Flavour}
\label{sec:generla_flavor}
In this section we enrich the chiral algebra by some extra (anti)fundamental fields. We will ignore for now anomaly cancellation issues. 
We can collect the (anti)fundamental fields into two generating fields $I$ and $J$ valued in auxiliary vector spaces $\widetilde M$ and $M$. 
We can also include multiple copies of these fields, adding indices to get $I^r$ and $J_r$.

\subsection{Mesonic local operators}
In the planar approximation, the main actors are mesonic operators:  
\begin{equation}
	\cM_P(z) \equiv \frac{1}{\hbar} (P|I(z+s);\Phi(z+s), \cdots, \Phi(z+s);J(z+s))
\end{equation}
where $(P|\cdots)$ is a map $\widetilde M[[s]] \otimes A[[s]] \cdots \otimes M[[s]] \to\bC$. 

The tree-level BRST differential endows $\widetilde M$ with the structure of a right $A$-module and $M$ with the structure of a left $A$-module, so that $Q_0 I = I\Phi$ and $Q_0 J = \Phi J$. The action of 
$Q_0$ on functionals 
\begin{equation}
	\cM_{Q_0 \, P}(z) \equiv Q_0 \, \cM_{P}(z)
\end{equation}
receives contributions from both the variation of $\Phi$ and the variation of $I$ and $J$. Essentially by definition, it is dual to the bar complex for a (derived) tensor product $\widetilde M[[s]] \otimes_{A[[s]]} M[[s]]$ and thus mesons are 
labelled by linear functionals on that tensor product. 

The tensor product can be simplified drastically via a canonical quasi-isomorphism 
\begin{equation}
	\widetilde M[[s]] \otimes_{A[[s]]} M[[s]] \simeq (\widetilde M \otimes_{A} M)[[s]]
\end{equation}
so we expect quasi-primaries to be labelled by linear functionals on $\widetilde M \otimes_{A} M$. We can present them explicitly as 
\begin{equation}
	\cM_\fp(z) \equiv \frac{1}{\hbar} (\fp|I(z);\Phi(z), \cdots, \Phi(z);J(z))
\end{equation}
where $(\fp|\cdots)$ is a map $\widetilde M \otimes A \cdots \otimes M \to\bC$. 

In a dg-TFT setting, a probe brane $P$ in $T[A]$ can be encoded into the spaces of junctions from $P$ to $B[A]$ and vice versa. These are $\wt M$ and $M$ respectively. The brane $P$ can be combined with a brane wrapping $\bC$ in the extra direction to produce junctions labelled by $\wt M[[s]]$ and $M[[s]]$ respectively. The dual to $\widetilde M[[s]] \otimes_{A[[s]]} M[[s]]$
can be identified with a space of open string states attached to this brane,
providing one entry of the tree-level holographic dictionary. 

\subsection{Space-filling probe branes}
In practice, the most typical option for fundamental matter is 
some symplectic bosons or fermions coupled to the overall gauge group. Then $M$ and $\wt M$ are supported in ghost number $1$ and only $A_0$ acts non-trivially, encoding the action of the gauge group on the matter fields. 

In these situations, the derived tensor product $\wt M \otimes_A M$ will be quite large. Roughly, it involves bosonic generators of ghost number $0$ dual to $A_1$ and should be comparable in size to cyclic cohomology. The corresponding probe branes should be thought of as space-filling and are useful probes of the holographic dual geometry. 

A canonical possibility is to take $M=A_0[1]$, the algebra itself  shifted to ghost number $1$, and $\wt M = A_0^\vee[1]$ to have a natural pairing. We can denote the corresponding canonical space-filling brane as $O[A]$. 

The mesons for that brane are dual to 
\begin{equation}
    A_0^\vee \otimes_A A_0
\end{equation}
and include $IJ$ bilinears labelled by $A_0$ itself.

Further deformations of the mesonic BRST differential, possibly allowing for fundamental fields of non-zero ghost number, will give more general modules to be interpreted as generic D-branes in the transverse geometry $X_2$. 

\subsection{Open symmetry algebra}\label{subsec:open_symmetry_algebra}
We can define the open version of the global symmetry algebra from the modes of mesons. Including multiple flavours, we get modes we can denote as 
\begin{equation}
	(O_{n;P})^r_s\,.
\end{equation}
The linearized commutators between these modes have an interesting index structure: 
\begin{equation}\label{eq:ad_mode_algebra_product}
	[ (O_{n_1;P_1})^r_s,(O_{n_2;P_2})^u_v] = \delta^u_s (O_{n_1;P_1} \cdot O_{n_2;P_2})^r_v \pm  \delta^r_v (O_{n_2;P_2} \cdot O_{n_1;P_1})^u_s \,,
\end{equation}
which defines an algebra structure $\fP_\lambda$ on the global $O_{n;P}$ modes. In the presence of $k$ copies of the fudamentals, the open global symmetry algebra is $\mathfrak{gl}_k[\fP_\lambda]$ (up to a small mixing with $\fL_\lambda$ we discuss momentarily).

 We denote the OPE numerator pairing (anti)fundamental fields as $\mu\in M \otimes \wt M$, with the same $\mu^{(1)} \otimes \mu^{(2)}$ notation used for $\eta$. 
 
At tree level, the OPE between mesons only receives contributions from contractions of an fundamental and an anti-fundamental fields.  
Restricting to $O_{n;\fp}$, the OPE has only a simple pole and the mode algebra is controlled by the zero modes, reproducing the natural product
\begin{equation}
\begin{aligned}
    	&(\fp_1 \fp_2|\widetilde m_0;a_1, \cdots, a_{n_1+n_2};m_{n_1+n_2+1}) \\
     &\equiv (\fp_1|\widetilde m_0;a_1, \cdots, a_{n_1};\mu^{(1)})(\fp_2|\mu^{(2)};a_{n_1+1}, \cdots, a_{n_1+n_2};m_{n_1+n_2+1})
\end{aligned}
\end{equation}
on $(\widetilde M \otimes_{A} M)^\vee$.

Alternatively, we can focus on the transformations of $I$ and $J$ induced by 
$O_{n;P}$. Parsing definitions, these coincide respectively with $A_\infty$ endomorphisms of $M[[s]]$ and $\wt M[[s]]$ and compose accordingly. We thus get maps: 
\begin{equation}
    \fP_0 \to \mathrm{End}_{A[[s]]}(M[[s]]) \qquad \qquad \fP_0 \to \mathrm{End}_{A[[s]]}(\wt M[[s]])^{\mathrm{op}}\,.
\end{equation}
The spaces of endomorphisms are quasi-isomorphic to $\mathrm{End}_{A}(M)[[s]]$, etcetera

Effectively, we have mapped $\fP^\bC_0$ to functions on $\bC$ valued in $\mathrm{End}_{A}(M)$. We expect the part $\fP_0$ of $\mathrm{End}_{A}(M)[[s]]$ 
which consists of finite-dimensional representations of the global conformal algebra to be identified with global holomorphic sections of  $\mathrm{End}_{A}(M)$ as a vector bundle over $\bC P^1$. 

We can specialize to the case where $M$ and $\wt M$ are $A_0$ modules, and in particular to $O[A]$, i.e. $M = A_0[1]$. We can denote the corresponding mode algebra as $\mathfrak{O}_\lambda[A]$. It contains a copy of $A_0$ from the zero modes of $IJ$ mesons.

The algebra $\mathrm{End}_{A}(A_0)$ is a kind of Koszul dual to $A$. It is analogous to an algebra of holomorphic functions on the transverse space $X_2$. 

As we discussed in the standard example, $\fL_\lambda$ maps naturally to the Hochschild cohomology of $\fP_\lambda$.

\section{Determinant-like local operators} \label{sec:det}
The analysis is completely parallel to our standard example: 
a determinant-like operator is defined by an auxiliary integral 
on some (anti)fundamental 0-dimensional fields.
Fields and anti-fields alike can be 
assembled into generating fields $\widetilde \Psi$ and $\Psi$.

A typical BV action will take the form 
\begin{equation}
   S_{\BV} = (\wt \Psi; m \Psi) +  (\wt \Psi; \Phi(z) \Psi)\,,
\end{equation}
where $\wt \Psi$ and $\Psi$ are valued in dg-modules $\wt M$ and $M$ for $A$ equipped with a pairing $(\bullet;\bullet)$ of ghost number $3$.

If the auxiliary fields have a standard form, the modules will be supported in degree $1$ and $2$, with $M_1$ being paired to fundamental fields and being dual to $\wt M_2$ and vice versa. We denote the differential as $m$, as it gives rise to a quadratic mass term for the auxiliary fields. 

These modules are promoted to $A[[s]]$ modules with $s$ acting as multiplication by $z$. Coupling to derivatives of fields can be implemented by more general  $A[[s]]$ modules. We can take multiple copies of the fields to describe powers or products of determinants. 

The BV action must satisfy the master equation 
\begin{align}
(Q_{\mathrm{BRST}} + \hbar \Delta_{\BV})e^{S_{\BV}}\,.
\end{align}
It is easy to see that the assumptions above ensure that this is the case at tree-level: $Q_0$ produces a $(\wt \Psi; \Phi(z)\Phi(z) \Psi)$ term 
which is cancelled by $\{S_{\BV},S_{\BV}\}_\BV$. 

A general example would have $M = A_0[\theta]$, $\wt M = A_0^\vee[\theta]$.
The module structure is given by specifying a linear function $\mu$ on $A_1$, specifying how they map to a multiple of $\theta$. This selects which linear combination of the $Z$ fields we are taking the determinant of and generalizes the ``$u$'' parameter in the standard example.

\subsection{A BV algebra of mesons}
We can describe algebraically the space of ``determinant modifications'' at tree level, i.e. 
\begin{equation}
	\cM_D \equiv \frac{1}{\hbar} (D|\widetilde \Psi;\Phi(z+s), \cdots, \Phi(z+s);\Psi)
\end{equation}
inserted in the zero-dimensional auxiliary integral to modify the local operator, leading to $\fD_\lambda$ as in the canonical example. 

The linear and bi-linear terms in the action of $Q_{\mathrm{BRST}} + \hbar \Delta_{\BV}$ on a modified determinant equip $\fD_\lambda$ with the structure of a dg-algebra.

At tree level, the tensor product computing the cohomology of $\fD_\lambda$ simplifies to 
\begin{equation}
	\widetilde M \otimes_{A[[s]]} M = \left(\widetilde M \otimes_{A} M\right)[ds]\,,
\end{equation}
so that we expect to have two towers of modifications.

The product should arise from the BV Laplacian contracting a field and an anti-field in different modifications: 
\begin{equation}
    \cM_D \cM_{D'} \to \frac{1}{\hbar} (D|\widetilde \Psi;\Phi(z+s), \cdots, \Phi(z+s);\mu^{(1)})(D'|\mu^{(2)};\Phi(z+s), \cdots, \Phi(z+s);\Psi)
\end{equation}
leading to the familiar product on $\widetilde M \otimes_{A[[s]]} M$. 

\subsection{Open modifications}
In the presence of both fundamental fields and determinants, there will be mixed mesonic operators $\fM_\lambda$ and $\wt \fM_\lambda$.

At tree level, these are controlled by tensor products such as $\wt M_D \otimes_{A[[s]]} M_P[[s]] \simeq \wt M_D \otimes_A M_P$. For the simplest space-filling branes, that becomes $\wt M \otimes_A A_0$

Geometrically, we have a giant graviton brane $D$ which is Dirichlet in $\bC$ and a space-filling brane $P$ which is Neumann in $\bC$. At tree level, open strings stretched between the two are controlled by the boundary-changing local operators in the $T[A]$ factor of the theory. 

\section{A non-commutative example at tree-level} \label{sec:nctree}
Consider now the case of an $U(N)$ gauge theory with $2n+2$ bosonic adjoint fields and $2n$ fermionic ones. This has an $OSp(2n|2n+2)$ global symmetry. We can denote the matter fields collectively as $Z_a(z)$, with an OPE proportional to the ortho-symplectic form $\omega_{ab}$.

The corresponding algebra $A$ has ghost number $1$ generators $\theta^A$,
which satisfy 
\begin{equation}
    \theta^a \theta^b = \omega^{ab} u
\end{equation}
for the only ghost number $2$ generator $u$, dual to the $b(z)$ field. To be more precise, our algebra $A$ is defined as the quotient algebra
\begin{equation}\label{def_nonc_A}
    A = \mathbb{C}\langle\theta^a,u\rangle/(\theta^a\theta^b = \omega^{ab} u,\theta^a u =0),
\end{equation}
where the notation $\mathbb{C}\langle \dots \rangle$ represents the non-commutative algebra freely generated by the variables in the angle bracket.

The trace map $(\bullet):A \to \bC$ is simply defined by $(u) = 1$ and $(\text{others}) = 0$.

We would like to identify $A$ as the algebra of boundary local operators for some ``Dirichlet'' brane in a dg-TFT $T[A]$, but we do not have any alternative definition of the theory. 

Standard fundamental matter gives mesons which are computed from 
\begin{equation}
    (\bC \otimes_A \bC)^\vee \equiv A^!\,.
\end{equation}
Through the presentation \eqref{def_nonc_A} of $A$, we can view it as a quadratic-linear algebra. By the standard technique of the quadratic Koszul duality \cite{loday2012algebraic}, we find that its Koszul dual can be computed by the complex
\begin{equation}\label{eqn:Koszul_nc}
    (\mathbb{C}\langle \zeta_a, \nu \rangle, d)\,,
\end{equation}
with $d\nu = \omega^{ab} \zeta_a \zeta_b,d\zeta_a = 0$. Computing the cohomology eliminates the variable $\nu$ and imposes the relation \begin{equation}\label{eqn_nc_rel}
    \omega^{ab} \zeta_a \zeta_b =0\,.
\end{equation} 
Thus we obtain the Koszul dual algebra 
\begin{equation}\label{eq:nc_tree_level_algebra_of_mesons}
A^! = \mathbb{C}\langle \zeta_a \rangle/(\omega^{ab} \zeta_a \zeta_b =0)\,,    
\end{equation}
which is a non-commutative algebra for $n > 0$.   

The interpretation of this result is intuitive: the meson operators
\begin{equation}
    I^A Z_{a_1} \cdots Z_{a_n} J_B
\end{equation}
are $Q_0$-closed but a contraction of $\omega^{ab}$ with a pair of consecutive indices gives an exact operator. The $Q_0$ image of an operator with an extra $b$ insertion plays a role analogous to the $d$ image of $\nu$ in \eqref{eqn:Koszul_nc}. 

This nicely sets the stage for a computation of the $Q_0$ cohomology on single-trace operators. If $A$ was commutative, the Hochschild cohomology would be given by the tensor product of the algebras $\mathbb{C}[\theta^a]$ and $\mathbb{C}[\zeta_a]$. In the non-commutative case, the Hochschild cohomology $\HH^{\bullet}(A)$ can also be computed by the tensor product $A\otimes A^{!}$, but now equipped with a differential. We can write this complex $(A\otimes A^{!},Q_{\CH})$ as 
\begin{equation}
    0 \rightarrow A^! \overset{Q_{\CH}}{\rightarrow} \bigoplus_{a}A^!\theta^a \overset{Q_{\CH}}{\rightarrow} A^!u \rightarrow 0\,,
\end{equation}
where $Q_{\CH}f(\zeta) = \sum [\zeta_a,f(\zeta)]\theta^a$, and $Q_{\CH}(f(\zeta)\theta^a) = [\omega^{ab}\zeta_b,f(\zeta)]u$. This complex can be understood as the analog of polyvector fields. For $n = 0$, we immediately see that $Q_{\CH} = 0$, and the above complex reproduces the space of polyvector fields on $\bC^2$. In the following, we focus on the case when $n >0$.

To compute the cyclic cohomology, it will be convenient to use the trace pairing to dualize $A$ in the above complex. We write it as
\begin{equation}
    0 \rightarrow A^!u^* \overset{Q_{\CH}}{\rightarrow} \bigoplus_{a}A^!d\zeta_a \overset{Q_{\CH}}{\rightarrow} A^! \rightarrow 0\,.
\end{equation}
The differential now takes the following form
\begin{equation}
\begin{aligned}
        Q_{\CH}(f(\zeta)u^*) &= \sum\omega_{ab}[\zeta_a,f(\zeta)]d\zeta_b\,,\\
        Q_{\CH}(g(\zeta)d\zeta_a) &= [\zeta_a,g(\zeta)]\,.
\end{aligned}
\end{equation}
After this identification, we can now write the Connes $B$ operator as $B = d\zeta_a\frac{\partial}{\partial \zeta_a}$. 

In general, the cyclic cohomology is computed as a bi-complex given by $(A\otimes A^{!}[v],Q_{\CH}+vB)$. However, our algebra has the important property that it carries a scaling degree. Then, according to our discussion around Equation \eqref{short_ex_Hoch_cyc}, the positive scaling degree part of the cyclic cohomology can be identified with the kernel of $B$. 

We first find that the kernel of the map $Q_{\CH}: A^!u^* \rightarrow\bigoplus_{a}A^!d\zeta_a $ is the center $Z(A^{!})$ of the non-commutative algebra $A^{!}$. In the commutative case, the center is the whole algebra, and they correspond to the $\mathcal{B}^{(n)}$ tower in the standard example. In the non-commutative case, the center is much smaller. This implies that many of the original $\mathcal{B}^{(n)}$ tower operators no longer exist. For example $Q\Tr bZ_c = \omega^{ab}\Tr Z_aZ_bZ_c$, so that $\Tr bZ_c $ is not BRST closed. The center $Z(A^{!})$ contains at least $\bC$, corresponding to
\begin{equation}
    \mathcal{B}^{(0)} = \Tr b\,.
\end{equation}
However, we do not know if there exists any other non-trivial element in the center $Z(A^{!})$.

The kernel of $B$ on $A^!$ only gives $\bC$, which correspond to $\Tr c$ which we don't consider. This also exclude any possibility with $\Tr(cZ_{a}\dots )$. We thus find that the remaining operators in the (relative) cyclic cohomology, or the first tower, is given by the $\mathcal{A}^{(n)}$ tower 
\begin{equation}
      \frac{1}{n \hbar} \fc^{a_1,\cdots, a_n} \Tr Z_{a_1} \cdots Z_{a_n}\,.
\end{equation}
These are the cyclic words on $\zeta_a$ modulo the relation \eqref{eqn_nc_rel}. The quotient by the relation can be realized by considering $Q_0\Tr(bZ^{i_1}\dots Z^{i_n})$.

According to the discussion in Section \ref{sec:second_tower}, the cyclic cohomology $\HC^{\bullet}(A)$ also characterizes quasi-primaries in the second tower. For the same reason that the original $\mathcal{B}^{(n)}$ tower collapses, most of the original $\mathcal{D}^{(n)}$ tower operators are killed by BRST invariance. The $\bC$ in the center $Z(A^!)$ gives the stress tensor
\begin{equation}
    \mathcal{D}^{(0)} = \frac{1}{2\hbar}\Tr(\omega^{ab}Z_a\partial Z_b + 2b\partial c)\,,
\end{equation}
which survives in the BRST cohomology. 
We also have the following $\mathcal{C}^{(n)}$ tower
\begin{equation}
      \frac{1}{n \hbar} \fc^{a_1,\cdots, a_n} \Tr \partial c Z_{a_1} \cdots Z_{a_n}\,,
\end{equation}
with the same condition on $\fc^{a_1,\cdots, a_n}$ as before. Naively, we don't need the cyclic condition on $\fc^{a_1,\cdots, a_n}$. However, we can check that two operators $\Tr \partial c Z_{a_1} \cdots Z_{a_n}$, $(\pm)\Tr \partial c Z_{a_2} \cdots Z_{a_n} Z_{a_1}$ related by a cyclic permutation on $Z_{a_i}$ differ by a BRST exact element $Q_0\Tr(\partial Z_{a_1} Z_{a_2}\dots Z_{a_n})$.

Finally, we could consider some determinant operators $\det (m + u^a Z_a)$. 
This leads to an $A$-module $M_u = \bC[\theta]$, defined by mapping $\theta^a \to u^a \theta$ and $u \to 0$, with differential $m \theta$. We define an auxiliary linear map $p(\zeta_a) = u^a\theta$ on the subspace $(A^{!})_{1} = \oplus_a\bC\zeta_a$. Then $\ker p$ is a subspace of $(A^{!})_{1}$. We define $M_u^! = \langle \ker p\rangle$ as the subalgebra of $A^{!}$ generated by $\ker p$.

We are particularly interested in the space of open modifications, computed from 
\begin{equation}
    (\bC \otimes_A M_u)^\vee \equiv M_u^!\,.
\end{equation}
This can be derived using the standard Koszul resolution $(A\otimes A^{\text{\textexclamdown}},d_{\text{Kos}})$ of $\bC$ \footnote{Here, $A^{\text{\textexclamdown}}$ is the Koszul dual coalgebra of $A$, which is the linear dual of the Koszul dual algebra $A^!$. We refer to \cite{loday2012algebraic} for a general discussion of this and the Koszul complex.}. We find that $(\bC \otimes_A M_u)^\vee$ can be computed by the complex $(A^![\theta^*],d)$ with differential $d\zeta_a = u^a\theta^*$. Cohomology of this complex gives us $M_u^!$. This is an $A^!$ module defined by the left ideal generated from $m+u^a \zeta_a$, with a natural identification as $I Z_{a_1} \cdots Z_{a_n} \psi$ 
modifications. 

\section{Categorical Back-reaction} \label{sec:back}
As long as a planar 1-loop anomaly cancellation condition holds, we can follow the canonical example and define a dg-Lie algebra $\fL_\lambda$ from global modes of single-trace operators. Given choices $P$ of fundamental chiral matter and $D$ of determinant operators, we can define dg-algebras $\fP_\lambda$ and $\fD_\lambda$ from global modes of mesons and determinant modifications, as well as $\fM_\lambda$ and $\wt \fM_\lambda$ bimodules of open determinant modifications. Any anomalies introduced by the fundamental fields will at most curve some of the algebras.  

Our general strategy is to tentatively {\it define} a dual world-sheet 
theory $T_\lambda$ and D-branes $P_\lambda$ and $D_\lambda$ from this data. The definition of that data via BRST anomalies essentially guarantees the axiomatic properties expected from them. This includes maps from $\fL_\lambda$ into the Hochschild cohomology of the algebras and modules, the module action themselves, etc. 

In the remainder of the paper we will begin the work of making these constructions explicit.

We begin by reviewing the one-loop anomaly cancellation. Recall that the  
BRST differential is the zero mode of a BRST current, which we can concisely write as 
\begin{equation}
    J_{\mathrm{BRST}} = \frac{1}{3 \hbar} (\Phi \Phi \Phi)\,.
\end{equation}
The full action of $Q$ on local operators has a tree-level $Q_0$ and 1-loop $\hbar Q_1$ parts, involving $1$ or $2$ Wick contractions with $J_{\mathrm{BRST}}$.  
We would like $Q_0^2=0$, $\{Q_0, Q_1\} = 0$ and $Q_1^2=0$ separately, so we have a BRST symmetry for all values of 
$\hbar$. 

We can actually require the stronger condition that $Q J_{\mathrm{BRST}}=0$.
This condition has a tree-level part $Q_0 J_{\mathrm{BRST}}=0$, involving one Wick contraction, and a 1-loop part $Q_1 J_{\mathrm{BRST}}=0$ involving two. 

The tree-level condition is 
\begin{equation}
	\Tr (\Phi \Phi \eta^{(1)})(\eta^{(2)}\Phi \Phi) = \Tr (\Phi \Phi \Phi \Phi) =0
\end{equation}	
and is identically satisfied by cyclicity of $()$. 

At 1-loop, we have a planar contribution
\begin{equation}
	\Tr (\partial \Phi  \eta_1^{(1)}\eta_2^{(1)})(\eta_2^{(2)}\eta_1^{(2)}\Phi)\,.
\end{equation}	
We named the two $\eta$ tensors to avoid confusion. There is a second term involving a different relative order of Wick contractions, leading to a double trace. We will ignore it, as it does not obstruct the construction of the linearized planar structures we are interested in.

The ``planar anomaly'' cancellation condition on $A$ is thus 
\begin{equation}
	(a \eta_1^{(1)}\eta_2^{(1)})(\eta_2^{(2)}\eta_1^{(2)}b)=0\,,
\end{equation}	
i.e. 
\begin{equation}
	(a \eta^{(1)} \eta^{(2)}b)=0\,,
\end{equation}	
i.e. 
\begin{equation}
	\eta^{(1)} \eta^{(2)} =0\,.
\end{equation}
In the canonical example, this holds because of a cancellation between bosons and fermions. 

\subsection{Adding (anti)fundamental chiral fields}
The BRST current gains a second term in the presence of (anti)fundamental fields:
\begin{equation}
    J_{\mathrm{BRST}}^{IJ} = \frac{1}{\hbar} (I \Phi J)\,.
\end{equation}
If we ignore non-planar contributions, the potential BRST anomaly comes from terms with two Wick contractions. This leads to something like
\begin{equation}
    (\tilde m \eta^{(1)} \mu^{(1)})(\mu^{(2)}\eta^{(2)} m)=0\,,
\end{equation}
which is guaranteed by again by $\eta^{(1)} \eta^{(2)}=0$. We are thus free to add any fundamental matter at the planar level. 

We now describe the general structure of linear planar corrections, in greater generality than the rest of the paper. 

We will work with a matrix super-field $\Phi$ of ghost number $1$, valued in an auxiliary space $V$ which plays a role analogous to that of $A[[s]]$ in the main text. 

\subsection{Tree level}
The most general form of a tree-level differential acting via a Leibniz rule is
\begin{equation}
    Q^{(0)} \Phi = \{Q|\Phi\} + \{Q|\Phi, \Phi\} + \{Q|\Phi, \Phi, \Phi\}+ \cdots
\end{equation}
where 
\begin{equation}
    \{Q|\bullet, \cdots, \bullet\} : V^{\otimes (n+1)} \to V
\end{equation}
is a collection of brackets on $V$. The condition $Q^2=0$ 
gives a set of quadratic relations which makes $\{Q|\bullet, \cdots, \bullet\}$ into an $A_\infty$ algebra. 

A collection $L$ of maps $V^{\otimes (n+1)} \to V$ is, by definition, an element of the Hochschild cohomology complex $HH^\bullet(V,V)$. We can denote the transformation 
\begin{equation}
    L \Phi = \{L|\Phi\} + \{L|\Phi, \Phi\} + \{L|\Phi, \Phi, \Phi\}+ \cdots
\end{equation}
by the same symbol. The bracket $\{\bullet, \bullet\}$ on the Hochschild complex is defined in such a manner that 
\begin{equation}
    [L,L'] \Phi = \{\{L, L'\}|\Phi\} + \{\{L, L'\}|\Phi, \Phi\} + \{\{L, L'\}|\Phi, \Phi, \Phi\}+ \cdots
\end{equation}
and is a sum over all possible ways of inserting a map into the other and vice versa. In this notation, the differential on the Hochschild cohomology complex is simply:
\begin{equation}
    Q L = \{Q,L\} \,.
\end{equation}
The RHS of this equation corresponds diagrammatically to:
\begin{figure}[h!]
    \centering
    \includegraphics[width=\linewidth]{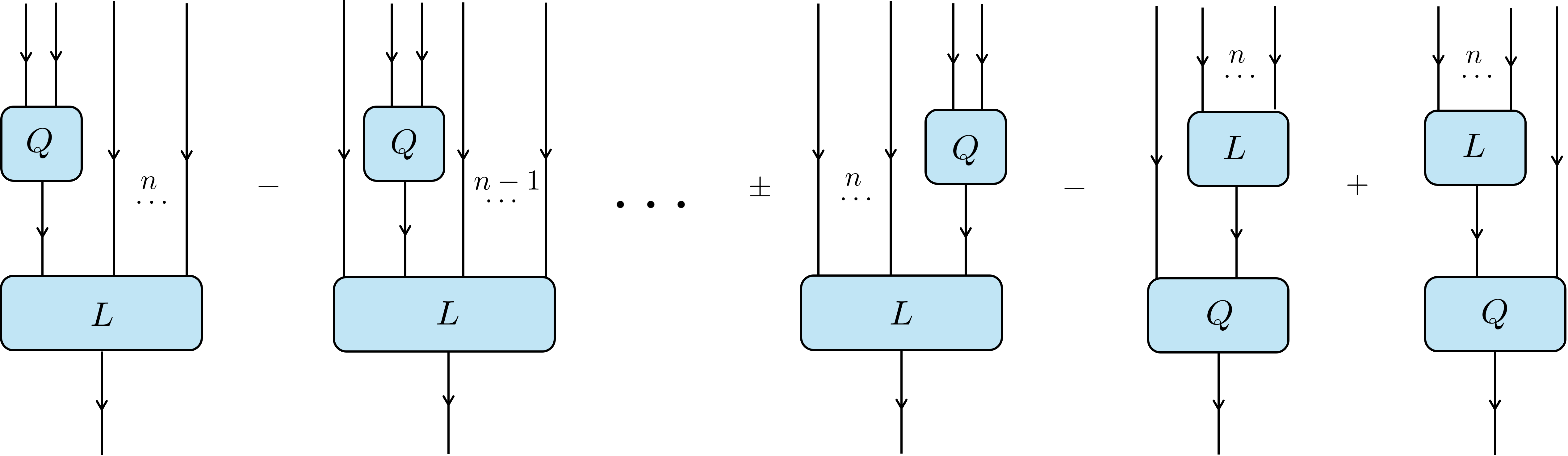}
    \caption{Illustration of how the Hochschild differential $Q$ acts on a generic $L$. Black lines represent $\Phi$'s.}
    \label{fig:enter-label}
\end{figure}

It is reasonable to identify $\HH^\bullet(V,V)$ as the ``tree level symmetry algebra'' of the underlying theory: classical field redefinitions which are compatible with the BRST differential. 

Notice that deformations of the differential are controlled by a quadratic MC equation
\begin{equation}
    Q V +\{V,V\}=0 \, .
\end{equation}

The definitions can be extended to the case where $Q\Phi$ includes a constant source term $\{Q|\}$, i.e. $V$ is a {\it curved} $A_\infty$ algebra. 

\subsection{Planar transformations}
At the planar level, the BRST differential is a sum of terms which act on $m+1$ consecutive fields in a trace or meson and replaces them with a sum of products of $n+1$ fields. 
We can write that as 
\begin{equation}
    Q^{(m)} \Phi^{\otimes(m+1)} = \{Q|\Phi\}_m + \{Q|\Phi, \Phi\}_m + \{Q|\Phi, \Phi, \Phi\}_m+ \cdots
\end{equation}
where now 
\begin{equation}
    \{Q|\bullet, \cdots, \bullet\}_m : V^{\otimes (n+1)} \to V^{\otimes (m+1)}\,.
\end{equation}
Denote the space of collections of such maps as $\BC^{\bullet, \bullet}(V)$. This is equipped with an obvious bracket such that 
\begin{equation}
    \sum_{k=0}^m [L^{(k)}_1,L^{(m-k)}_2] \Phi = \{\{L, L'\}|\Phi\}_m + \{\{L, L'\}|\Phi, \Phi\}_m + \{\{L, L'\}|\Phi, \Phi, \Phi\}_m+ \cdots
\end{equation}
The bracket is a sum of terms where the output of one operation is inserted in a consecutive sequence of slots 
in the other operation in all possible ways, and vice versa (see for example Figure~\ref{fig:planar L}). 

\begin{figure}[h]
    \centering
    \includegraphics[width=0.7\linewidth]{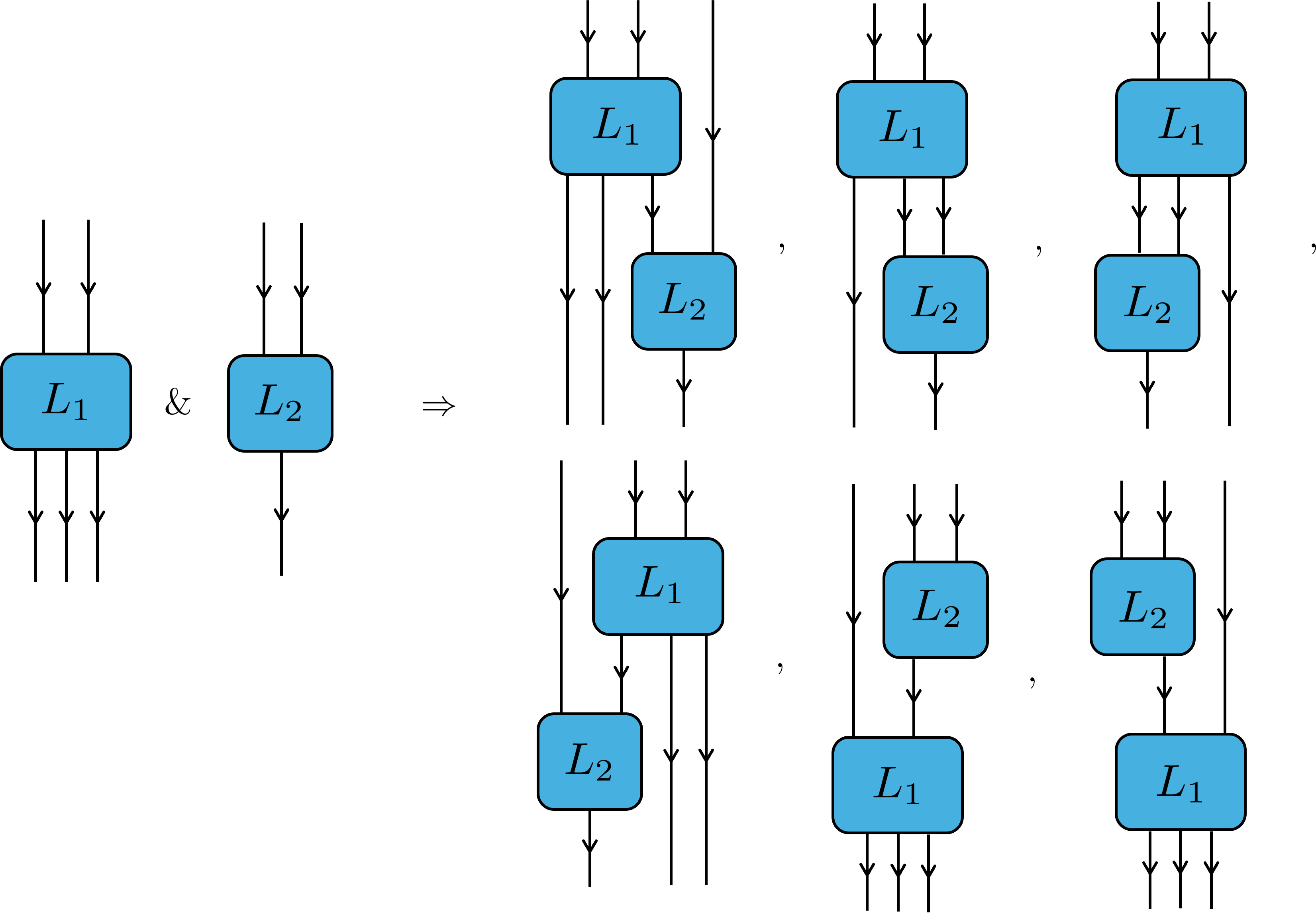}
    \caption{Illustration of the resulting combinations of two specific maps $L_1$ and $L_2$. Black lines represent $\Phi$'s.}
    \label{fig:planar L}
\end{figure}

We thus have quadratic relations $\{Q,Q\}=0$. We denote this structure as a ``planar algebra''. 

The BRST differential gives a differential on $\BC^{\bullet, \bullet}(A)$:
\begin{equation}
    Q L = \{Q,L\} \, .
\end{equation}
We denote the resulting complex as the ``planar Hochschild cohomology complex'' $\BH^{\bullet, \bullet}(V,Q)$ for $(V,Q)$. 

Again, deformations of the planar algebra structure are controlled by a quadratic MC equation
\begin{equation}
    Q L +\{L,L\}=0 \, .
\end{equation}

On general grounds, a planar symmetry of the underlying theory should manifest itself as a deformation of the planar algebra and thus an element of $\BH^{\bullet, \bullet}(V,Q)$.
We are led to identify $\BH^{\bullet, \bullet}(V,Q)$ as a ``formal planar symmetry algebra'' of the underlying theory/planar algebra. It is not obvious that $\BH^{\bullet, \bullet}(V,Q)$ should coincide with the actual planar symmetry algebra, which e.g. could be a sub-algebra of $\BH^{\bullet, \bullet}(V,Q)$. 
At first sight, $\BH^{\bullet, \bullet}(V,Q)$ is ``too big''. Still, the ability to do algebra computations in $\BH^{\bullet, \bullet}(V,Q)$ should be invaluable. 

The definitions can be extended to the case where $Q^{(m)}$ includes a constant source term $\{Q|\}_m$, i.e. $V$ is a {\it curved} planar algebra. 

\subsection{Modules}
Fundamental flavours in a theory will transform at tree level as \begin{equation}
    Q_M^{(0)} J = \{Q_M|;J\} + \{Q_M|\Phi; J\} + \{Q_M|\Phi, \Phi; J\}+ \cdots
\end{equation}
where 
\begin{equation}
    \{Q_M|\bullet, \cdots; \bullet\} : V^{\otimes n}\otimes M \to M \, .
\end{equation}
Suppose we have a tree level BRST differential $Q$ that defines a $A_\infty$ algebra structure on $V$. The condition $(Q+Q_M)^2=0$ gives a set of quadratic relations which makes $M$ equipped with $\{Q_M|\bullet, \cdots, \bullet;\bullet\}$ into an $A_\infty$ module of $V$.

Given two modules $M_1$ and $M_2$, we can consider the collection $L_{12}$ of maps $V^{\otimes n}\otimes M_1 \to M_2$. We can denote the transformation 
\begin{equation}
    L_{12} J_2 = \{L_{12}|;J_1\} + \{L_{12}|\Phi;J_1\} + \{L_{12}|\Phi, \Phi;J_1\}+ \cdots
\end{equation}
by the same symbol. The composition $\bullet \circ \bullet$ can be defined in such a manner that 
\begin{equation}
\label{eq:tree modules L}
    L_{12}L_{23} J_3 = \{L_{12} \circ L_{23}|;J_1\} + \{L_{12} \circ L_{23}|\Phi;J_1\} + \{L_{12} \circ L_{23}|\Phi, \Phi;J_1\}+ \cdots
\end{equation}
and is given by inserting $L_{12}$ into the corresponding slot of $L_{23}$ (see Figure~\ref{fig:tree modules}).

\begin{figure}[h]
    \centering
    \includegraphics[width=0.9\linewidth]{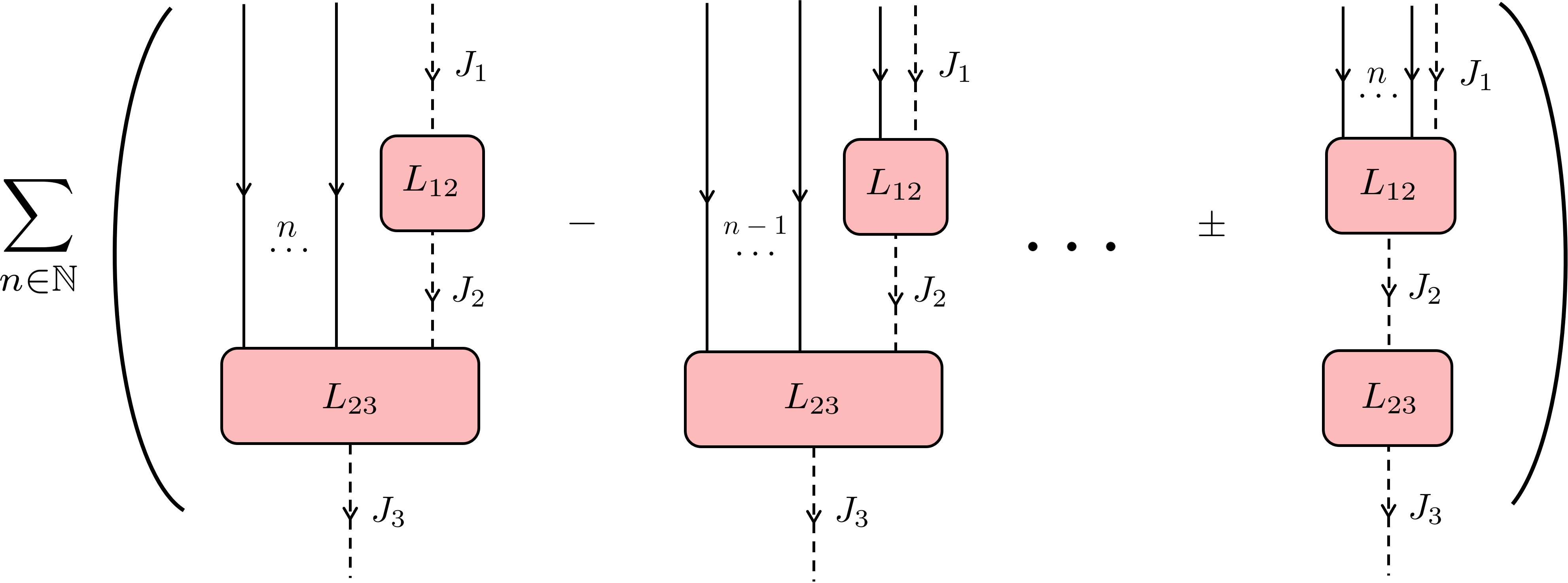}
    \caption{Diagrams corresponding to the RHS of Equation~\eqref{eq:tree modules L}. Solid lines represent $\Phi$'s, dashed lines represent $J$'s.}
    %\vspace{-10pt}
    \label{fig:tree modules}
\end{figure}

Note that we can also compose a map $Q:V^{\otimes \bullet} \to V
$ with $L_{12}$ such that $Q\circ L_{12}$ is a sum of over all possible ways of inserting $Q$ into $L_{12}$. Using this notation, we can define a differential on the space of maps $\{L_{12}\}$
\begin{equation}
    Q L_{12} = (Q + Q_1)\circ L_{12} - (-1)^{\cdots} L_{12}\circ Q_2\,.
\end{equation}
By definition, an $A_\infty$ morphism from $M_1$ to $M_2$ is a map $L_{12}$ such that $QL_{12} = 0$.

\subsection{Planar modules}
At the planar level, the BRST symmetry will now act on a fundamental field $J$ together with a collection of consecutive $\Phi$ fields before it:
\begin{equation}
    Q_M^{(m)} \Phi^{\otimes m} \otimes J = \{Q_M|;J\}_m + \{Q_M|\Phi; J\}_m + \{Q_M|\Phi, \Phi; J\}_m+ \cdots
\end{equation}
encoded in maps 
\begin{equation}
    \{Q_M|\bullet, \cdots; \bullet\}_m : V^{\otimes n}\otimes M \to V^{\otimes m}\otimes M \, .
\end{equation}
We call such a pair a $(M,Q_M)$ a ``planar module'' for the planar algebra $(V,Q)$ if it satisfy $(Q+Q_M)^2$. 

We get for free a category $\mathrm{PMod}_V$ of planar modules. Given two planar modules $M_1$, $M_2$, we consider collections of maps 
\begin{equation}
    L_{12}: V^{\otimes n}\otimes M_1 \to V^{\otimes m}\otimes M_2\,.
\end{equation}
These can be composed in an obvious manner (see for example Figure~\ref{fig:planar modules}). 

\begin{figure}[h]
    \centering
    \includegraphics[width=0.45\linewidth]{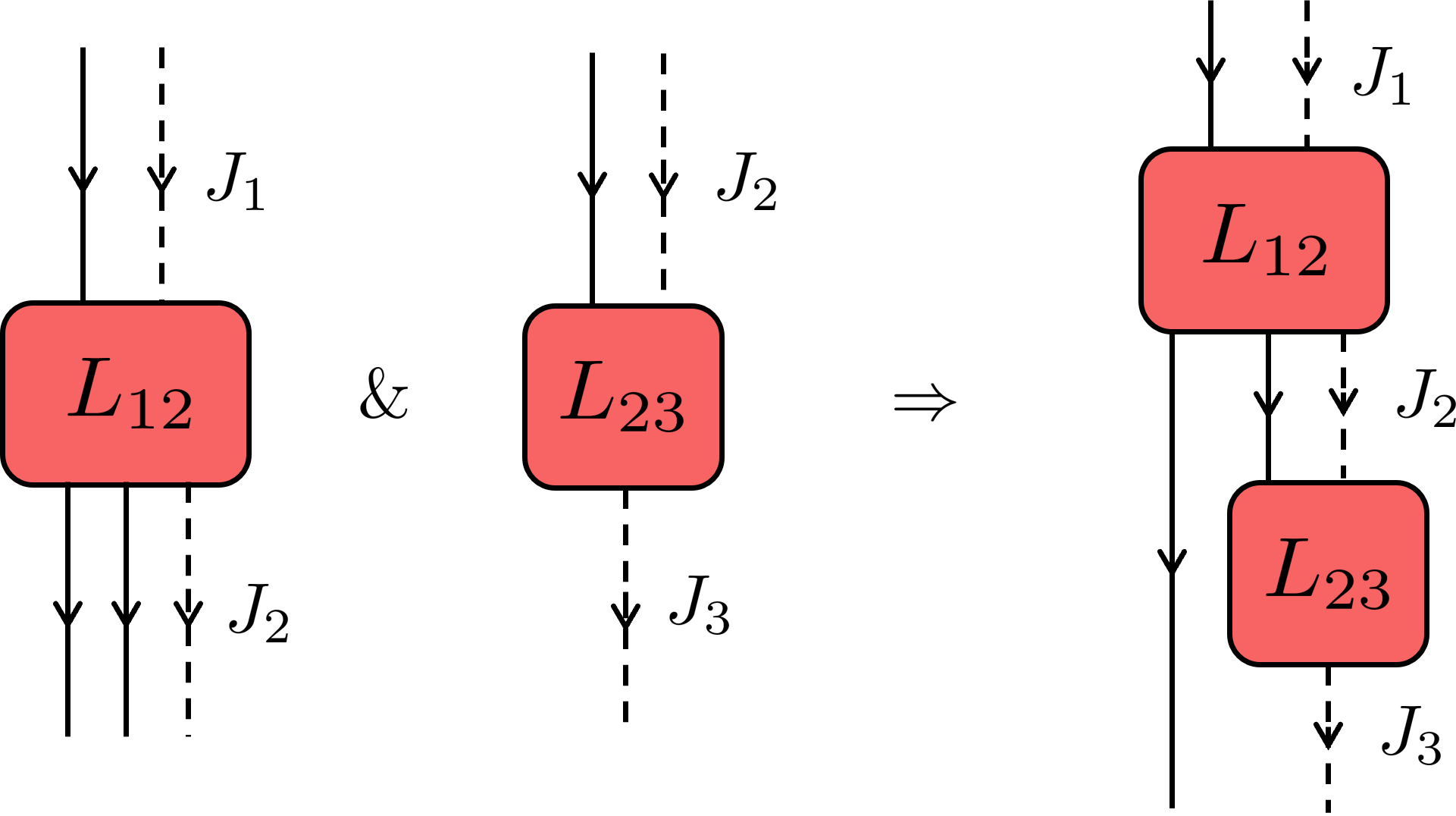}
    \caption{Illustration of the resulting combinations of two specific maps $L_{12}$ and $L_{23}$. Solid lines represent $\Phi$'s, dashed lines represent $J$'s.}
    \label{fig:planar modules}
\end{figure}

Furthermore, 
we have a differential defined in the obvious way:
\begin{equation}
    Q L_{12} = (Q+Q_1) L_{12} - (-1)^{\cdots} L_{12}(Q+ Q_2)\,.
\end{equation}
Then a morphism in the category of planar modules is defined to be a map $L_{12}$ that $QL_{12} = 0$ \footnote{The derived category $\mathrm{PMod}_V$ should be defined by a suitable localization with respect to the class of quasi-isomorphisms.}.

Again, $\mathrm{PMod}_V$ seems to be a formal version of an actual category which describes planar calculations in the presence of fundamental fields: a consistent way to include fundamental fields will map to an object of $\mathrm{PMod}_V$ and deformations of that will map to its morphisms. In other words, there should be an actual planar category which deforms the category of $A_\infty$ modules which appear at tree level and which has a functor to $\mathrm{PMod}_V$ which is likely not surjective. 

The planar-level statement of open-closed duality is that 
there should be a back-reacted world-sheet theory reproducing planar calculations. A category 
is a natural way to present a 2d TFT. It would be interesting to know if the formal category 
$\mathrm{PMod}_V$ could give an algebraic description of the back-reacted world-sheet theory.  As a test of this idea, we are led to the following conjectures:
\begin{itemize}
    \item ``Easy conjecture'': the planar global symmetry algebra $\BH^{\bullet, \bullet}(V)$ acts on $\mathrm{PMod}_V$, i.e. there is a nice map from $\BH^{\bullet, \bullet}(V)$ to the Hochschild cohomology of $\mathrm{PMod}_V$.
    \item ``Unlikely conjecture'': the planar global symmetry algebra $\BH^{\bullet, \bullet}(V)$ is quasi-isomorphic (in the category of $L_\infty$ algebra) to the Hochschild cohomology of $\mathrm{PMod}_V$.
\end{itemize}
The first conjecture, if true, should follow from some careful diagram-chasing. 

\section{The Planar global symmetry algebra}\label{sec:planar_symmetry_algebra}
We now specialize the general considerations to the case of the global symmetry algebras. We should first discuss planar corrections to the BRST cohomology of single-trace operators (ignoring multi-traces).

% Once we allow $\lambda \neq 0$, the space of single-trace operator is equipped with an inner product $\langle \bullet, \bullet\rangle$ given by the planar part of the two-point function, i.e. by summing over the ways of Wick-contracting fields along the cyclic order of the traces. This gives a power of $\lambda$ equal to the number of fields in the trace.

In this approximation, the BRST differential is the sum of two parts $Q_0$ and $\lambda Q_1$. The latter acts on two consecutive fields in the single-trace operator and maps it to a single field (with an extra derivative acting on it):
\begin{equation}
    \Phi(s) \otimes \Phi(s') \to \left(\frac{\Phi(s)-\Phi(s')}{s-s'} \eta_1^{(1)}\eta_2^{(1)}\right) \eta_2^{(2)}\otimes \eta_1^{(2)}  = \frac{\Phi(s)-\Phi(s')}{s-s'} \eta^{(1)}\otimes \eta^{(2)}\,.
\end{equation}
% The BRST differential is self-adjoint in the two-point functions. In the planar limit, we expect that the adjoint of $Q_0$ must coincide with $\lambda Q_1$.

% Another important effect is that the $L_1$ generator is also modified at the planar order, and thus the $SL(2,\bC)$ action and definition of quasi-primary single-trace operator is deformed. The one-loop piece $L_1^{(1)}$ acts as 
% \begin{equation}
%     \Phi(s) \otimes \Phi(s') \to \Delta \eta^{(1)}\otimes \eta^{(2)}
% \end{equation}
% \dg{Is there a second term with $\Delta$ acting on $\eta^{(2)}$? Should they be combined somehow?}

Overall, we obtain a deformation $\CC_\lambda^\bullet[A]$ of the 
cyclic cohomology complex $\CC^\bullet[A[[s]]]$. We propose this as a dg-TFT description of the collection of distributional local operators in the back-reacted world-sheet theory.

(Planar, single trace) cohomology classes will be polynomials in $\lambda$,
starting with a $Q_0$ cohomology class and continuing with corrections which compensate for the action of $\lambda Q_1$ on the leading term. It may or not be possible to complete a $Q_0$ class to a full class. We can express the obstruction as the action of $Q_1$ on $\HC^{\bullet}[A[[s]]]$.

\subsection{The planar mode algebra}
The modes in the planar algebra act on single-trace local operator at the linearized level as a sum of transformations which act on a sequence of consecutive symbols in the trace. We can thus represent them as elements in the planar algebra $\BC^{\bullet, \bullet}(A[[s]])$ and $\fL_\lambda$ as a dg-Lie sub-algebra of 
the planar algebra.

The planar part of the action of some $O_{n;C}$
mode is computed by taking $m+1$ Wick contractions 
of consecutive fields in $\cO_C$ and in the target. 
Concretely: 
\begin{equation}\label{eq:ad_planar_mode_action_with_contour}
\begin{aligned}
        &\Phi(s_1) \otimes \cdots \otimes \Phi(s_{m+1}) \\
        &\to \oint \frac{dz}{2 \pi i} z^n \left(C|\Phi(z+s),\cdots, \frac{\eta^{(1)}_{m+1}}{z+s-s_{m+1}},\cdots,\frac{\eta^{(1)}_{1}}{z+s-s_{1}}\right) \eta^{(2)}_1 \otimes \cdots \eta^{(2)}_{m+1}\,,
\end{aligned}
\end{equation}
i.e. 
\begin{equation}\label{eq:ad_planar_mode_action_without_contour_old}
    \Phi(s_1) \otimes \cdots \otimes \Phi(s_{m+1}) \to \sum_{n_i|n=m+\sum_i n_i}\left(C|\Phi(s),\cdots, E^{(1)}_{n_{m+1}},\cdots,E^{(1)}_{n_{1}}\right) E^{(2)}_{n_{1}} \otimes \cdots E^{(2)}_{n_{m+1}}\,.
\end{equation}
gives the action of $O^{(m)}_{n;C}$. In Appendix \ref{app:planar action}
 we provide a different expression for the planar action of the mode algera.
\subsection{The planar fundamental algebra}
By the same token, we can define the action of the modes of mesons, collected into the planar algebras $\fP_\lambda$ and in particular $\mathfrak{O}_\lambda(A)$. They are embedded into the category $\mathrm{PMod}_{A[[s]]}$.

Recall that the tree-level cohomology of mesonic operators is dual to $A_0^\vee \otimes_A A_0$ up to a ghost number shift. The mesons built from the $c$ ghost only are computed by $(A_0^\vee \otimes_{A_0} A_0)^\vee = A_0$ and are recognized as gauge-invariant bilinears of the schematic form $IJ$.

The mesons of the schematic form $I Z J$ are roughly labelled by elements of $A_1$. Each gives rise to two modes in $\mathfrak{O}_\lambda(A)$,
which are somewhat analogous to coordinates on the dual geometry. At tree level, the product of such elements lands on modes of mesons of the schematic form $IZZJ$. As in the canonical example, this leads to bilinear relations associated to the $Q_0$ image of $IbJ$ mesons. These bilinear relations are deformed at order $\lambda$ by $IJ$ zero modes. 

This is a non-commutative version of the geometric transition from the resolved conifold to $SL(2,\bC)$. 

\section{The non-commutative algebra at planar level} \label{sec:ncplanar}
We now go back to the example in Section \ref{sec:nctree} of the $U(N)$ gauge theory with $OSp(2n|2n+2)$
flavour symmetry. Recall that in this example $\omega^{ab} \omega_{ab} = -2$.

\subsection{The global algebra of meson modes.}
We introduce the notation 
\begin{equation}
    Z^{[a_1}Z^{a_2} \cdots Z^{a_m]}
\end{equation}
to denote a ``non-commutative traceless part'' of the product of $Z$'s. This is a linear combination of products which vanishes when we contract $\omega$ with any consecutive pairs of indices. These elements provide us a basis of the non-commutative algebra $A^{!} = \bC\langle\zeta_a\rangle/(\omega^{ab}\zeta_a\zeta_b = 0)$ that we find in Section \ref{sec:nctree}.

We can define it recursively:
\begin{align}
    Z^{[a}Z^{a_1} \cdots Z^{a_m]}&\equiv Z^a Z^{[a_1}Z^{a_2} \cdots Z^{a_m]}+ f_{1,m} \omega^{a a_1} \omega_{bc} Z^b Z^{[c}Z^{a_2} \cdots Z^{a_m]} \cr 
    &+ f_{2,m} \omega^{a_1 a_2} \omega_{bc} Z^b Z^{[c}Z^{a}Z^{a_3} \cdots Z^{a_m]}\cr 
    &+ f_{3,m} \omega^{a_2 a_3} \omega_{bc} Z^b Z^{[c}Z^{a}Z^{a_1}Z^{a_4} \cdots Z^{a_m]} \cr &+ \cdots
\end{align}
We need 
\begin{align} \label{eq:cartan}
    1 - 2 f_{1,m} + f_{2,m} &=0 \cr
    f_{1,m}- 2 f_{2,m} + f_{3,m} &= 0 \cr
    \cdots
\end{align}
i.e. $f_{i,m} =1-\frac{i}{m+1}$.

Explicitly,
\begin{align}
    Z^{[a_1}Z^{a_2]} &= Z^{a_1} Z^{a_2} + \frac12 \omega^{a_1 a_2} \omega_{b_1 b_2} Z^{b_1} Z^{b_2}  \label{eq_ad:Z2_traceless}\\
    Z^{[a_1}Z^{a_2}Z^{a_3]}&= Z^{a_1} Z^{a_2} Z^{a_3} + \frac23 \omega^{a_2 a_3} \omega_{b_1 b_2} Z^{a_1} Z^{b_1} Z^{b_2} + \frac23 \omega^{a_1 a_2} \omega_{b_1 b_2} Z^{b_1} Z^{b_2} Z^{a_3}+ \cr &+\frac13 \omega^{a_1 a_2} \omega_{b_1 b_2} Z^{a_3} Z^{b_1} Z^{b_2}+\frac13 \omega^{a_2 a_3} \omega_{b_1 b_2} Z^{b_1} Z^{b_2} Z^{a_1}  \, . \label{eq_ad:Z3_traceless}
\end{align}
Etcetera. 

Mesons built from this combinations:
\begin{equation}\label{eqn_towerf_nc}
     \frac{1}{\hbar} IZ^{[a_1} \cdots Z^{a_k]} J
\end{equation}
are $Q$-closed at the linearized planar level and not exact. They are explicit representatives for the abstract $A^!$ representatives in $Q_0$-cohomology and we expect to exhaust $\mathrm{Ops}^\partial_\lambda$.  

As explained in Section~\ref{subsec:open_symmetry_algebra}, the commutators between the modes of these mesons define the algebra of functions on the back-reacted non-commutative geometry. We denote them by
\begin{align}
u^{a_1...a_k}_{n_1...n_k} = \oint dz\, z^{n_1+...+n_k} \frac{1}{\hbar} IZ^{[a_1} \cdots Z^{a_k]} J\,,
\end{align}
where the bottom $n_i$ indices take value $0$ or $1$ and are symmetric, representing a spin $k/2$ representation of $SL(2,\bC)$ global conformal 
symmetry, and the top indices are non-commutative traceless. The two conditions can be expressed as the vanishing of contractions of consecutive indices with either $\omega_{a_i a_{i+1}}$ or $\epsilon^{n_i n_{i+1}}$.

The algebra they form can be obtained following lengthy yet conceptually simple 2d CFT calculations. At tree level, one simply concatenates two strings of $Z$'s and drops $\omega$ contractions as being $Q_0$ exact, so that 
\begin{equation}
    u^{a_1\cdots a_r}_{n_1 \cdots n_r} \cdot u^{a_{r+1}\cdots a_{r+s}}_{n_{r+1} \cdots n_{r+s}} = u^{a_1\cdots a_{r+s}}_{n_1 \cdots n_{r+s}} + O(\lambda)\,.
\end{equation}
Therefore, we can identify the tree level algebra as the subalgebra of $A^{!}[z]$ generated by $u^a_0 = \zeta^a,u^a_1 = \zeta^az$. This is a non-commutative generalization to the algebra of functions on the singular conifold.

Now we discuss non-planar corrections, which deform the tree level algebra and play a role analogous to the conifold transition. Each power of $\lambda$ reduces the number of $Z$'s by two and thus multiplies $u$'s with two fewer indices. 

A straightforward calculation gives 
\begin{equation}
    u^{a_1}_{n_1} \cdot u^{a_2}_{n_2} = u^{a_1 a_2}_{n_1 n_2} - \frac{\lambda}{2} \epsilon_{n_1n_2} \omega^{a_1 a_2}\,,
\end{equation}
and thus 
\begin{align}
    \omega_{a_1 a_2} u^{a_1}_{n_1} \cdot u^{a_2}_{n_2} &= \lambda \epsilon_{n_1 n_2}  \cr
     \epsilon^{n_1 n_2}u^{a_1}_{n_1} \cdot u^{a_2}_{n_2} &= \lambda \omega^{a_1 a_2}\,,
\end{align}
which nicely generalize the equation defining $SL(2,\bC)$ in the canonical example. 

These equations and constraints essentially fix the whole algebra. For example, consider the triple product ansatz
\begin{equation}
    u^{a_1}_{n_1} \cdot u^{a_2}_{n_2}\cdot u^{a_3}_{n_3} = u^{a_1 a_2 a_3}_{n_1 n_2 n_3} + \lambda \left[A_{n_1}^{a_1 a_2 a_3} \epsilon_{n_2 n_3}+B_{n_3}^{a_1 a_2 a_3} \epsilon_{n_1 n_2} \right]\,.
\end{equation}
We only need two terms on the right hand side because the tensor product of three fundamental representations of $SL(2,\bC)$ contains two copies of the fundamental representation. 

The ansatz must satisfy 
\begin{align}
     \epsilon^{n_1 n_2}u^{a_1}_{n_1} \cdot u^{a_2}_{n_2} \cdot u^{a_3}_{n_3}&= \lambda \left[A_{n_3}^{a_1 a_2 a_3}-2B_{n_3}^{a_1 a_2 a_3} \right] =\lambda \omega^{a_1 a_2} u^{a_3}_{n_3}\cr
     \epsilon^{n_2 n_3}u^{a_1}_{n_1} \cdot u^{a_2}_{n_2} \cdot u^{a_3}_{n_3}&= \lambda \left[-2A_{n_1}^{a_1 a_2 a_3} +B_{n_1}^{a_1 a_2 a_3} \right]=\lambda \omega^{a_2 a_3}u^{a_1}_{n_1}\,,
\end{align}
which implies 
\begin{equation}
    u^{a_1}_{n_1} \cdot u^{a_2}_{n_2}\cdot u^{a_3}_{n_3} = u^{a_1 a_2 a_3}_{n_1 n_2 n_3} -\frac{\lambda}{3} \left[\epsilon_{n_1 n_2}(2 \omega^{a_1 a_2} u^{a_3}_{n_3} + \omega^{a_2 a_3} u^{a_1}_{n_3})+\epsilon_{n_2 n_3}(\omega^{a_1 a_2} u^{a_3}_{n_1} + 2\omega^{a_2 a_3} u^{a_1}_{n_1}) \right] \, .
\end{equation}
In turn, we derive 
\begin{align}
    u^{a_1}_{n_1} \cdot u^{a_2 a_3}_{n_2 n_3} &= u^{a_1 a_2 a_3}_{n_1 n_2 n_3} -\frac{\lambda}{6} \left[2\epsilon_{n_1 n_2}(2 \omega^{a_1 a_2} u^{a_3}_{n_3} + \omega^{a_2 a_3} u^{a_1}_{n_3})+\epsilon_{n_2 n_3}(2\omega^{a_1 a_2} u^{a_3}_{n_1} + \omega^{a_2 a_3} u^{a_1}_{n_1}) \right] \cr
    u^{a_1 a_2}_{n_1 n_2}\cdot u^{a_3}_{n_3} &= u^{a_1 a_2 a_3}_{n_1 n_2 n_3} -\frac{\lambda}{6} \left[\epsilon_{n_1 n_2}(\omega^{a_1 a_2} u^{a_3}_{n_3} + 2 \omega^{a_2 a_3} u^{a_1}_{n_3})+2 \epsilon_{n_2 n_3}(\omega^{a_1 a_2} u^{a_3}_{n_1} + 2\omega^{a_2 a_3} u^{a_1}_{n_1}) \right] \,.\label{eq:ad_2x1_prod}
\end{align}
We can generalize these expressions, but it is useful to pick a different 
basis of fundamental representations in the tensor product, via the identity 
\begin{equation}
    \epsilon_{n_1 n_2} C_{n_3} +  \epsilon_{n_2 n_3} C_{n_1}+ \epsilon_{n_3 n_1} C_{n_2}=0\,,
\end{equation}
so that 
\begin{align}
    u^{a_1}_{n_1} \cdot u^{a_2 a_3}_{n_2 n_3} &= u^{a_1 a_2 a_3}_{n_1 n_2 n_3} -\frac{\lambda}{6} \left[\epsilon_{n_1 n_2}(2 \omega^{a_1 a_2} u^{a_3}_{n_3} + \omega^{a_2 a_3} u^{a_1}_{n_3})+\epsilon_{n_1 n_3}(2\omega^{a_1 a_2} u^{a_3}_{n_2} + \omega^{a_2 a_3} u^{a_1}_{n_2}) \right] \cr
    u^{a_1 a_2}_{n_1 n_2}\cdot u^{a_3}_{n_3} &= u^{a_1 a_2 a_3}_{n_1 n_2 n_3} -\frac{\lambda}{6} \left[\epsilon_{n_1 n_3}(\omega^{a_1 a_2} u^{a_3}_{n_2} + 2 \omega^{a_2 a_3} u^{a_1}_{n_2})+\epsilon_{n_2 n_3}(\omega^{a_1 a_2} u^{a_3}_{n_1} + 2\omega^{a_2 a_3} u^{a_1}_{n_1}) \right] \,.
\end{align}

Next, we can write an ansatz for a general product at order $O(\lambda^2)$:
\begin{equation}
    u^{a_1}_{n_1} \cdot \cdots \cdot u^{a_k}_{n_k} = u^{a_1 \cdots a_k}_{n_1 \cdots n_k} + \lambda \sum_{i=1}^{k-1} \epsilon_{n_i n_{i+1}} C(i)^{a_1 \cdots a_k}_{n_1 \cdots \hat n_i \hat n_{i+1} \cdots n_k}+ O(\lambda^2)
\end{equation}
and contract with $\epsilon^{n_j n_{j+1}}$ to get (after some relabeling of indices):
\begin{equation}
   \omega^{a_j a_{j+1}} u^{a_1 \cdots \hat a_j \hat a_{j+1} \cdots a_k}_{n_1 \cdots n_{k-2}} =    C(j-1)^{a_1 \cdots a_k}_{n_1 \cdots n_{k-2}}-2 C(j)^{a_1 \cdots a_k}_{n_1 \cdots n_{k-2}}+ C(j+1)^{a_1 \cdots a_k}_{n_1 \cdots n_{k-2}}\,,
\end{equation}
which is solved by inverting the $SU(k)$ Cartan matrix:
\begin{equation}
    C(i)^{a_1 \cdots a_k}_{n_1 \cdots n_{k-2}} = \sum_{j} \left[-\mathrm{min}(i,j) + \frac{ij}{k}\right]\omega^{a_j a_{j+1}} u^{a_1}_{n_1} \cdot \cdots \cdot u^{a_k}_{n_{k-2}}\,,
\end{equation}
so 
\begin{equation}
    u^{a_1}_{n_1} \cdot \cdots \cdot u^{a_k}_{n_k} = u^{a_1 \cdots a_k}_{n_1 \cdots n_k} + \lambda \sum_{i=1}^{k-1} \epsilon_{n_i n_{i+1}} \sum_{j=1}^{k-1} \left[-\mathrm{min}(i,j) + \frac{ij}{k}\right]\omega^{a_j a_{j+1}} u^{a_1 \cdots \hat a_j \hat a_{j+1} \cdots a_k}_{n_1 \cdots \hat n_i \hat n_{i+1} \cdots n_{k}} + O(\lambda^2)\, .
\end{equation}

With a bit of work, this expression is enough to derive a general formula for right multiplication 
\begin{equation}\label{eq:ad_full_right_proudct}
    u^{a_1 \cdots a_k}_{n_1 \cdots n_k} u^{a_{k+1}}_{n_{k+1}} = u^{a_1 \cdots a_{k+1}}_{n_1 \cdots n_{k+1}} - \frac{\lambda}{k(k+1)} \sum_{i=1}^{k} \epsilon_{n_{i} n_{k+1}} \sum_{j=1}^{k} j\omega^{a_j a_{j+1}} u^{a_1 \cdots \hat a_j \hat a_{j+1} \cdots a_{k+1}}_{n_1 \cdots \hat n_{i} \cdots n_{k}} \, .
\end{equation}
We check in Appendix \ref{app:non_com_OPE} that the above formula can also be verified by computing OPE of the mesonic operators.

% We conjecture that the multiplication between $u^{a}_n$ and  $u^{a_1...a_k}_{n_1...n_k}$ from the right is given by:
% \begin{align}
% u^{a_1}_{n_1}\cdot u^{a}_{n} &= u^{a_1a}_{n_1n} + \frac{\lambda}{2\cdot 1}\epsilon_{nn_1}\omega^{a_1 a_2}\nonumber\\
% u^{a_1a_2}_{n_1n_2}\cdot u^{a}_{n} &= u^{a_1a_2a}_{n_1n_2n} + \frac{\lambda}{3\cdot 2}\left[\epsilon_{nn_1}\left(\omega^{a_1 a_2}u^a_{n_2} + 2\omega^{a_2 a}u^{a_1}_{n_2}\right)\right.+\nonumber\\
% &\qquad\qquad\qquad\,\,\,+\left.\epsilon_{nn_2}\left(\omega^{a_1 a_2}u^a_{n_1} + 2\omega^{a_2 a}u^{a_1}_{n_1}\right)\right]\nonumber\\
% &\vdots\nonumber\\
% u^{a_1...a_k}_{n_1...n_k} \cdot u^{a}_n &= u^{a_1...a_k a}_{n_1...n_kn} + \frac{\lambda }{k(k+1)}\sum_{i=1}^k\epsilon_{nn_i}\sum_{l=1}^k l\, \omega^{a_la_{l+1}}u_{n_i...\hat{n_i}...n_k}^{a_1...\hat{a}_l \hat{a}_{l+1}...a_k a}
% \end{align}
% where in the last sum we define $a_{k+1} = a$.
The left multiplication is given by:
\begin{align}
u^{a_0}_{n_0}\cdot u^{a_1 \cdots a_k}_{n_1 \cdots n_k} = u^{a_0 \cdots a_k}_{n_0 \cdots n_k} - \frac{\lambda}{k(k+1)}\sum_{i=1}^{k}\epsilon_{n_0 n_i}\sum_{j=1}^{k}(k-j+1)\, \omega^{a_{j-1} a_{j}}u_{n_1 \cdots\hat{n_i}\cdots n_k}^{a_0 \cdots\hat{a}_{j-1} \hat{a}_{j} \cdots a_k}\,.
\end{align}

We will also need
\begin{align}
    & \epsilon^{n_k n_{k+1}}u^{a_1 \cdots a_k}_{n_1 \cdots n_k} u^{a_{k+1}}_{n_{k+1}} =\frac{\lambda}{k+1} \sum_{j=1}^{k} j\omega^{a_j a_{j+1}} u^{a_1 \cdots \hat a_j \hat a_{j+1} \cdots a_{k+1}}_{n_1 \cdots n_{k-1}} \cr
    & \epsilon^{n_0 n_1} u^{a_0}_{n_0}\cdot u^{a_1 \cdots a_k}_{n_1 \cdots n_k} = \frac{\lambda}{k+1}\sum_{j=1}^{k}(k-j+1)\, \omega^{a_{j-1} a_{j}}u_{n_2\cdots n_k}^{a_0 \cdots\hat{a}_{j-1} \hat{a}_{j} \cdots a_k}\,,
\end{align}
as well as 
\begin{align}
    & \omega_{a_k a_{k+1}} u^{a_1 \cdots a_k}_{n_1 \cdots n_k} u^{a_{k+1}}_{n_{k+1}} = \frac{\lambda}{k} \sum_{i=1}^{k} \epsilon_{n_{i} n_{k+1}} u^{a_1 \cdots a_{k-1}}_{n_1 \cdots \hat n_{i} \cdots n_{k}} \cr
    & \omega_{a_0 a_{1}} u^{a_0}_{n_0}\cdot u^{a_1 \cdots a_k}_{n_1 \cdots n_k} =\frac{\lambda}{k}\sum_{i=1}^{k}\epsilon_{n_0 n_i} u_{n_1 \cdots\hat{n_i}\cdots n_k}^{a_2 \cdots a_k} \,.
\end{align}
\subsection{Single traces}
According to our discussion in Section \ref{sec:nctree}, the single-trace operators in this example also contain the four towers $\mathcal{A}, \mathcal{B}, \mathcal{C},$ $\mathcal{D}$, but with the $\mathcal{B}$ and $\mathcal{D}$ towers collapsed. For the $\mathcal{A}$ tower, one may attempt to build BRST-closed single-trace operators in a similar manner as for mesons, starting from a cyclic string of $Z$'s and removing traces. The analogue of Equations (\ref{eq:cartan}), though, is governed by the affine A-type Cartan matrix and cannot be solved. A single-trace operator which is closed at the linear planar level can be written as
\begin{equation}
    \frac{1}{m \hbar} A_{a_1 \cdots a_m} \Tr Z^{a_1} \cdots Z^{a_m}\,,
\end{equation}
where $A_{a_1 \cdots a_m}$ can be taken to be graded-cyclic symmetric and has to satisfy
\begin{equation}
    \omega^{a_1 a_2} A_{a_1 \cdots a_m}=0 \, .
\end{equation}

We can compute the action of a mode $L_i[A]$ of such an operator on an $IZJ$ meson. Only one Wick contraction is available, so only the zero mode acts non-trivially, giving
\begin{equation}
    [L_0[A], I Z^b J] = A_{a_1 \cdots a_{m-1} a_m} \omega^{a_m b} I Z^{a_1} \cdots Z^{a_{m-1}}J \, ,
\end{equation}
leading to
\begin{equation}
    [L_{n_1 \cdots n_{m-2}}[A], u^b_j] = A_{a_1 \cdots a_{m-1} a_m} \omega^{a_m b} u^{a_1 \cdots a_{m-1}}_{n_1 \cdots n_{m-2}j}\,.
\end{equation}

We expect this to be an infinitesimal automorphism of $\fP_\lambda$, but an explicit check takes a bit of work. We need to verify that 
\begin{align}
    [L_{n_1 \cdots n_{m-2}}[A], \epsilon^{j_1 j_2} u^{b_1}_{j_1} \cdot u^{b_2}_{j_2} ] &=0 \cr
    [L_{n_1 \cdots n_{m-2}}[A], \omega_{b_1 b_2} u^{b_1}_{j_1} \cdot u^{b_2}_{j_2} ] &=0\,.
\end{align}
Most of the terms in the first line drop off due to contractions of $A$ with $\omega$. The two remaining terms cancel thanks to cyclic invariance of $A$.

For the second identity,
\begin{align}
    &[L_{n_1 \cdots n_{m-2}}[A], \omega_{b_1 b_2}u^{b_1}_{j_1} \cdot u^{b_2}_{j_2} ] = \cr &= A_{a_1 \cdots a_{m-1} a_m} u^{a_1 \cdots a_{m-1}}_{n_1 \cdots n_{m-2}j_1} \cdot u^{a_m}_{j_2} -A_{a_1 \cdots a_{m-1} a_m} u^{a_m}_{j_1} \cdot u^{a_1 \cdots a_{m-1}}_{n_1 \cdots n_{m-2}j_2}\,,
\end{align}
the part linear in lambda drops off due to contractions of $A$ with $\omega$, the rest due to cyclic invariance. 

Commutators of such transformations will generate a large automorphism algebra, which may include modes from the second $\mathcal{D}$ tower of single-trace operators \footnote{As we have discussed in Section \ref{sec:nctree}, most of the $\mathcal{D}$ tower operators vanish. This leads to intricate constraints on the OPE/commutator of $\mathcal{A}$ tower operators,which would be interesting to explore further.}. For example, consider the cubic generator 
\begin{align}
    [L_{n_1}[A], u^b_j] &= A_{a_1 a_2 a_3} \omega^{a_3 b} u^{a_1 a_2}_{n_1 j} \cr 
    [L_{n_1}[A], u^{b_1 b_2}_{j_1 j_2} ]
    &= A_{a_1 a_2 a_3} \omega^{a_3 b_1}u^{a_1 a_2 b_2}_{n_1 j_1 j_2} + A_{a_1 a_2 a_3} \omega^{a_3 b_2}u^{b_1 a_1 a_2}_{j_1 j_2 n_1} + \cr
    &-\frac{\lambda}{3}A_{a_1 a_2 a_3} \omega^{a_3 b_1}  \omega^{a_2 b_2} \epsilon_{n_1 j_1}u^{a_1}_{j_2} -\frac{\lambda}{3} A_{a_1 a_2 a_3}  \omega^{a_1 b_1}\omega^{a_3 b_2} \epsilon_{n_1 j_2} u^{a_2}_{j_1}\,. 
\end{align}
Then 
\begin{align}
    &[[L_{(n_1}[A],L_{n_2)}[A']], u^b_j] = [A,A'] \omega^{a_4 b} u^{a_1 a_2 a_3}_{n_1 n_2 j} -\frac{\lambda}{6}\omega^{b_3 b} (\epsilon_{n_1 j} u^{a_1}_{n_2}+\epsilon_{n_2 j} u^{a_1}_{n_1} ) \cdot \cr &\cdot\left[A_{a_1 a_2 a_3} \omega^{a_3 b_1}  \omega^{a_2 b_2} A'_{b_1 b_2 b_3} - A'_{a_1 a_2 a_3} \omega^{a_3 b_1}  \omega^{a_2 b_2} A_{b_1 b_2 b_3}\right] \,,
\end{align}
with
\begin{equation}
    [A,A'] \equiv A_{a_1 a_2 b_1} \omega^{b_1 b_2} A'_{b_2 a_3 a_4} + A_{a_2 a_3 b_1} \omega^{b_1 b_2} A'_{b_2 a_4  a_1}+\omega^{b_1 b_2} A_{a_3 a_4 b_1}A'_{b_2 a_1 a_2} +\omega^{b_1 b_2} A_{a_4 a_1 b_1} A'_{b_2 a_2 a_3} 
\end{equation}
is cyclic invariant but not traceless and
\begin{align}
   & [\epsilon^{n_1 n_2}[L_{n_1}[A],L_{n_2}[A']], u^b_j] = \cr &= \lambda \left[A_{a_1 a_2 a_3} \omega^{a_3 b_1}  \omega^{a_2 b_2} A'_{b_1 b_2 b_3} + A'_{a_1 a_2 a_3} \omega^{a_3 b_1}  \omega^{a_2 b_2} A_{b_1 b_2 b_3} \right]\omega^{b_3 b} u^{a_1}_{j} 
\end{align}
is an $OSp$ infinitesimal rotation.

In order to process this expression further, notice that the BRST image of a $\Tr b Z^{[a_1} \cdots Z^{a_m]}$ operator gives the sum of a single-trace operator and a meson, traced over flavour indices. The single-trace operator would be exact in the absence of flavours. In the presence of flavours, its modes act on the meson BRST cohomology in the same way as the traced meson, i.e. by conjugation by the corresponding element of $\fP_\lambda$.

Accordingly, the single-trace cohomology in ghost number $0$ acts on $\fP_\lambda$ as derivations of $\fP_\lambda$ modulo inner derivations. This is the map to $\HH^{0}(\fP_\lambda, \fP_\lambda)$. For example, the inner derivations
\begin{align}
    [ u^{a_1 a_2}_{n_1 n_2}, u^b_j] &= u^{a_1 a_2 b}_{n_1 n_2 j} -u^{b a_1 a_2}_{n_1 n_2 j} -\frac{\lambda}{3} (\omega^{a_1 a_2} \delta^{b}_{c} +  \omega^{a_2 b} \delta^{a_1}_{c}+ \omega^{b a_1} \delta^{a_2}_{c})(\epsilon_{n_1 j}u^{c}_{n_2}+\epsilon_{n_2 j}u^{c}_{n_1})
\end{align}
can be added to the commutator of two cubic generators to simplify the answer and identify a global $SL(2,\bC)$ generator from the stress tensor contribution. 

We could continue the analysis further to map modes of the $\mathcal{C}$ tower $\Tr \partial c Z^{[a_1} \cdots Z^{a_m]}$ to deformations of $\fP_\lambda$, etcetera. 

\subsection{Determinants and modules}
Consider next a determinant operator of the form 
\begin{equation}
    \det (m + \mu_a Z^a)
\end{equation}
for some vector $\mu_a$ and basic open modifications of the form $I \psi$.

If we act on the modified determinant with $u^a_1$ we need a Wick contraction with the determinant itself, leading to the sort of term 
\begin{equation}
    \omega^{ab} \mu_b I \psi (\bar \psi \psi) \sim m^{-1} \omega^{ab} \mu_b I \psi\,,
\end{equation}
so that 
\begin{equation}
    u^a_1 I \psi = m^{-1} \omega^{ab} \mu_b I \psi\,.
\end{equation}
The action of $u^a_0$ produces more complicated modifications $I Z^a \psi + \cdots$, unless we contract with $\mu$:
\begin{equation}
   (\mu_a u^a_0) I \psi = - m I \psi\,.
\end{equation}
This is compatible with the algebra: 
\begin{equation}
  u^a_1 (\mu_b u^b_0) I \psi - u^a_0 (\mu_b u^b_1) I \psi= \omega^{ab} \mu_b I \psi\,.
\end{equation}

Next, consider 
\begin{equation}
    u^{a_1 a_2}_{n_1 1} I \psi = m^{-1} \omega^{a_2 b} \mu_b u^{a_1}_{n_1} I \psi + \frac{\lambda}{2} \epsilon_{n_1 1} \omega^{a_1 a_2} I \psi\,.
\end{equation}
More generally, the action of any generator except $u^{a_1 \cdots a_k}_{0 \cdots 0}$ can be recursively expressed in terms of the collection of modifications 
\begin{equation}
    u^{a_1 \cdots a_k}_{0 \cdots 0} I \psi\,,
\end{equation}
which is traceless for consecutive indices and satisfies 
\begin{equation}
    \mu_{a_k} u^{a_1 \cdots a_k}_{0 \cdots 0} I \psi = - m u^{a_1 \cdots a_{k-1}}_{0 \cdots 0} I \psi \,.
\end{equation}
We expect this to be a basis of $\fM_\lambda[\mu,m]$.

If we move the determinant to a location $z$, we have $\fM_\lambda[\mu,m,z]$:
\begin{equation}
    u^a_1 I \psi = m^{-1} \omega^{ab} \mu_b I \psi + z u^a_0 I \psi \, .
\end{equation}

The analysis is easily generalized to multiple determinants. We replace $m$ with a matrix $\rho$ and introduce diagonal matrices $z$ and $\mu$ representing positions and orientations of the determinants. We promote $\psi$ to a vector acted from the right by these matrices. We get 
\begin{align}
    u^a_1 I \psi &=  \omega^{ab} I \psi \mu_b \rho^{-1} + u^a_0 I \psi z\,, \cr
    u^a_0 I \psi \mu_a &= - I \psi \rho\,.
\end{align}

If we impose 
\begin{equation}
  u^a_1 u^b_0 I \psi \mu_b - u^a_0 u^b_1 I \psi \mu_b = \omega^{ab} I \psi \mu_b\,,
\end{equation}
we get 
\begin{equation}
  -(\omega^{ab} I \psi \mu_b \rho^{-1} + u^a_0 I \psi z) \rho - u^a_0 (\omega^{bc} I \psi \mu_c \rho^{-1} + u^b_0 I \psi z) \mu_b = \omega^{ab} I \psi \mu_b\,,
\end{equation}
i.e. we recover the saddle equations
\begin{equation}
  \omega^{bc} \mu_c \rho^{-1}  \mu_b= [\rho, z]
\end{equation}
as a condition for $\fM_\lambda[\mu,\rho,z]$ to exist. This module is the non-commutative analogue of the spectral curve. 

\section{Conclusion and open questions} \label{sec:conclude}
We employed the global symmetry algebra $\fP_\lambda$ of mesonic operators as a way to generalize the notion of the algebra of holomorphic functions on $SL(2,\bC)$ and ascribe an holographic dual nc-geometry to a generic chiral algebra which admits a 't Hooft expansion. 

The category of $\fP_\lambda$-modules seems a good description of a category of D-branes in the dual nc-geometry. In particular, it contains modules $\fM_\lambda$ associated to saddle points of correlation functions of determinant operators, i.e. D-branes which ``reach the holographic boundary'' at a collection of points. 

We leave the following points unresolved: 
\begin{enumerate}
    \item We expect $\fP_\lambda$ to be a smooth 3d Calabi-Yau algebra. A proof would require one to present an explicit class in $\HH_3(\fP_\lambda)$ (and should be an element in the negative cyclic homology $\HC_3^{-}(\fP_\lambda)$). This is a cyclic element with four entries valued in $\fP_\lambda$. It is not hard to write such a class for $SL(2,\bC)$ from the volume form and guess a generalization. We do not know, though, how to derive it from the chiral algebra. 
    \item We expect the category of $A_\infty$ modules of $\fP_\lambda$ to capture the category of branes in the dual nc-geometry. Essentially by construction, we have an $A_\infty$ morphism from the space of ``determinant modifications'' of a determinant operator in the chiral algebra to the space of $A_\infty$ endomorphisms of the corresponding module $\fM_\lambda$. Ideally, we would like to show that this is a quasi-isomorphism and generalize the statement to any pairs of determinants. 
    \item We expect the divergence-free part of $\HH^{\bullet}(\fP_\lambda, \fP_\lambda)$ to match the global symmetry algebra $\fL_\lambda$ of single-trace operators. We only have $L_\infty$ morphisms from the latter to the former. 
\end{enumerate}

It seems a harder problem to derive from the data of $\fP_\lambda$ the planar cohomology of single-trace operators or even the planar cohomology of mesons, i.e. the spaces of distributional vertex operators which build up the holographic dictionary. Indeed, quantities such as $\HC^{\bullet}(\fP_\lambda)$ capture fully distributional closed string states rather than the ones we need. 

Determinant operators and giant graviton branes may allow one to side-step this challenge in two related ways:
\begin{itemize}
    \item The cyclic cohomology of the endomorphisms of a giant graviton brane, or of a category of giant graviton branes, has the correct properties to recover the closed string states which appear as boundary-to-bulk holographic propagators.
    \item If we take $m \to \infty$ and expand a (possibly modified) determinant operator in inverse powers of $m$, the outcome is a sequence of BRST-closed multi-trace operators. In particular, the sub-leading term is single-trace. This is another way to relate closed string states to the cyclic homology of determinant modifications. 
\end{itemize}
We leave these problems to future work.

Our general strategy can be readily applied to other examples of twisted holography \cite{Costello:2020jbh,Fernandez:2024tue}. Low-hanging targets are gauged quantum-mechanical systems of large matrices \cite{Ishtiaque:2018str} and matrix models. 

\section*{Acknowledgements}
We would like to thank Kasia Budzik, Kevin Costello, Victor Ginzburg, Yan Soibelman and Ben Webster for useful discussions. This research was supported in part by a grant from the Krembil Foundation. DG, ALR and HS are supported by the NSERC Discovery Grant program and by the Perimeter Institute for Theoretical Physics. KZ is supported by Harvard University CMSA. Research at Perimeter Institute is supported in part by
the Government of Canada through the Department of Innovation, Science and Economic
Development Canada and by the Province of Ontario through the Ministry of Colleges and
Universities.

\appendix

\section{The Planar Mode Action: An Expression for the Contour Integral}
\label{app:planar action}
In Section~\ref{sec:planar_symmetry_algebra} we defined the expression for the planar-linear action of modes $O_{n;C}$ on a tensor product of fields \eqref{eq:ad_planar_mode_action_with_contour}, \eqref{eq:ad_planar_mode_action_without_contour_old}. In this appendix we provide a different expression for the planar mode action after one performs the contour integral. Recall that the planar action of a mode\footnote{In this section we'll use $l_i$ to refer to some derivative counters as it simplifies OPE expressions below.}
\begin{align}
O_{n;C} = \oint dz \, (C|\Phi(z+l_{|C|},) \cdots,\Phi(z+l_1))
\end{align}
is given by a polynomial in the 't Hooft coupling $\lambda$. 
\begin{align}
[O_{n;C}, \Phi(s_1)\otimes \cdots\Phi(s_k)] &= O^{(0)}_{n;C}(\Phi(s_1)\otimes \cdots\Phi(s_k)) + \lambda O^{(1)}_{n;C}(\Phi(s_1)\otimes \cdots\Phi(s_k)) +\cdots\nonumber\\
&+ \lambda^m O^{(m)}_{n;C}(\Phi(s_1)\otimes \cdots\Phi(s_k))\,,
\end{align}
where the $O^{(m)}_{n;C}$ contains $m+1$ wick contractions and is given by
\begin{align}
&O^{(m)}_{n;C},(\Phi(s_1) \otimes \cdots \otimes \Phi(s_{m+1})) = \nonumber\\
&\oint \frac{dz}{2 \pi i} z^n \left(C|\Phi(z+l_{|C|}),\cdots, \frac{\eta^{(1)}_{m+1}}{z+l_{m+1}-s_{m+1}},\cdots,\frac{\eta^{(1)}_{1}}{z+l_1-s_{1}}\right) \eta^{(2)}_1 \otimes \cdots \eta^{(2)}_{m+1}\,.\label{eq:planar_action_m_contractions_appendix}
\end{align}

% in particular, the case with two wick contractions reads
% \begin{equation}\label{eq:planar_action_2_contractions}
% O^{(1)}_{n;C}(\Phi(s_1) \otimes \Phi(s_{2})) = \oint \frac{dz}{2 \pi i} z^n \left(C|\Phi(z+s),\cdots, \frac{\eta^{(1)}_{2}}{z+s-s_{2}},\frac{\eta^{(1)}_{1}}{z+s-s_{1}}\right) \eta^{(2)}_1 \otimes \eta^{(2)}_{2}
% \end{equation}

% Here we provide a different expression for the contour integral of the RHS of \eqref{eq:planar_action_m_contractions_appendix}. Observe that 

% It proves useful to rename the $s$ variables to keep track of the slot they belong to. We'll refer to them as $l_i$, where $l_i$ is the variable associated to the $i$th  slot in $(C|...)$ counting from right to left. In this way, expression \eqref{eq:planar_action_m_contractions_appendix} becomes
% \begin{align}\label{eq:planar_action_m_contractions_with_ls}
% &O^{(m)}_{n;C},(\Phi(s_1) \otimes \cdots \otimes \Phi(s_{m+1})) =\nonumber\\
% &\oint \frac{dz}{2 \pi i} z^n \left(C|\Phi(z+l_{|C|}),\cdots, \frac{\eta^{(1)}_{m+1}}{z+l_{m+1}-s_{m+1}},\cdots,\frac{\eta^{(1)}_{1}}{z+l_1-s_{1}}\right) \eta^{(2)}_1 \otimes \cdots \eta^{(2)}_{m+1}
% \end{align}
To provide an expression for this contour integral we note that integrals of the form
\begin{align}
\oint dz \frac{f(z)}{(z-z_1)(z-z_2)\cdots(z-z_k)}\,,
\end{align}
where all poles are enclosed by the contour, can be expressed in terms of a ratio of Vandermonde-like determinants
\begin{align}
\int dz \frac{f(z)}{(z-z_1)(z-z_2)\cdots(z-z_k)} = \frac{D_k(f(z))}{D_k(z^{k-1})}\,,
\end{align}
with $D_{k}(f(z))$ the determinant
\begin{align}
D_{k}(f(z)) \equiv 
\begin{vmatrix}
     & f(z_k) & z_k^{k-2} & z_k^{k-3} & \cdots & z_k & 1\\
     & f(z_{k-1}) & z_{k-1}^{k-2} & z_{k-1}^{k-3} & \cdots & z_{k-1} & 1\\
     & \vdots &           &\ddots     &        &     & \vdots\\
     & f(z_1) & z_1^{k-2} & z_1^{k-3} & \cdots & z_1 & 1
\end{vmatrix}\,.
\end{align}

We will call this ratio $R_k(f(z)) \equiv \frac{D_{k}(f(z))}{D_{k}(z^{k-1})}$. 

In Equation~\eqref{eq:planar_action_m_contractions_appendix} the corresponding $f(z)$ is
\begin{align}
f(z) = z^n \Phi(z+l_{|C|})\otimes\cdots\otimes \Phi(z+l_{m+2})\,,
\end{align}
and the poles are at $z_i = s_i - l_i$ for $i=1,..,m+1$. Therefore, after performing the contour integral we get
\begin{align}
&O^{(m)}_{n;C},(\Phi(s_1) \otimes \cdots \otimes \Phi(s_{m+1})) =\nonumber\\
&\left(C|R_{m+1}\left(z^n \Phi(z+l_{|C|})\otimes\cdots\otimes \Phi(z+l_{m+2})\right), \eta^{(1)}_{m+1},\cdots,\eta^{(1)}_{1}\right) \eta^{(2)}_1 \otimes \cdots \eta^{(2)}_{m+1}\,.
\end{align}

To give a more concrete idea of the meaning of this expression, let's write it explicitly for the case with two wick contractions and a $C$ that takes three arguments, $|C| = 3$. In this case, the action is given by
\begin{align}
&O^{(1)}_{n;C},(\Phi(s_1) \otimes \Phi(s_{2})) =\nonumber\\
&\left(C|R_2\left(z^n \Phi(z+l_{3})\right), \eta^{(1)}_{2},\eta^{(1)}_{1}\right) \eta^{(2)}_1 \otimes \eta^{(2)}_{2}\,,
\end{align}
where $R_2(z^n\Phi(z+l_{3}))$ is explicitly given as
\begin{align}
R_2(z^n\Phi(z+l_{3})) &= \frac{
\begin{vmatrix}
(s_2-l_2)^n\Phi(s_2-l_2+l_{3}) & \quad 1\\
(s_1-l_1)^n\Phi(s_1-l_1+l_{3}) & \quad 1
\end{vmatrix}
}{
\begin{vmatrix}
s_2-l_2 & \quad 1\\
s_1-l_1 & \quad 1
\end{vmatrix}
}\\
&=\frac{(s_2-l_2)^n\Phi(s_2-l_2+l_{3}) - (s_1-l_1)^n\Phi(s_1-l_1+l_{3})}{(s_2-l_2)- (s_1-l_1)} \,.
\end{align}

The planar action of single trace modes on mesons, and of mesons on mesons can be similarly expressed in terms of $R_m$. We'd find correspondingly
\begin{align}
&O^{(m)}_{n;C}(I(s_0)\otimes\Phi(s_1) \otimes \cdots \otimes J(s_{m+2})) =\nonumber\\
&I(s_0)\otimes\left(C|R_{m+1}\left(z^n \Phi(z+l_{|C|})\otimes\cdots\otimes\Phi(z+l_{m+2})\right), \eta^{(1)}_{m+1},\cdots,\eta^{(1)}_{1}\right) \eta^{(2)}_1 \otimes \cdots \eta^{(2)}_{m+1} \otimes J(s_{m+2})\,,
\end{align}
and
\begin{align}
&O^{(m)}_{n;P}(I(s_1)\otimes\Phi(s_2) \otimes \cdots \otimes J(s_{m+2})) =\nonumber\\
&\left(C|R_{m+1}\left(z^n I(z+l_{|C|})\otimes\cdots\otimes\Phi(z+l_{m+2})\right), \eta^{(1)}_{m+1},\cdots,\mu^{(1)}_{1}\right) \mu^{(2)}_1 \otimes \cdots \eta^{(2)}_{m+1} \otimes J(s_{m+1})\,.
\end{align}

\section{Cyclic cohomology of a weighted algebra}
\label{appendix:vanish_S}
In this section, we discuss properties of cyclic cohomology of a weighted algebra. In this paper, the weight of the algebra is provided by their scaling dimension. In particular, we would like to prove the statement that the Connes' periodicity map $S$ vanishes on the positive weight part of cyclic cohomology. We follow the discussion in \cite{loday2013cyclic}, where the homology version of this statement is proved.

First recall that a derivation (of degree $0$) of an algebra $A$ into  itself is a map $D:A\to A$ such that $D(ab) = D(a)b + aD(b)$. One can extend the derivation on $A$ to Hochschild cochain $CC^{\bullet}(A) := CC^{\bullet}(A,A^*)$ via the formula
\begin{equation}
	(L_Df)(a_0,a_1,\dots,a_n) = \sum_{i \geq 0} f(a_0,\dots,Da_i,\dots,a_n).
\end{equation}  
We can check that the map $L_D$ commutes with the Hochschild differential and $t$. Therefore, it induce a map on the Hochschild cohomology and cyclic cohomology.
\begin{equation}
	L_D:HC^{\bullet}(A) \to HC^{\bullet}(A)\,.
\end{equation}
We further introduce operators $e_D:CC^{n}(A) \to CC^{n+1}(A)$ and  $E_D:CC^{n}(A) \to CC^{n-1}(A)$ as follows
\begin{equation}
	\begin{aligned}
			(e_Df)(a_0,\dots,a_n) &= (-1)^{n+1}f(D(a_n)a_0,a_1,\dots,a_{n-1})\\
			(E_Df)(a_0,\dots,a_{n-2}) & = \sum_{1\leq i\leq j\leq n-2} (-1)^{in+1}f(1,a_i,\dots,a_{j-1},D(a_j),a_{j+1},\dots,a_{n-2},a_0,\dots,a_{i-1})\,.
	\end{aligned}
\end{equation}
Then one can check the following identity
\begin{theorem}
	\begin{equation}
		\begin{aligned}
			&[e_D,Q_{\CH}] = 0\,,\\
			&[e_D,B] + [E_D,Q_{\CH}] = L_D\,,\\
			&[E_D,B] = 0\,.
		\end{aligned}
	\end{equation}
\end{theorem}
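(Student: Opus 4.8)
The plan is to verify the three identities directly at the cochain level, reducing each graded commutator to a cancellation among the elementary operations out of which $Q_{\CH}$ and $B$ are built. A preliminary observation organizes the whole computation: since $e_D$ and $Q_{\CH}$ each raise cochain degree by one while $E_D$ and $B$ each lower it by one, all three brackets are graded \emph{anti}commutators, $[x,y]=xy+yx$, in every case. I would first record the explicit cochain formulas in a uniform language: $Q_{\CH}=b$ is the Hochschild coboundary, assembled from the adjacent multiplications $a_i a_{i+1}$ together with the cyclic wrap-around term $a_{n+1}a_0$, while Connes' operator $B$ is built from the cyclic permutation $t$, the norm $N=\sum_i t^i$, and the unit-insertion degeneracy $s$. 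Writing $e_D$ and $E_D$ in the same vocabulary — a single insertion of $D$ into one slot, composed with $t$, $N$, $s$ — turns each identity into a relation among $t$, $N$, $s$ and the derivation $D$, so that the Leibniz rule $D(ab)=D(a)b+aD(b)$ is the only algebraic input beyond the standard cyclic/simplicial identities.

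For the first identity $[e_D,Q_{\CH}]=0$, I would expand $e_D b f$ and $b e_D f$ on a general tuple $(a_0,\dots,a_n)$. The terms of $b$ that multiply two of the free arguments match between the two orders with opposite signs; the only potentially unmatched contributions arise when the wrap-around multiplication collides with the boundary factor $D(a_n)a_0$ of $e_D$, and these cancel precisely once $D$ is moved across the product by Leibniz. Identity~3, $[E_D,B]=0$, is the formal dual of this: both $E_D$ and $B$ are assembled from $t$, $N$, and $s$, so $[E_D,B]$ reduces to the very relations among these operators that already force $B^2=0$, and carrying the single $D$-insertion through them makes every term cancel.

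The substance is the Cartan homotopy formula, the second identity $[e_D,B]+[E_D,Q_{\CH}]=L_D$. Here I would expand all four compositions $e_D B$, $B e_D$, $E_D b$, $b E_D$ on a general cochain, organizing the resulting terms by where $D$ lands and how many units have been inserted. The cross terms between the $B$-type pieces (cyclic permutation and unit insertion) and the $b$-type pieces (adjacent multiplications) should cancel in pairs by the simplicial identities, leaving exactly $\sum_i f(a_0,\dots,D a_i,\dots,a_n)$, which is $L_D$ by definition. This is the step where the sheer number of terms and the sign bookkeeping are the genuine difficulty; the reformulation in terms of $t$, $N$, $s$ is what keeps the cancellations tractable rather than a brute-force tuple chase, and I expect it to be the main obstacle.

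As a cross-check and an alternative route, I would note that the homology versions of all three identities are established in \cite{loday2013cyclic}. The cochain complex $\CC^\bullet(A)$ defined here is literally the linear dual of the cyclic chain complex, and the operators $e_D$, $E_D$, $L_D$ are, up to the Koszul signs recorded in their definitions, the transposes of Loday's chain-level operators under this pairing. Dualizing Loday's identities then reproduces ours, so the only thing left to check is that the transposition is compatible with the explicit signs $(-1)^{n+1}$ and $(-1)^{in+1}$ appearing above — the same sign bookkeeping that is the crux of the direct computation.
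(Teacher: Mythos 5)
Your proposal is correct, and it coincides with what the paper actually does: the paper states these identities without proof, explicitly deferring to Loday's book where the homology versions are established, which is precisely your dualization argument made explicit. Your direct cochain-level verification is a sound (if unexecuted) supplement, but the substantive justification in both cases is the transposition of Loday's chain-level Cartan homotopy formula $L_D=[b,E_D]+[B,e_D]$ under the duality between the cyclic chain and cochain complexes, exactly as in your final paragraph.
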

From the above identities, we have the following results
\begin{theorem}
	The map $L_D\circ S = 0: HC^{\bullet}(A) \to HC^{\bullet+2}(A)$.
\end{theorem}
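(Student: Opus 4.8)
The plan is to exhibit $L_D\circ S$ as a graded commutator $[\,Q_{\CH}+vB,\,H\,]$ on the periodicity complex, so that it is manifestly a coboundary operation and hence induces the zero map on cohomology. Concretely, I would work with the mixed complex $(CC^\bullet(A),Q_{\CH},B)$ and its associated $v$-adic total complex $(CC^\bullet(A)[v],\,\delta)$ with $\delta=Q_{\CH}+vB$ and $v$ a central formal parameter of degree $2$, whose cohomology computes $HC^\bullet(A)$ and on which $S$ is precisely multiplication by $v$ (the same presentation used in Section~\ref{sec:closed}). Since $e_D$ raises degree by $1$ and $E_D$ lowers it by $1$, the operator
\[
H \;\equiv\; e_D + v\,E_D
\]
has total degree $+1$, matching $\delta$, and involves only non-negative powers of $v$, so it is a well-defined endomorphism of the $v$-adic complex.

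Next I would compute the graded commutator $[\delta,H]$ term by term, using that $v$ is central:
\begin{equation}
[\delta,H] = [Q_{\CH},e_D] + v\,[Q_{\CH},E_D] + v\,[B,e_D] + v^2\,[B,E_D].
\end{equation}
The first quoted identity $[e_D,Q_{\CH}]=0$ kills the first term and the third identity $[E_D,B]=0$ kills the last. For the two surviving terms, graded antisymmetry of the commutator (all of $Q_{\CH},B,e_D,E_D$ are odd) gives $[Q_{\CH},E_D]=[E_D,Q_{\CH}]$ and $[B,e_D]=[e_D,B]$, so their sum is exactly the right-hand side of the middle identity:
\begin{equation}
[Q_{\CH},E_D]+[B,e_D] = [E_D,Q_{\CH}]+[e_D,B] = L_D.
\end{equation}
Hence $[\delta,H]=v\,L_D=S\circ L_D$, and since $L_D$ commutes with multiplication by $v$ this equals $L_D\circ S$.

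Finally I would conclude the argument on cohomology: for any $\delta$-cocycle $\alpha$ one has $(L_D\circ S)(\alpha)=[\delta,H]\alpha=\delta(H\alpha)$, a coboundary, so $L_D\circ S$ induces the zero map $HC^\bullet(A)\to HC^{\bullet+2}(A)$. As an application I would specialize to $D=\Delta$ the scaling derivation: $L_\Delta$ acts as multiplication by the total weight $w$ on the weight-$w$ summand, so $w\,S=0$ forces $S=0$ on $HC^\bullet(A)^{(\geq 1)}$, which is the input to the short exact sequence \eqref{short_ex_Hoch_cyc}.

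The three operator identities do all of the conceptual heavy lifting, so I expect no genuine obstacle; the only real care is in the sign and degree bookkeeping of the graded commutators and in checking that $H$ respects the $v$-adic filtration so that the null-homotopy descends correctly to the cyclic cohomology computed by $(CC^\bullet(A)[v],\delta)$. I would regard verifying the identity $[\delta,H]=vL_D$ with all signs consistent as the main, though routine, point to pin down.
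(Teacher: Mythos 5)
Your proposal is correct and follows essentially the same route as the paper: Appendix \ref{appendix:vanish_S} constructs the identical homotopy $h = e_D + uE_D$ on the bicomplex $(CC^{\bullet}(A)[u],\,Q_{\CH}+uB)$ and performs the same commutator computation $[h,\,Q_{\CH}+uB] = uL_D$ using the three operator identities, concluding that $L_D\circ S$ vanishes on cohomology. The only differences are notational (your $v$ versus the paper's $u$), so no further comment is needed.
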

\begin{proof}
	We use the bicomplex $\mathcal{B}(A) = (CC^{\bullet}(A)[u],Q_{\CH}+uB)$ for the cyclic cohomology. The periodicity map $S = u$ in this case. Now we construct a homotopy map $h: \mathcal{B}(A) \to \mathcal{B}(A)[1]$ defined as
	\begin{equation}
		h = e_D+uE_D\,.
	\end{equation}
	Then we find
	\begin{equation}
		[h,Q_{\CH}+uB] = [e_D,Q_{\CH}] + u([e_D,B]+[E_D,Q_{\CH}]) + u^2[E_D,B] = uL_D\,.
	\end{equation}
	Therefore, $h$ is the homotopy between $L_D\circ S$ and $0$. Thus $L_D\circ S = 0$ on cohomology.
\end{proof}

Now for a weighted algebra $A$, we define a derivation $D$ on $A$ by $D(a) = wt(a)a$. Then $L_D$ acting on a Hochschild cochain $f$ is simply $L_Df = wt(f)f$. The above results tells us that $S$ must be $0$ on positive weight part of the cyclic cohomology. As a corollary, we have
\begin{theorem}
	Let $A$ be a unital weighted algebra. Define
	\begin{equation}
		HC^{n}(A)^{(\geq 1)} = \bigoplus_{w\geq1}HC^{n}(A)^{(w)} \cong HC^{n}(A)/HC^{n}(A_0)\,.
	\end{equation}
	Then for $HC^{\bullet}(A)^{(\geq 1)}$, the Connes' long exact sequence reduces into a collection of short exact sequences
	\begin{equation}
		0\longrightarrow HC^{n}(A)^{(\geq 1)} \overset{I}{\longrightarrow} HH^n(A)^{(\geq 1)} \overset{B}{\longrightarrow} HC^{n-1}(A)^{(\geq 1)} \longrightarrow 0\,.
	\end{equation}
\end{theorem}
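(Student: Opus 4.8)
The plan is to obtain this statement as a purely formal consequence of the preceding theorem $L_D\circ S=0$, combined with the observation that the entire Connes periodicity apparatus respects the weight grading carried by $A$. First I would record that all of the relevant structure preserves the total weight of the arguments: multiplication of adjacent elements is weight-additive, the Connes operator $B$ only ever inserts the weight-zero unit $1\in A_0$, and the periodicity parameter $u$ in the bicomplex $\mathcal{B}(A)=(CC^{\bullet}(A)[u],Q_{\CH}+uB)$ carries weight $0$. Consequently $\mathcal{B}(A)$ splits as a direct sum of sub-bicomplexes indexed by weight, so the Connes SBI long exact sequence decomposes into a direct sum, over $w$, of long exact sequences each relating $HH^{\bullet}(A)^{(w)}$, $HC^{\bullet}(A)^{(w)}$ and $S$. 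The whole argument can then be run weight-by-weight.

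Next I would pin down the action of $L_D$ on each weight component. For the grading derivation $D(a)=wt(a)\,a$ the induced operator satisfies $L_D f=wt(f)\,f$, so $L_D$ acts on $HC^{n}(A)^{(w)}$ as multiplication by the scalar $w$. For $w\geq 1$ this scalar is a nonzero integer, hence $L_D$ is \emph{invertible} on the positive-weight part $HC^{\bullet}(A)^{(\geq 1)}=\bigoplus_{w\geq 1}HC^{\bullet}(A)^{(w)}$. Since $S$ is weight-homogeneous of weight $0$, for any class $\xi\in HC^{n}(A)^{(w)}$ we have $L_D(S\xi)=w\,(S\xi)$, and the identity $L_D\circ S=0$ then forces $S\xi=0$ whenever $w\geq 1$. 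This is the crux of the corollary: the periodicity map vanishes identically on $HC^{\bullet}(A)^{(\geq 1)}$.

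I would then feed $S=0$ into the weight-$\geq 1$ portion of the long exact sequence
\[
\cdots\xrightarrow{S} HC^{n}(A)^{(\geq 1)}\xrightarrow{I} HH^{n}(A)^{(\geq 1)}\xrightarrow{B} HC^{n-1}(A)^{(\geq 1)}\xrightarrow{S}\cdots
\]
Exactness at $HC^{n}(A)^{(\geq 1)}$ gives $\ker I=\mathrm{im}\,S=0$, so $I$ is injective; exactness at $HC^{n-1}(A)^{(\geq 1)}$ gives $\mathrm{im}\,B=\ker S=HC^{n-1}(A)^{(\geq 1)}$, so $B$ is surjective; and exactness at $HH^{n}(A)^{(\geq 1)}$ is the unchanged assertion $\mathrm{im}\,I=\ker B$. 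Together these are precisely the claimed short exact sequence. Finally, the identification $HC^{n}(A)^{(\geq 1)}\cong HC^{n}(A)/HC^{n}(A_0)$ follows by noting that the weight-zero component is supported entirely on $A_0$-valued arguments, whence $HC^{n}(A)^{(0)}=HC^{n}(A_0)$.

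Since the hard analytic content is already carried by the theorem $L_D\circ S=0$, I do not expect a genuine obstacle here; the one step requiring real care is the verification that the Connes long exact sequence genuinely decomposes along the weight grading, i.e.\ that $I$, $B$ and $S$ are all weight-homogeneous of weight $0$. The splitting into short exact sequences is only legitimate weight-by-weight, so checking that $B$'s insertion of the unit and the formal parameter $u$ do not disturb the weight — and that $L_D$ acts diagonally as the scalar $w$ on each homogeneous class — is what secures the conclusion.
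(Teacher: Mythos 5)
Your proposal is correct and follows essentially the same route as the paper: the paper likewise takes $D(a)=wt(a)\,a$, notes $L_D f = wt(f)\,f$, deduces from $L_D\circ S=0$ that $S$ vanishes on the positive-weight part, and then collapses the Connes long exact sequence into short exact sequences. The extra care you take in checking that $I$, $B$, $S$ and $u$ are weight-homogeneous is a detail the paper leaves implicit, but it is the same argument.
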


\section{Global symmetry algebra}
    \label{sec:global_sym}
In this section, we discuss the mathematical construction related to the global symmetry algebra of a vertex algebra. To simplify the discussion, we consider the case when our vertex algebra arises from the chiral envelope of a vertex Lie algebra. In our example of the large $N$ chiral algebra, the tree-level planar limit defines a vertex Lie algebra structure on $L = \mathrm{HC}^{\bullet}(A[[s]])$. However, the discussion in this appendix applies to any (conformal) vertex Lie algebra.

Let us first recall some construction of Lie algebra attached to a vertex Lie algebra. Given a vertex Lie algebra $(L,T,Y_-)$, we first have its Lie algebra of modes \cite{frenkel2001vertex}
\begin{equation}
	\oint L := L\otimes \mathbb{C}((t))/\mathrm{Im}(T\otimes\mathrm{Id}+ \mathrm{Id}\otimes \partial_t)\,.
\end{equation}
Physically, $\oint L$ consist of the modes $O_{n}$ of the fields $O(z) = \sum_{n\in\mathbb{Z}}O_n z^{-n-1}$ in the vertex algebra. We are particularly interested in the non negative modes in this Lie algebra. We denote
\begin{equation}
	(\oint L)_{+} := \{O_{n} \in \oint L\mid n\geq 0 \}.
\end{equation}
This forms a Lie sub-algebra of the mode algebra $\oint L$. In the example $L = \mathrm{HC}^{\bullet}(A[[s]])$, $(\oint L)_{+}$ is the extended symmetry algebra $\mathfrak{L}^{\mathbb{C}}$ defined in Section \ref{sec:GCA_standard}, \ref{sec:general_GCA}.

We are interested in the case when our vertex algebra has a stress-energy tensor. In such cases, there is a general procedure that allows us to pass from a conformal vertex Lie algebra to a vertex algebra bundle on a curve, which is a $\mathcal{D}$-module 
	\begin{equation}
		L \rightsquigarrow \mathcal{L}\,.
	\end{equation}
	Such a $\mathcal{D}$-module $\mathcal{L}$ is called a (chiral) Lie* algebra. Recall that we have the de Rham functor $h$ that send a $\mathcal{D}$-module to $\mathcal{O}$-module, given by $h(\mathcal{L}) = \mathcal{L}\otimes_{\mathcal{D}_X}\mathcal{O}_X$. Then, according to \cite{frenkel2001vertex,beilinson2004chiral}, we have the following geometric description of the Lie algebra $\oint L$ and $(\oint L)_{+}$.
	\begin{theorem}
		Let $D$ be the standard disc and $D^{\times} = D - \{0\}$ the punctured disc. We have
		\begin{equation}
			\oint L \cong \Gamma(D^{\times},h(\mathcal{L})),\quad\quad		(\oint L)_{+} = \Gamma(D,h(\mathcal{L})).
		\end{equation}
	\end{theorem}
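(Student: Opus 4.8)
The plan is to make the passage $L \rightsquigarrow \mathcal{L}$ concrete in a coordinate $t$, apply the de Rham functor, and then compute sections over $D$ and $D^{\times}$, matching them against the definitions of $\oint L$ and $(\oint L)_+$. I would begin by recalling the localization construction of \cite{frenkel2001vertex,beilinson2004chiral}: the conformal structure (the translation operator $T$ together with the $L_0$-grading) determines, on any coordinatized open, the induced right $\mathcal{D}_X$-module structure on $\mathcal{L}$ in which the right action of the vector field $\partial_t$ on local sections is the total-derivative operator
\begin{equation}
    \nabla \equiv T \otimes \mathrm{Id} + \mathrm{Id} \otimes \partial_t \, ,
\end{equation}
the sign being fixed by the residue/density convention so as to reproduce the integration-by-parts relation defining $\oint L$. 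Concretely $\mathcal{L} \simeq L \otimes_{\bC[T]} \mathcal{D}_X$, with $\bC[T]$ acting through $T \mapsto \partial_t$.

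For a right $\mathcal{D}_X$-module one has $h(\mathcal{M}) = \mathcal{M} \otimes_{\mathcal{D}_X} \mathcal{O}_X = \mathcal{M}/\mathcal{M}\Theta_X$, the zeroth de Rham homology, i.e. the coinvariants under the right action of vector fields. Since $D$ and $D^{\times}$ are affine and $\Gamma$ is exact on quasi-coherent sheaves there, taking sections commutes with this cokernel, and because $\Theta$ is generated by $\partial_t$ acting through $\nabla$ I obtain the presentations
\begin{align}
    \Gamma(D^{\times}, h(\mathcal{L})) &= \bigl(L \otimes \bC((t))\bigr)\big/ \mathrm{Im}\,\nabla \, , \cr
    \Gamma(D, h(\mathcal{L})) &= \bigl(L \otimes \bC[[t]]\bigr)\big/ \mathrm{Im}\,\nabla \, .
\end{align}
The first is exactly the definition of $\oint L$, so the first identification holds on the nose at the level of vector spaces; that it is an isomorphism of Lie algebras is the statement that the Lie* bracket on $\mathcal{L}$ descends on de Rham homology to the bracket of modes, which is built into the localization construction, since the Lie* bracket is the geometric incarnation of the $\lambda$-bracket (the singular OPE) of $L$.

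The content is the second identification. I would fix a homogeneous $\bC[T]$-basis $\{a_i\}$ of $L$, for which $\nabla(T^k a_i \otimes t^n) = T^{k+1} a_i \otimes t^n + n\, T^k a_i \otimes t^{n-1}$ is triangular with respect to both the $T$-degree and the power of $t$. In the ``math labelling'' $\mathcal{O}_n \leftrightarrow [\mathcal{O} \otimes t^n]$ of Section \ref{sec:GCA_standard}, the classes $[a_i \otimes t^n]$ with $n \in \bZ$ form a basis of $\oint L$ and those with $n \geq 0$ span $(\oint L)_+$. Two things then remain: injectivity of $\Gamma(D, h(\mathcal{L})) \to \Gamma(D^{\times}, h(\mathcal{L}))$, equivalent to $\mathrm{Im}\,\nabla \cap (L \otimes \bC[[t]]) = \nabla(L \otimes \bC[[t]])$; and the identification of the image with the span of the $n \geq 0$ modes. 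Both follow from the triangular form of $\nabla$ together with the finiteness of each $L_0$-weight graded piece, which makes the pole-order filtration compatible with $\nabla$.

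The main obstacle is exactly this last bookkeeping: relating ``regularity of a section across the puncture'' to ``non-negativity of the mode index.'' It is here that the conformal grading is indispensable — one must verify that $\mathcal{O}_n \leftrightarrow [\mathcal{O}\otimes t^n]$ intertwines pole order at $0$ with mode index in the chosen convention, and that passing to $\nabla$-coinvariants neither creates nor destroys regular sections. Once the graded-and-filtered argument is in place, injectivity and the image computation are formal, and the conceptual inputs — the trivialization of $\mathcal{L}$ by the conformal structure and the identity of the Lie* bracket with the mode bracket — are supplied by the localization theorems of \cite{frenkel2001vertex,beilinson2004chiral}, which I would cite rather than reprove.
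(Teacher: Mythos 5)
Your proposal is correct and follows exactly the route the paper has in mind: the paper states this theorem without writing out a proof, deferring entirely to the localization construction of \cite{frenkel2001vertex,beilinson2004chiral}, and your argument is precisely that construction made explicit (trivialize $\mathcal{L}$ in a coordinate, identify $h$ with $\nabla$-coinvariants, use exactness of $\Gamma$ on the affine opens $D$ and $D^{\times}$). The only step you leave informal — injectivity of $\Gamma(D,h(\mathcal{L}))\to\Gamma(D^{\times},h(\mathcal{L}))$ — closes more easily than you suggest and needs no finiteness of the $L_0$-weight spaces: if $x=x_-+x_+$ with $x_-=\sum_{n=1}^{N}a_n t^{-n}$ and $a_N\neq 0$, then $\nabla x_-$ has leading polar coefficient $-N a_N\neq 0$ at order $t^{-N-1}$, so $\nabla x\in L\otimes\bC[[t]]$ forces $x_-=0$ and hence $\mathrm{Im}\,\nabla\cap(L\otimes\bC[[t]])=\nabla(L\otimes\bC[[t]])$.
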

	Given the stress-energy tensor, we have, in particular, an $\mathfrak{sl}_2 = \{L_{-1},L_0,L_1\}$ action on the Lie algebra $\oint L $ and $(\oint L)_{+}$. Given a $\mathfrak{sl}_2$ module $M$ with bounded integer weights, we define a submodule $\mathrm{Core}(M)$
	\begin{equation}
		\mathrm{Core}(M) = \{m \in M \mid L_1^N(m)  = 0 \text{ for some } N\}.
	\end{equation}
	In our case $\mathrm{Core}((\oint L)_{+})$ is a Lie subalgebra. This Lie algebra is defined to be the global symmetry algebra of the vertex algebra. We have the following geometric description of $\mathrm{Core}((\oint L)_{+})$	
	\begin{theorem}
		\begin{equation}
					\mathrm{Core}((\oint L)_{+}) = \Gamma(\bC\mathbb{P}^1,h(\mathcal{L}))\,.
		\end{equation}
	\end{theorem}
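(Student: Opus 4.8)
The plan is to deduce the statement from the two preceding theorems, which identify $\oint L \cong \Gamma(D^\times, h(\mathcal{L}))$ and $(\oint L)_+ = \Gamma(D, h(\mathcal{L}))$ as $\mathfrak{sl}_2$-modules, the $\mathfrak{sl}_2 = \{L_{-1},L_0,L_1\}$ acting through the global conformal (Möbius) vector fields on $\bC\mathbb{P}^1$. First I would cover $\bC\mathbb{P}^1 = D \cup D_\infty$ by the formal disc $D$ at $0$ (coordinate $t$) and the disc $D_\infty$ at $\infty$ (coordinate $s = t^{-1}$), whose overlap is the punctured disc on which $\oint L$ lives. Since $h(\mathcal{L})$ is (quasi-)coherent and $\bC\mathbb{P}^1$ is connected, restriction embeds $\Gamma(\bC\mathbb{P}^1, h(\mathcal{L})) \hookrightarrow \Gamma(D, h(\mathcal{L})) = (\oint L)_+$, and the image consists exactly of those sections regular at $0$ that also extend over $\infty$. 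The whole problem thus reduces to identifying, inside $(\oint L)_+$, the sections regular at $\infty$ with the subspace $\mathrm{Core}((\oint L)_+)$.

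Next I would carry out the representation-theoretic computation. From $[L_{-1},O_n] = -nO_{n-1}$ one sees that $L_{-1}$ is locally nilpotent on $(\oint L)_+$: the mode number $n\ge 0$ is bounded below and $L_{-1}O_0 = 0$. Together with the semisimplicity of $L_0$ on the weight-graded pieces, a vector of $(\oint L)_+$ lies in a finite-dimensional $\mathfrak{sl}_2$-submodule iff it is also $L_1$-nilpotent, i.e. iff it lies in $\mathrm{Core}$. Using $[L_1,O_n] = (n+2-2\Delta)O_{n+1}$, each quasi-primary tower of weight $\Delta$ contributes precisely the finite irreducible submodule $\mathrm{span}\{O_n : 0\le n\le 2\Delta-2\}$ (the coefficient $n+2-2\Delta$ kills $O_{2\Delta-2}$, while for $n\ge 2\Delta-1$ it never vanishes), recovering exactly the global modes of Section~\ref{sec:GCA_standard}. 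These finite-dimensional $\mathfrak{sl}_2$-modules integrate to algebraic representations of $SL(2,\bC)$, and since $SL(2,\bC)$ acts transitively on $\bC\mathbb{P}^1$ one can transport the germ of such a section at $0$ to a germ at every point (in particular at $\infty$), finite-dimensionality guaranteeing these germs glue into a single global section. This gives $\mathrm{Core}((\oint L)_+) \subseteq \Gamma(\bC\mathbb{P}^1, h(\mathcal{L}))$.

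For the reverse inclusion I would note that $\Gamma(\bC\mathbb{P}^1, h(\mathcal{L}))$ is a $PGL(2,\bC)$-representation (the Möbius group acts on $\bC\mathbb{P}^1$ and hence on global sections), which is finite-dimensional in each conformal weight because $\bC\mathbb{P}^1$ is proper and the weight pieces of $h(\mathcal{L})$ are coherent. An algebraic $SL(2,\bC)$-representation is locally finite, so every global section is annihilated by a power of $L_1$ and thus lies in $\mathrm{Core}((\oint L)_+)$. Combining the two inclusions yields the claim.

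The main obstacle is the geometric heart of the equivalence ``regular at $\infty$ $\Longleftrightarrow$ $L_1$-nilpotent.'' What makes it work is that the coordinate inversion $t\mapsto t^{-1}$ exchanges the charts $D$ and $D_\infty$ and, as noted in the main text, conjugates $L_{-1}$ into $L_1$; so the special-conformal generator $L_1$, which vanishes to second order at $0$ and becomes the \emph{regular} translation generator in the $s$-chart at $\infty$, is exactly the operator whose finiteness detects extendability across $\infty$. Making this precise requires the compatibility of the de Rham functor $h$ with this change of coordinate and with the $PGL(2,\bC)$-equivariant structure on $\mathcal{L}$; this is the content of the chiral-algebra-on-a-curve formalism of \cite{frenkel2001vertex,beilinson2004chiral}, on which I would rely for the technical input, the weighted structure of $A[[s]]$ ensuring the requisite finiteness of each graded piece.
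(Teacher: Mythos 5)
The paper states this theorem without proof, deferring to the chiral-algebras-on-curves formalism of \cite{frenkel2001vertex,beilinson2004chiral}; your argument correctly supplies the standard reasoning, and the equivalence you isolate --- regularity at $\infty$ $\Leftrightarrow$ $L_1$-nilpotence, via the coordinate inversion exchanging $L_{-1}$ and $L_1$ --- is precisely the content of the paper's remark that the global algebra consists of modes annihilating the vacua at both $0$ and $\infty$. The one step worth spelling out more fully is the passage from ``$L_1^N v=0$'' to ``$v$ generates a finite-dimensional submodule'': you need local nilpotence of $L_1$ on all of $U(\mathfrak{sl}_2)v$, not just on $v$, which follows because $[L_1,L_{-1}^a]$ produces lower powers of $L_{-1}$ times weight operators (so $L_{-1}^aL_1^c v$ is killed by $L_1^{N-c+a}$), while $L_{-1}$ is already locally nilpotent on $(\oint L)_{+}$ and $L_0$ acts semisimply; integrability then gives finite-dimensionality. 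A minor caveat: for half-integral conformal weights the resulting finite-dimensional pieces are genuine $SL(2,\bC)$- rather than $PGL(2,\bC)$-representations, so your transport-of-germs step implicitly uses the theta characteristic built into the construction of $\mathcal{L}$; this is standard and does not affect the conclusion.
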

This Lie algebra $\Gamma(\bC\mathbb{P}^1,h(\mathcal{L}))$ is defined to be the global symmetry algebra of the vertex Lie algebra $L$. It coincide with the physical definition as the Lie algebra of modes that annihilate both the vacuum at $0$ and $\infty$. In the example $L = \mathrm{HC}^{\bullet}(A[[s]])$, $\Gamma(\bC\mathbb{P}^1,h(\mathcal{L}))$ is the global symmetry algebra $\mathfrak{L}$ we studied in this paper.

\section{Calabi-Yau algebra}
\label{appendix:CY}
In this appendix, we briefly review the mathematical definition of a Calabi-Yau algebra. There are two distinct notions of Calabi-Yau structures, called smooth Calabi-Yau \cite{Ginzburg:2006fu} and compact Calabi-Yau \cite{Kontsevich2008NotesOA}, which are related by Koszul duality \cite{bergh2010calabi,cohen2015}. Both of these notions give rise to (partially defined) $2d$ topological quantum field theories in the sense of \cite{costello2007topological,lurie2008classification}. Typically, a compact Calabi-Yau structure defines a TQFT that assigns values to cobordisms with at least one input, while a smooth Calabi-Yau structure gives rise to a TQFT defined on cobordisms with at least one output \cite{kontsevich2023smooth}. Geometrically, these two notions of Calabi-Yau algebras correspond to a space-filling brane and a brane supported at a point, respectively.

We introduce some notation. For $B$ an associative algebra, we denote $B^{op}$ the opposite algebra. It has the same elements as $B$ but equipped with a opposite multiplication $a\cdot^{op} b = b\cdot a$. We denote $B^{e} = B\otimes B^{op}$ so that a $B$ bimodule is the same as a $B^{e}$ left module. 

We first introduce the notion of a smooth Calabi-Yau algebra \cite{Ginzburg:2006fu}, as it is more commonly used in the literature.

Let $B$ be a homologically smooth algebra. We denote $B^D$ its (derived) dual bimodule, defined as
\begin{equation}
	B^D := R\mathrm{Hom}_{B^e}(B,B^e)\,.
\end{equation}

Recall that we have the following isomorphism in the derived category
\begin{equation*}
	R\mathrm{Hom}_{B^e}(B^D,B) \cong B\otimes_{B^e}^{\mathbb{L}}B\,.
\end{equation*}

We call $B$ a (weak) smooth $d$-Calabi-Yau algebra if there exists a map $\varphi \in \mathbb{C}[d] \to \CH_{\bullet}(B)$ that induces an isomorphism 
\begin{equation}
	\varphi: B^D \cong B[-d]
\end{equation}
in the derived category of $B^e$ module. 

The Hochschild complex $\CH_{\bullet}(B)$ carries a circle
action, whose homotopy fixed points $\CH_{\bullet}^{hS^1}$ are calculated by the negative cyclic complex $\CC^{-}_{\bullet}(B) $. A smooth Calabi-Yau structure is a lift of the Hochschild homology element $\varphi \in \CH_{d}(B)$ to the negative cyclic homology element $\tilde{\varphi} \in \CC_{d}^{-}(B)$. In the geometric context, the class $[\tilde{\varphi}]$ corresponds to a choice of a (closed) Calabi-Yau volume form.

Now we introduce the dual notion of compact Calabi-Yau structure. Let $A$ be a proper dg algebra. Recall that we have the following isomorphism
\begin{equation*}
	\mathrm{Hom}_\mathbb{C}(A\otimes_{A^e}^{\mathbb{L}}A,\mathbb{C}) \cong R\mathrm{Hom}_{A^e}(A,A^{*})\,.
\end{equation*}

We define a (weak) compact $d$-Calabi-Yau structure on $A$ as a map $\eta : \CH_{\bullet}(A) \to \mathbb{C}[-d]$, such that it induces an isomorphism
\begin{equation*}
	\eta: A[d] \to A^*\,.
\end{equation*}

The homotopy orbit of the Hochschild complex $\CH_{\bullet}(A)_{hS^1}$ with respect to the circle action is the cyclic complex $\CC_{\bullet}(A)$. A compact Calabi-Yau algebra is a lift of $\eta$ into a cyclic chain $\tilde{\eta}: \CC_{\bullet}(A) \to \mathbb{C}[-d]$.

The concept of compact Calabi-Yau algebra is closely related to cyclic $A_\infty$ algebra. A cyclic $A_\infty$ algebra is a  $A_\infty$ algebra $(A,\{m_n\})$ equipped with a non-degenerate symmetric pairing $\langle \cdot,\cdot\rangle:A\otimes A \to \mathbb{C}$, such that the expressions $\langle a_0,m_n(a_1,\dots,a_n)\rangle$ are graded cyclically symmetric. Due to Kontsevich and Soibelman \cite{Kontsevich2008NotesOA}, these two concepts are essentially equivalent. Any cyclic  $A_\infty$ algebra has a canonically defined compact Calabi-Yau strucutre, and any compact Calabi-Yau algebra is quasi-isomorphic to a cyclic $A_\infty$ algebra.

In this paper, we adopt the notion of a "cyclic $A_\infty$ algebra" as our definition of a Calabi-Yau algebra. This is exactly the structure we need to define the large $N$ chiral gauge theory in Section \ref{sec:HAcalculations}. Additionally, we simplify by considering the $A_\infty$ algebra to be a dg associative algebra.

Though we mainly used the compact CY algebra in our construction, the smooth CY algebra also appears in the chiral algebra, after we add the (anti-)fundamental matter fields. The mesonic operators naturally give rise to smooth Calabi-Yau algebra as they correspond to space filling branes in our system. To make the connection more clear, we recall that compact and smooth CY structure are related by Koszul duality \cite{cohen2015}
\begin{equation}
    \mathrm{RHom}_{A^{e}}(\mathbb{K},\mathbb{K}) = B, \quad \mathrm{RHom}_{B^{e}}(\mathbb{K},\mathbb{K}) = A\,.
\end{equation}
Here $\mathbb{K}$ is not always the base field $\bC$, but could also be copies of $\bC$, each corresponds to an idempotent.

As commented in Section \ref{sec:generla_flavor}, in the mesonic example, its natural to choose fundamental matter that correspond to $\mathbb{K}$ and anti-fundamental to its dual. Then quasi-primary mesonic operators are given by the Koszul dual $B = \mathrm{RHom}_{A^{e}}(\mathbb{K},\mathbb{K}) $ of $A$, which is a smooth Calabi-Yau algebra. 

It would be important to show that the global symmetry algebra $\fP_\lambda$ defined in the main text is smooth 3d-CY, possibly using structures which arise in the chiral algebra. We do not know how to do so.

A great source of Calai-Yau algebra comes from quiver constructions. If we consider our $A$ to be the compact CY algebra that correspond to a quiver $Q$, then its Koszul dual smooth CY algebra is the $2d$-Ginzburg dg algebra \cite{Ginzburg:2006fu,etgu2017koszul} that correspond to the same quiver. It will be interesting to further explore the relation between $\fP_\lambda$ and the Ginzburg dg algebra in the quiver case.

\section{A lightning review of Maurer-Cartan equations and BRST anomalies}\label{app:MC_equation_and_BRST_anomalies}
We refer the reader to \cite{Gaiotto:2024gii} for a detailed discussion. 

In many QFT calculations, it is useful to define perturbatively a 
formal deformation of a reference theory, by regularizing in some scheme the exponential of an integrated interaction. We can denote that very schematically as 
\begin{equation}
    \left[e^{\int\Phi}\right]\,.
\end{equation}
Either the choice of interactions or the regularization scheme may introduce BRST anomalies. We can write schematically
\begin{equation}
    e^{\epsilon Q} \left[e^{\int\Phi}\right] = \left[e^{\int\Phi +\epsilon \text{MC}(\Phi)} \right]
\end{equation}
by rewriting the effect of a BRST transformation as a further deformation of the theory. The expression $\text{MC}(\Phi)$ is a multi-linear map from the space of interactions to itself, or better an odd nilpotent vectorfield on the space of formal couplings. The condition for BRST anomalies to vanish is 
the {\it Maurer-Cartan equation} 
\begin{equation}
    \text{MC}(\Phi)=0\,.
\end{equation}
By definition, this data equips the space of interactions with the structure of an $L_\infty$ algebra. 

Other structures can be produced by coupling the original theory to auxiliary fields, adding interactions to defects in the QFT, etc. 
For example, a Wilson-like line defect involving an auxiliary finite-dimensional quantum-mechanical system involves a path-ordered interaction 
and a matrix-valued Maurer-Cartan equation, which by definition leads to an $A_\infty$ algebra. 

\section{Non-commutative Algebra From OPEs}
\label{app:non_com_OPE}

The product rule \eqref{eq:ad_full_right_proudct} for the algebra of meson modes in the  chiral algebra with $OSp$ symmetry, was originally gleaned by studying in detail the commutators between the meson modes. In this appendix we showcase said calculations which now serve as additional evidence of the final product rule \eqref{eq:ad_full_right_proudct}.

\subsection{A useful Q-exact relation}
Given that our mesons have ghost number 0, $Q$-exact relations between them arise from taking $Q$ of mesons with ghost number -1, that is, with at least one $b$ field. Take for example the operator $IbZJ$, where by $Z$ we mean a field valued in the ghost number one part of the algebra $Z = Z^\alpha\theta_\alpha$, and similarly, $I=I^i\tilde{m}_i$ and $J = J_i m^i$ are right and left module valued respectively, with the meson $IbZJ$ living in the corresponding tensor product. Its $Q$ is given by
\begin{align}
Q IbZJ = -I(Z,Z)ZJ + \lambda  \partial IZJ -\lambda I\partial ZJ + \text{multi-meson terms}\,,
\end{align}
with $(\cdot,\cdot)$ the usual algebra pairing $(Z,Z)=\omega_{ab}Z^aZ^b$. Neglecting the multi-meson terms, we can read this equation to mean that
\begin{align}
\lambda  \partial IZJ\sim I(Z,Z)ZJ + I\partial ZJ\,.
\end{align}

Note this allowed us to move the partial derivative one symbol to the right at the price of adding a new term with an extra $(Z,Z)$ in between the two symbols. Studying the $Q$ of further mesons with one $b$ we would find that this pattern holds more generally, so that $Q$-exact relations allow to move derivatives one symbol to the right, at the price of a new term with an extra $(Z,Z)$. We may write this as
\begin{align}\label{eq:ad_q_exact_relation}
\lambda \overset{\leftarrow}{\partial} \sim \lambda \overset{\rightarrow}{\partial} + (Z,Z) \,.
\end{align}

% In the present theory, the tree level BRST action $Q^{(0)}$ is given by:
% \begin{align}
% Q^{(0)}b &= \Omega_{\alpha \beta}Z^\alpha Z^\beta + [c,b] + \sum_i(-1)^{f_i}J_iI^i\\
% &= (Z,Z) + [c,b] + (J,I) \label{eq_ad:Q0_b}
% \end{align}
% where the second line writes things more compactly using the algebra valued fields $Z\equiv Z^\alpha \theta_\alpha$, the respectively left and right module fields $J \equiv J_i m^i$, $I \equiv I^i \tilde{m}_i$, and the corresponding pairings both of which we denote in a slight abuse of notation as $(\cdot,\cdot)$.

% In $U(N)$ invariant operators like, say, $IbJ$, the second term in t\eqref{eq_ad:Q0_b} won't contribute since it 

\subsection{\texorpdfstring{$u^{a_1}_{n_1}\cdot u^{a_2}_{n_2}$}{u1 x u2} from OPEs}
The first mesons in the cohomology are
\begin{align}
I^i Z J_j\,.
\end{align}
We denote their modes by 
\begin{align}
(u^a_n)^i_j = \oint z^n I^i Z^a J_j\,.
\end{align}
As explained in Section \ref{subsec:open_symmetry_algebra}, to determine their product we must study their commutator:
\begin{align}
[(u^{a_1}_{n_1})^i_j,(u^{a_2}_{n_2})^k_l ] = \delta^k_j (u^{a_1}_{n_1} \cdot u^{a_2}_{n_2})^i_l \pm \delta^i_l (u^{a_1}_{n_1} \cdot u^{a_2}_{n_2})^k_j\,.
\end{align}
We set $i>j=k>l$ to narrow in on the product
\begin{align}
[(u^{a_1}_{n_1})^i_j,(u^{a_2}_{n_2})^j_l ] = (u^{a_1}_{n_1} \cdot u^{a_2}_{n_2})^i_l \,.
\end{align}
To determine this commutator, let's first go over the OPE between the corresponding mesons 
% To extract the product, in expression \eqref{eq:ad_mode_algebra_product}, we can set the indices of the modes to be 
\begin{align}\label{eq:ad_1x1_OPE}
I^i Z^{a_1} J_j (z)\, I^j Z^{a_2} J_k (w) \sim &\frac{1}{z-w}(I^i Z^{a_1} Z^{a_2} J_k (w) + \lambda \omega^{a_1a_2} \partial I^iJ_k (w)) + \nonumber\\
&\frac{1}{(z-w)^2} \lambda \omega^{a_1a_2} I^iJ_k (w))\,.
\end{align}

The operators on the right hand side are guaranteed to be $Q$-closed. However, they are not expressed in terms of the more natural representatives \eqref{eqn_towerf_nc}
\begin{align}\label{eq:ad_canonical_meson_representatives}
IZ^{[a_1} \cdots Z^{a_k]} J\,.
\end{align}

In particular, these representatives contain no derivatives, so one might suspect that if one could ``get rid" of the meson with a derivative in \eqref{eq:ad_1x1_OPE} using $Q$-exact relations, one would find an expression in terms of our canonical representatives \eqref{eq:ad_canonical_meson_representatives}. Indeed, if for $\lambda\partial I^iJ_k$ we move ``half of its derivative" to the right
% To calculate the modes more easily \al{improve} it proves useful to use the $Q$-exact relation \eqref{eq:ad_q_exact_relation} to move the partial derivatives to the right until one is left with terms with no derivatives plus a total derivative term. In this case 
\begin{align}
\lambda \partial I^iJ_k  &\sim \lambda \frac{1}{2}\partial I^iJ_k +\lambda\frac{1}{2}I^i \partial J_k+\frac{1}{2}I^i (Z, Z) J_k \nonumber\\
&=\lambda \frac{1}{2}\partial (I^iJ_k) +\frac{1}{2}I^i (Z, Z) J_k\,.
\end{align}

(Note $\sim$ here means $Q$-equivalence class and not OPE; we'll freely alternate between the two meanings) one is left with a term with no derivatives plus a total derivative term. However, a mode of a total derivative can be expressed in terms of modes of mesons with no derivatives through integration by parts, so at this point one effectively has ``removed" the derivative terms and indeed plugging back into the OPE
\begin{align}
I^i Z^{a_1} J_j (z) I^j Z^{a_2} J_k (w) \sim &\frac{1}{z-w}(I^i Z^{a_1} Z^{a_2} J_k (w) + \frac{1}{2}\omega^{a_1 a_2}I^i (Z, Z) J_k (w) + \frac{1}{2}\lambda \omega^{a_1 a_2} \partial (I^iJ_k) (w))  \nonumber\\
&\frac{1}{(z-w)^2} \lambda \omega^{a_1 a_2} I^iJ_k (w))\,.
\end{align}

We find the traceless linear combination from \eqref{eq_ad:Z2_traceless} 
\begin{align}
I^i Z^{[a_1} Z^{a_2]}  J_k = I^i Z^{a_1} Z^{a_2} J_k  + \frac{1}{2}\omega^{a_1 a_2}I^i (Z, Z) J_k 
\end{align}
in the OPE, which is our canonical representative for the $Q$-closed meson with two $Z$'s.
Moving on to study the modes, we take the integral $\oint z^{n_1}$ of the OPE
\begin{align}
[(u^{a_1}_{n_1})^i_j,I^j Z^{a_2} J_k (w)] = w^{n_1}&(I^i Z^{[a_1} Z^{a_2]} J_k (w)  + \frac{1}{2}\lambda \omega^{a_1 a_2} \partial (I^iJ_k) (w))  +\nonumber\\
&n_1w^{n_1-1} \lambda \omega^{a_1 a_2} I^iJ_k (w)\,.
\end{align}
Taking now the integral $\oint w^{n_2}$ of what is left and expressing in terms of the modes
\begin{align}
u^{a_1a_2}_{n_1n_2} &= \oint z^{n_1+n_2} I^i Z^{[a} Z^{b]}  J_k\\
u_{n_1+n_2-1} &= \oint z^{n_1+n_2-1} I^i J_k\,,
\end{align}
we find the product
\begin{align}
u^{a_1}_{n_1} \cdot u^{a_2}_{n_2} &= u^{a_1a_2}_{n_1n_2} + \lambda\frac{n_1 -n_2}{2}\omega^{a_1a_2}u^{a_1a_2}_{n_1+n_2-1}\nonumber\\
&= u^{a_1a_2}_{n_1n_2} - \lambda\frac{1}{2}\epsilon_{n_1n_2}\omega^{a_1a_2}u^{a_1a_2}_{n_1+n_2-1}\,,
\end{align}
with $\epsilon_{n_1n_2} = \begin{pmatrix}0&1\\-1&0\end{pmatrix}_{n_1n_2} =n_2-n_1$ for $n_1, n_2 \in \{0,1\}$. One can further simplify the expression using the fact that 
\begin{align}
\epsilon_{n_1n_2}u_{n_1+n_2-1} = \epsilon_{n_1n_2}u_{0}\,,
\end{align}
since $n_1+n_2$ must add up to $1$ to have a non zero answer. Moreover, easy OPE calculations show $u_{0}$ behaves as a unit for the algebra, so we may set it to one henceforth. This leads to the final expression for product
\begin{align}
u^{a_1}_{n_1} \cdot u^{a_2}_{n_2} = u^{a_1a_2}_{n_1n_2} - \lambda\frac{1}{2}\epsilon_{n_1n_2}\omega^{a_1a_2}\,.
\end{align}

\subsection{\texorpdfstring{$u^{a_1a_2}_{n_1n_2}\cdot u^{a_3}_{n_3}$}{u12 x u3} from OPEs}We move on to study the product
\begin{align}
u^{a_1a_2}_{n_1n_2}\cdot u^{a_3}_{n_3}
\end{align}
by means of 2d CFT calculations. To do so we will need to study the OPE between
\begin{align}
I^iZ^{[a_1}Z^{a_2]}J_j = I^iZ^{a_1}Z^{a_2}J_j + \frac{\omega^{a_1a_2}}{2}I^i(Z,Z)J_j\,,
\end{align}
and 
\begin{align}
I^jZ^{a_3}J_k\,,
\end{align}

Plowing forward:
\begin{align}\label{eq:ad_2x1_OPE}
I^iZ^{[a_1}&Z^{a_2]}J_j(z)\,I^jZ^{a_3}J_k(w)\sim \nonumber\\
&\frac{1}{z-w}\left(I^iZ^{a_1}Z^{a_2}Z^{a_3}J_k + \frac{1}{2}\omega^{a_1a_2}I^i(Z,Z)Z^{a_3}J_k + \right.\nonumber\\ 
 &\quad\qquad\qquad\left.\lambda\omega^{a_2a_3} \partial(I^iZ^{a_1})J_k + \frac{\lambda}{2}\omega^{a_1a_2}\partial(I^iZ^{a_3})J_k\right) +\nonumber\\
 &+\frac{\lambda}{(z-w)^2}\left(\omega^{a_2a_3}I^iZ^{a_1}J_k + \frac{1}{2}\omega^{a_1a_2}I^iZ^{a_3}J_k\right)\,.
\end{align}

Proceeding similarly as in the $u^{a_1}_{n_1} \cdot u^{a_2}_{n_2}$ product, one may turn the operators with derivatives into operators without derivatives plus total derivatives using the $Q$-exact relation \eqref{eq:ad_q_exact_relation}
\begin{align}
\lambda\partial(IZ)J &=\lambda\partial IZJ+\lambda I \partial ZJ
\\&\sim \lambda\frac{2}{3} \partial(IZJ) + \frac{1}{3}I(Z,Z)ZJ + \frac{2}{3}IZ(Z,Z)J\,.
\end{align}

Substituting this back into the two terms with derivatives in the OPE \eqref{eq:ad_2x1_OPE}, we obtain
\begin{align}
I^iZ^{[a_1}&Z^{a_2]}J_j(z)\,I^jZ^{a_3}J_k(w)\sim \nonumber\\
&\frac{1}{z-w}\biggl(I^iZ^{[a_1}Z^{a_2}Z^{a_3]}J_k +\nonumber\\ 
 &\quad\qquad\qquad\left.\lambda\frac{2}{3}\omega^{a_2a_3} \partial(I^iZ^{a_1}J_k) + \lambda\frac{1}{3} \omega^{a_1a_2}\partial(I^iZ^{a_3}J_k)\right) +\nonumber\\
 &+\frac{\lambda}{(z-w)^2}\left(\omega^{a_2a_3}I^iZ^{a_1}J_k + \frac{1}{2}\omega^{a_1a_2}I^iZ^{a_3}J_k\right)\,,
\end{align}
where it is a pleasant sanity check to find that this whole derivative business has generated precisely the traceless linear combination \eqref{eq_ad:Z3_traceless} which we rewrite here for convenience:
\begin{align}
I^iZ^{[a_1}Z^{a_2}Z^{a_3]}J_k &= I^iZ^{a_1} Z^{a_2} Z^{a_3} J_k + \frac23 \omega^{a_2 a_3} I^iZ^{a_1} (Z, Z) J_k + \frac23 \omega^{a_1 a_2}I^i(Z,Z) Z^{a_3}J_k  \cr &+\frac13 \omega^{a_1 a_2} I^iZ^{a_3} (Z,Z)J_k +\frac13 \omega^{a_2 a_3} I^i (Z,Z) Z^{a_1} J_k\,.  
\end{align}

Integrating over $\oint z^{n_1+n_2}$ to get the first mode
\begin{align}
[(u^{a_1a_2}_{n_1n_2})^i_j&, I^jZ^{a_3}J_k(w)]=\nonumber\\ 
&z^{n_1+n_2}\biggl(I^iZ^{[a_1}Z^{a_2}Z^{a_3]}J_k + \nonumber\\ 
 &\quad\qquad\qquad\left.\lambda\frac{2}{3}\omega^{a_2a_3} \partial(I^iZ^{a_1}J_k) + \lambda\frac{1}{3} \omega^{a_1a_2}\partial(I^iZ^{a_3}J_k)\right) +\nonumber\\
 &+\lambda(n_1+n_2)z^{n_1+n_2-1}\left(\omega^{a_2a_3}I^iZ^{a_1}J_k + \frac{1}{2}\omega^{a_1a_2}I^iZ^{a_3}J_k\right)\,.
\end{align}

Finally, integrating over $\oint w^{n_3}$
\begin{align}
u^{a_1a_2}_{n_1n_2} \cdot u^{a_3}_{n_3} = u^{a_1a_2a_3}_{n_1n_2n_3} + \lambda \biggl[ (n_1+n_2+n_3)&\biggl(-\frac{2}{3}\omega^{a_2a_3}u^{a_1}_{n_1+n_2+n_3-1} -\frac{1}{3}\omega^{a_1a_2}u^{a_3}_{n_1+n_2+n_3-1}\biggr)\nonumber\\
(n_1+n_2)&\biggl(\omega^{a_2a_3}u^{a_1}_{n_1+n_2+n_3-1} +\frac{1}{2}\omega^{a_1a_2}u^{a_3}_{n_1+n_2+n_3-1}\biggr)\nonumber\,,
\end{align}
which simplifies to
\begin{align}
u^{a_1a_2}_{n_1n_2} \cdot u^{a_3}_{n_3} = u^{a_1a_2a_3}_{n_1n_2n_3} + \frac{\lambda}{6}(n_1 + n_2 - 2n_3)(\omega^{a_1a_2}u^{a_3}_{n_1+n_2+n_3-1} + 2\omega^{a_2a_3}u^{a_1}_{n_1+n_2+n_3-1})\,.
\end{align}

To bring it to the form of \eqref{eq:ad_2x1_prod}, we just need to note that $n_1 + n_2 - 2n_3 = -\epsilon_{n_1n_3}-\epsilon_{n_2n_3}$ in terms of the levi-civita symbols. Moreover
\begin{align}\label{eq:ad_n_1+n_2=1}
\epsilon_{n_1n_3}u^{a}_{n_1+n_2+n_3-1} = \epsilon_{n_1n_3}u^{a}_{n_2}\\
\epsilon_{n_2n_3}u^{a}_{n_1+n_2+n_3-1} = \epsilon_{n_2n_3}u^{a}_{n_1}
\end{align}
by the same argument as the one used for \eqref{eq:ad_n_1+n_2=1}.

This leads to our final expression which matches the one obtained through algebraic means:
\begin{align}
u^{a_1a_2}_{n_1n_2} \cdot u^{a_3}_{n_3} = u^{a_1a_2a_3}_{n_1n_2n_3} - \frac{\lambda}{6}\biggl(&\epsilon_{n_1n_3}(\omega^{a_1a_2}u^{a_3}_{n_2} + 2\omega^{a_2a_3}u^{a_1}_{n_2})+\nonumber\\
&\epsilon_{n_2n_3}(\omega^{a_1a_2}u^{a_3}_{n_1} + 2\omega^{a_2a_3}u^{a_1}_{n_1})\biggr)\,.
\end{align}

% \subsection{\texorpdfstring{$u^{a_1a_2a_3}_{n_1n_2n_3}\cdot u^{a_4}_{n_4}$}{u123 x u4} from OPEs}

% We carried out the check of the product \eqref{eq:ad_full_right_proudct} using OPE calculation up to the case $u^{a_1a_2a_3}_{n_1n_2n_3}\cdot u^{a_4}_{n_4}$. However, to perform this calculation no additional insights are required besides the ones showcased in the previous two products. We show the calculation in full in the following \href{https://github.com/khalil753/OSp_Calculation_Notes/blob/316d9a4ea4f70ba22d9a44b58497b326440e3c79/3x1\%20OSp\%20calculation.pdf}{link}.

%\input{Hanne/B-model quivers}
\section{Generalizing to Symmetric and Anti-symmetric matrices}
\label{appendix:ortho_symp}
We wish to generalize the chiral algebra construction to the case where our fields are now valued in symmetric or anti-symmetric matrices.

It will prove useful to construct a $\bZ_2 $ action that commutes with the tree level BRST charge $Q_0$. A consistent $Q_0$ action amounts to a Lie subalgebra of the matrix Lie algebra $\mathfrak{gl}_N\otimes A$ under the $\bZ_2$ action. Such a $\bZ_2$ action, consists of an involution $a \to \bar{a}$ on the algebra $A$ and a $\bZ_2$ action on the matrices. Here are two possible choices of Lie subalgebras:
\begin{itemize}
	\item Skew symmetric matrices $\mathfrak{so}_N(A)$:
	\begin{equation}\label{eqn:skew_sym_mat}
		\mathfrak{so}_N(A) :\{\Phi \in \mathfrak{gl}(N)\otimes A\mid \bar{\Phi}_{ji} = -\Phi_{ij}\}\,.
	\end{equation}
	\item Symplectic matrices $\mathfrak{sp}_N(A)$:
		\begin{equation}\label{eqn:symp_mat}
		\mathfrak{sp}_N(A) :\{\Phi \in \mathfrak{gl}(2N)\otimes A\mid \bar{\Phi}_{ji}  = -(\Omega^{-1}\Phi\Omega)_{ij}\}\,.
	\end{equation}
\end{itemize}
Depending on the symmetry properties of the matrices, the $\bZ_2$ action in question will be different. Let's explore the different cases.

\begin{itemize}
\item  $\Phi(z) \in A \otimes \bigwedge^2\bC^N \cong A \otimes \mathfrak{so}(N)$ \\
This corresponds to \eqref{eqn:skew_sym_mat} with trivial involution $\bar{a} = a$.
We interpret these fields as living in the adjoint representation of an $SO(N)$ symmetry that has been gauged.
Here, the transposition operation commutes with $Q_0$ for free:
\begin{equation}
    Q_0 \Phi^T = Q_0 (-\Phi) = - \Phi \Phi\,,
\end{equation}
while
\begin{equation}
    (Q_0 \Phi)^T = (\Phi \Phi)^T = - (\Phi)^T (\Phi)^T = - \Phi \Phi \,,\label{eq:ad_commutation_of_Phis}
\end{equation}
where the minus sign in equation \eqref{eq:ad_commutation_of_Phis} arose from commuting one $\Phi$ past the other.

So we see:
\begin{equation}
    Q_0(\Phi)^T = (Q_0\Phi)^T \,.
\end{equation}

\item $\Phi(z) \in A \otimes \Sym^2\bC^N \cong A \otimes \mathfrak{sp}(N)$ \\

This corresponds to \eqref{eqn:symp_mat} with trivial involution $\bar{a} = a$. We interpret these fields as living in the adjoint representation of an $Sp(N)$ symmetry that has been gauged. If we define our field with one index up and one down, $\Phi^a_b$, where the symplectic form $\Omega_{ab}$ was used to lower what naturally would've been two upper indices, then its transposition symmetry reads:
\begin{equation}
    \Phi^T = - \Omega^{-1} \Phi \Omega\,.
\end{equation}

% {\bf It may be better to re-define $\Omega \Phi \to \Phi$ below so that $Q_0\Phi = \Phi \Phi$ and $\Phi^T = - \Omega^{-1} \Phi \Omega$.}

In this case, $Q_0$ also commutes with transposition. The proof of this is completely analogous to the previous one but for some extra $\Omega$'s.

% \begin{equation}
%     Q_0 (\Phi)^T =  Q_0 \Phi = \Phi \Omega \Phi
% \end{equation}

% which equals

% \begin{equation}
%     (Q_0 \Phi)^T =  (\Phi\Omega\Phi)^T = -\Phi^T\Omega^T\Phi^T = \Phi\Omega\Phi
% \end{equation}

\item Matter fields valued in symmetric matrices and ghosts in antisymmetric: $\Phi(z) \in \underbrace{A_0 \otimes \mathfrak{so}(N)}_{c \text{ ghosts}} \oplus \underbrace{A_1 \otimes \Sym^2\bC^N}_{Z\text{ fields}} \oplus \underbrace{A_2 \otimes \mathfrak{so}(N)}_{b\text{ ghosts}}$.\\

This corresponds to \eqref{eqn:skew_sym_mat} with involution $\bar{a} = (-1)^{g}a$. Here we interpret the $Z$ fields as living in the symmetric representation of an $SO(N)$ gauge symmetry. In this case transposition is equivalent to multiplying ghost even elements by minus one:

\begin{equation}
    \Phi^T = (-1)^{g+1} \Phi = -c + Z - b\,.
\end{equation}

Again, transposition commutes with $Q_0$:
\begin{align}
    Q_0 (\Phi)^T &= Q_0(-c+Z-b)\\
    &= -cc+[c,Z]-[c,b]-ZZ \\
    &= -(-c+Z-b)(-c+Z-b)\\
    &= -(c+Z+b)^T(c+Z+b)^T\\
    &= ((c+Z+b)(c+Z+b))^T\\
    &= (Q_0\Phi)^T\,.
\end{align}

% \al{Naively I would've thought that $(-1)^g$ commutes with $Q_0$ by increasing the ghost number:$Q_0(-1)^g = (-1)^g Q_0$. However, following a derivation as the one above one finds they just commute:}

% \al{$Q_0 (-1)^g \Phi = Q_0(c - Z + b) 
%  = cc - [c,Z] + [b,c] + ZZ = (-1)^g (cc + [c,Z] + [b,c] + ZZ) = (-1)^g \Phi^2 = (-1)^g Q_0 \Phi$ }

% \al{I presume my naive intuition was wrong?}

%In the line that follows we momentarily use the notation $(\Phi, \Phi)$ to denote the algebra product $\Phi \Phi$, and also use the compatibility between this product and ghost number $((-1)^g\Phi, (-1)^g\Phi) = (-1)^g(\Phi, \Phi) $: 

%\begin{equation}
%     (Q_0 \Phi)^T =  (\Phi, \Phi)^T = -(\Phi^T, \Phi^T) = - ((-1)^{g+1}\Phi, (-1)^{g+1}\Phi) = (-1)^{g+1}(\Phi, \Phi) 
% \end{equation}

\item Matter fields valued in antisymmetric matrices and ghosts in $\mathfrak{sp}(N)$: $\Phi(z) \in \underbrace{A_0 \otimes \mathfrak{sp}(N)}_{c \text{ ghosts}} \oplus \underbrace{A_1 \otimes \wedge^2\bC^{2N}}_{Z\text{ fields}} \oplus \underbrace{A_2 \otimes \mathfrak{sp}(N)}_{b\text{ ghosts}}$

This corresponds to \eqref{eqn:symp_mat} with involution $\bar{a} = (-1)^{g}a$. Using the same convention to raise and lower indices with the symplectic form, $\Omega$ as in the $Sp$ case, the transposition operation reads:

\begin{equation}
    \Phi^T = (-1)^{g+1} \Omega^{-1}\Phi\Omega\,.
\end{equation}

$Q_0$ also commutes with this operation and the proof is identical to the previous one except for the extra $\Omega$'s.
\end{itemize}

In all of the examples above, single trace operators can be described by the so called Dihedral cohomology $HD^{\bullet}(A[[s]])$ \cite{LODAY198893}, depending on the choice of the involution. We now briefly describe this connection. The involution on $A$ induces a $\bZ_2$ action on the Hochschild complex $A^{\otimes n}$. 
\begin{equation}
	\omega(a_1,\dots,a_n) = (\pm1)(\bar{a}_n,\bar{a}_{n-1},\dots, \bar{a_1})\,.
\end{equation}
This action, together with the cyclic $\bZ_n$ action on $A^{\otimes n}$, forms an action of the Dihedral group $D_n = \langle t,\omega \mid t^{n} = \omega^2 =  1, \omega t \omega^{-1} = t^{-1} \rangle$ on $A^{\otimes n}$. In both the $\mathfrak{so}_N(A)$ and $\mathfrak{sp}_N(A)$ cases, we consider $D_n$ invariant multilinear maps to build single trace operators
\begin{equation}
	(C|\bullet,\dots,\bullet): (A[[s]]\otimes \dots\otimes A[[s]])^{D_n} \to \bC\,.
\end{equation}
These identify the single trace operators with the Dihedral cohomology $HD^{\bullet}(A[[s]])$.

We can also consider a mixture of $SO$ and $Sp$ fields, half-hypermultiplet matter. This naturally leads to the generalized notion of a Calabi-Yau category. In fact, our chiral algebra construction for a $2$ Calabi-Yau algebra $A$ can be generalized to a (compact) $2$ Calabi-Yau category $\mathcal{C}$. Similar to our discussion in Appendix \ref{appendix:CY}, a compact $d$-dimensional Calabi-Yau category $\mathcal{C}$ is the same as a cyclic $A_\infty$ category. Here, to simplify the discussion we consider an ordinary category $\mathcal{C}$ equipped with a trace map:
\begin{equation}
	\Tr_a: \mathrm{Hom}_{\mathcal{A}}(a,a) \to \bC[d]\,,
\end{equation}
for each object $a \in \mathcal{A}$. The associated pairing
\begin{equation}
	(\bullet,\bullet): \mathrm{Hom}_{\mathcal{A}}(a,b)\otimes \mathrm{Hom}_{\mathcal{A}}(b,a) \to \bC[d]
\end{equation}
given by $(\alpha\beta) = \Tr(\alpha\beta)$ is required to be symmetric and non-degenerate. For our construction, we require that there is only a finite number of objects in the category and each $\mathrm{Hom}_{\mathcal{A}}(a,b)$ is finite dimensional.

More explicitly, a $d$-dimensional Calabi-Yau category $\mathcal{C}$ is the same as the following data
\begin{itemize}
	\item A super algebra $A_a = A_{aa} = \mathrm{End}_{\mathcal{C}}(a)$ for each object $a$.
	\item A $A_a-A_b$ bi-module $A_{ab} = \mathrm{Hom}_{\mathcal{C}}(a,b)$ for each pair of objects $a,b$.
	\item A collections of maps $A_{ab}\otimes A_{bc} \to A_{ac}$ satisfying some compatibility conditions.
	\item A trace map $\Tr: A_a \to \bC[d]$ for each object $a$, such that the pairing $(\bullet,\bullet): A_{ab}\otimes A_{ba} \to \bC[d]$ is symmetric and non-degenerate.
\end{itemize}

As before, we assign elements in $\mathrm{Hom}_{\mathcal{C}}(a,b)$ both a super degree $|\alpha|$ and a ghost degree $\text{gh}[\alpha]$. Then one can consider matrix fields $\Phi(z)$ valued in $\bigoplus_{a,b\in\mathcal{C}} A_{ab}$. We can define the OPE and BRST current as we did for a CY algebra. However, for $SU(N)$ or $GL(N)$ matrices, this ``generalization" does not gives us anything new, as we can define the algebra
\begin{equation}
	A[\mathcal{C}] = \bigoplus_{a,b}\mathrm{Hom}_{\mathcal{C}}(a,b)\,.
\end{equation}
We can see that the chiral algebra construction for the Calabi-Yau category $\mathcal{C}$ is the same as the chiral algebra associated to the Calabi-Yau algebra $A[\mathcal{C}]$.

On the contrary, the generalization to a Calabi-Yau category does give us a non-trivial generalization when we consider a mixture of $SO$ and $Sp$ fields. 

We first introduce the notion of an involution on a category. An involution on a category $\mathcal{C}$ is a collection of maps
\begin{equation}
	\alpha \in \mathrm{Hom}_{\mathcal{C}}(a,b) \to \bar{\alpha} \in  \mathrm{Hom}_{\mathcal{C}}(b,a) 
\end{equation} 
for each pair of objects $a,b$. It satisfies the condition that 
\begin{equation}
	\bar{\bar{\alpha}} = \alpha,\quad \overline{\alpha\beta} = (-1)^{|\alpha||\beta|}\bar{\beta}\bar{\alpha}\,.
\end{equation}

Now we consider a chiral algebra with a mixture of $SO$ and $Sp$ fields. We divide the set of objects $\mathrm{Obj}\,\mathcal{C}$ into two subsets $\mathrm{Obj}\,\mathcal{C} =\mathbf{O}\cup \mathbf{P}$, where an object in $\mathbf{O}$ corresponds to an $SO$ field and an object in $\mathbf{P}$ corresponds to an $Sp$ field. 

We follow \cite{Kac1977} in their definition of the Lie superalgebra $\mathfrak{osp}(n|m)$, and define the Lie algebra $\mathfrak{osp}(\mathcal{C},\mathbf{O},\mathbf{P})$ as a Lie subalgebra of
\begin{equation}
    \mathfrak{gl}(\mathcal{C}) := \mathrm{End}_{\mathcal{C}}((\bigoplus_{o\in \mathbf{O}}\bC^No)\oplus(\bigoplus_{p\in \mathbf{P}}\bC^{2N}p))\,,
\end{equation}
given by
\begin{equation}
    \mathfrak{osp}(\mathcal{C},\mathbf{O},\mathbf{P}) = \{\Phi \in \mathfrak{gl}(\mathcal{C})\mid \bar{\Phi}^T = -i^{\deg(\Phi)}B^{-1}\Phi B\}\,.
\end{equation}
Here $B$ is the matrix 
\begin{equation}
    B = (\bigoplus_{o\in \mathbf{O}}iI_N)\oplus(\bigoplus_{p\in \mathbf{P}}\Omega_N)\,.
\end{equation}
The $\deg(\Phi)$ here can be chosen as any degree compatible with the multiplication, e.g., $|\Phi|$, $\text{gh}(\Phi)$, or $|\Phi| - \text{gh}(\Phi)$. Different choices, of course, give rise to different theories. We can expand the definition explicitly. A field $\Phi$ valued in $\mathfrak{osp}(\mathcal{C},\mathbf{O},\mathbf{P})$ consists of several components. 
\begin{itemize}
    	\item We have a field $\Phi_{\mathbf{O}}(z) \in \mathfrak{so}_N(A_{\mathbf{O}})$ in $\deg = 0$. Here $A_{\mathbf{O}} = \bigoplus_{a,b \in \mathbf{O}}\mathrm{Hom}_{\mathcal{C}}(a,b)$, and $\mathfrak{so}_N(A_{\mathbf{O}})$ is given by the condition
	\begin{equation}
		\bar{\Phi}_{\mathbf{O}}(z)^T = -\Phi_{\mathbf{O}}(z)\,.
	\end{equation}
		\item We have a field $\Phi_{\mathbf{P}}(z) \in \mathfrak{sp}_N(A_{\mathbf{P}})$ in $\deg = 0$. Here $A_{\mathbf{P}} = \bigoplus_{a,b \in \mathbf{P}}\mathrm{Hom}_{\mathcal{C}}(a,b)$, and $\mathfrak{sp}_N(A_{\mathbf{P}})$ is given by the condition
	\begin{equation}
		\bar{\Phi}_{\mathbf{P}}(z)^T = -\Omega^{-1}\Phi_{\mathbf{P}}(z)\Omega\,.
	\end{equation}
 \item  We have a pair of fields $\Phi_{\mathbf{O}\mathbf{P}} \in \mathrm{Hom}(\bC^N,\bC^{2N})\otimes A_{\mathbf{O}\mathbf{P}}$ and $\Phi_{\mathbf{P}\mathbf{O}} \in \mathrm{Hom}(\bC^{2N},\bC^{N})\otimes A_{\mathbf{P}\mathbf{O}}$ in $\deg = 1$. This pair of fields is constrained by
 \begin{equation}
     \bar{\Phi}_{\mathbf{O}\mathbf{P}}^T = \Omega^{-1}\Phi_{\mathbf{P}\mathbf{O}},\quad\quad 
     \bar{\Phi}_{\mathbf{P}\mathbf{O}}^T = -\Phi_{\mathbf{O}\mathbf{P}}\Omega\,.
 \end{equation}
\end{itemize}

It is straight forward to generalize the above construction to the case when each node (object) $a$ corresponds to a distinct rank $k_aN$. 

\bibliographystyle{JHEP}

\bibliography{mono}

\end{document}